\definecolor{Greenish}{RGB}{34, 139 , 34}
\definecolor{Blueish}{RGB}{39,64, 139}
\numberwithin{equation}{section}
\newtheorem{definition}{Definition}[section]
\newtheorem{lemma}[definition]{Lemma}
\newtheorem{proposition}[definition]{Proposition}
\newtheorem{theorem}[definition]{Theorem}
\newtheorem{corollary}[definition]{Corollary}
\numberwithin{equation}{section}
\def\bR{\mathbb{R}}
\def\cN{\mathcal{N}}
\def\cF{\mathcal{F}}
\def\cM{\mathcal{M}}
\def\cG{\mathcal{G}}
\def\rd{\mathrm{d}}
\def\tr{\mathrm{tr}}
\def\ad{\mathrm{ad}}
\def\N{\mathcal{N}_+}
\def\b1{\mathds{1}}
\def\tr{\mathrm{tr}}
\def\Re{\mathrm{Re}}
\def\Im{\mathrm{Im}}
\def\wO{\widetilde{O}}
\def\ph{\varphi}
\newcommand{\vertiii}[1]{{\left\vert\kern-0.25ex\left\vert\kern-0.25ex\left\vert #1 
    \right\vert\kern-0.25ex\right\vert\kern-0.25ex\right\vert}}
\title{Large deviations for the ground state of weakly interacting Bose gases}
\begin{document}

\author{Simone Rademacher}
\address{Department of Mathematics, LMU Munich, Theresienstrasse 39, 80333 Munich}

\date{\today}

\begin{abstract}
We consider the ground state of a Bose gas of $N$ particles on the three-dimensional unit torus in the mean-field regime that is known to exhibit Bose-Einstein condensation. Bounded one-particle operators with law given through the interacting Bose gas' ground state correspond to dependent random variables due to the bosons' correlation. We prove that in the limit $N \rightarrow
 \infty$ bounded one-particle operators with law given by the ground state satisfy large deviation estimates. We derive a lower and an upper bound on the rate function that match up to second order and that are characterized by quantum fluctuations around the condensate. 


\end{abstract}
\maketitle

\section{Introduction}

We consider $N$ bosons on the three dimensional unit torus $\Lambda = [0,1]^3$ in the mean-field regime described by the Hamiltonian 
 \begin{align}
 \label{def:HN}
H_N = \sum_{j=1}^N (- \Delta_{x_j} ) + \frac{1}{N} \sum_{i<j}^N v( x_i - x_j) 
\end{align}
acting on $L_s^2 \left(\Lambda^N \right)$, the symmetric subspace of $L^2\left( \Lambda^N \right) $. We consider two-particles interaction potentials with Fourier transform $\widehat{v}$ given by 
\begin{align}
\label{eq:ass-v}
v(x) = \sum_{p \in \Lambda^*} \widehat{v} (p ) e^{i p  \cdot x} \quad \text{for} \quad \Lambda^* = 2 \pi \mathbb{Z}^3 \quad  \text{ with} \quad \widehat{v} \geq 0, \; \widehat{v} \in \ell^1 ( \Lambda^*) \; . 
\end{align}
At zero temperature the bosons relax to the unique ground state $\psi_N$ of $H_N$ realizing 
 \begin{align}
 \label{def:eN}
E_N = \inf_{ \| \psi \|_2=1} \langle \psi, \; H_N \psi \rangle = \langle \psi_N, H_N \psi_N \rangle \; . 
 \end{align}
The ground state $\psi_N$ exhibit Bose-Einstein condensation, i.e. a macroscopic fraction of the $N$ particles occupies the same quantum state, called the condensate. Mathematically, $\psi_N$ is said to satisfy the property of Bose-Einstein condensation if its corresponding one-particle reduced density given by 
\begin{align}
\gamma_{\psi_N}^{(k)} := \tr_{k+1 , \dots, N} \vert \psi_N \rangle \langle \psi_N \vert 
\end{align}
for $k=1$ converges in trace norm to 
\begin{align}
\gamma_{\psi_N}^{(1)} \rightarrow \vert \varphi_0 \rangle \langle \varphi_0 \vert \quad \text{as} \quad N \rightarrow \infty \; \label{eq:BEC}
\end{align}
where $\varphi_0 \in L^2( \Lambda)$ denotes the condensate wave function. In fact the convergence \eqref{eq:BEC} holds true not only for the one- but also in for general $k$-particle reduced density densities. However due to particle's correlation, the ground state $\psi_N$ is not a purely factorized state of the condensate's wave function.  

Both, the computation of the ground state energy \eqref{def:eN} and the ground state's property of BEC \eqref{eq:BEC} and beyond are widely studied in the literature (see for example \cite{BPS,DN,GS,LNSS,N,NN,NS,R,YY}). In fact, \cite{R} proves besides BEC that the Bose gas' excitation spectrum is well described by Bogoliubov theory. Consequently, the fluctuations around the condensate, namely the particles orthogonal to the condensate can be effectively described as a quasi-free state (namely a Gaussian quantum state) on an appropriate Fock space. This characterization of the condensates' excitations will be important for our analysis. 

\subsection{Probabilistic approach} Recently the characterization of Bose-Einstein condensation through probabilistic concepts became of interest. In fact, the property of Bose-Einstein condensation \eqref{eq:BEC} implies a law of large numbers for bounded one particle operators \cite{BKS}. To be more precise, let $O$ denote a bounded one particle operator on $L^2( \mathbb{R}^3)$ for which we define the $N$-particle operator $O^{(i)}$ by 
\begin{align}
\label{def:Oj}
O^{(i)} = \mathds{1} \otimes \mathds{1} \otimes \cdots \otimes \mathds{1} \otimes O \otimes \mathds{1} \otimes \cdots \otimes \mathds{1} 
\end{align}
i.e. as operator acting as $O$ on the $i$-th particle and as identity elsewhere. We consider $O^{(i)}$ as a random variable with law given by 
\begin{align}
\mathbb{P}_{\psi} \left[ O^{(i)} \in A \right] = \langle \psi, \chi_A ( O^{(i)} ) \psi \rangle \quad \text{with} \quad \psi \in L^2 ( \Lambda^N )  
\end{align}
where $\chi_A$ denotes the characteristic function of $A \subset R$. We remark that factorized states lead to i.i.d. random variables in this picture \cite{R_dp} and thus a law of large numbers and a large deviation principle hold true from basic theorems of probability theory. 

Random variables with law given by the ground state $\psi_N$ (known to not be a factorized state) of $H_N$, satisfy a law of large numbers, too (though they are not independent random variables).  To be more precise, the averaged centered (w.r.t. to the condensate's expectation value $\langle \varphi_0, O \varphi_0 \rangle$) sum 
\begin{align}
O_N := \frac{1}{N} \sum_{i=1}^N  \left( O^{(i)} - \langle \varphi_0, O \varphi_0 \rangle \right) 
\end{align}
with $O^{(i)}$ given by \eqref{def:Oj} satisfies for any $\delta >0$
\begin{align}
\label{eq:lln}
\lim_{N \rightarrow \infty} \mathbb{P}_{\psi_N} \left[\vert  O_N \vert > \delta \right] = 0   \; . 
\end{align}
The law of large numbers, in fact, is a consequence of the property of Bose-Einstein condensation \cite{BKS,R_dp} namely of the trace norm convergence of the one- and two-particle reduced density matrix. 

Here, we are interested in the precise decay of the probability distribution \eqref{eq:lln} in probability theory described through the rate function 
\begin{align}
\Lambda_{\psi_N}^* (x) = \lim_{N \rightarrow \infty} N^{-1} \log \mathbb{P}_{\psi_N} \left[ O_N > x \right] 
\end{align}
if the limit exists. In case of i.i.d. random variables (i.e. factorized states $\varphi^{\otimes N}$) Cramer's theorem shows that the rate functions exists and is given in terms of the Legendre-Fenchel transform through 
\begin{align}
\Lambda^*_{\varphi^{\otimes N}} (x) = \inf_{\lambda \in \mathbb{R}} \left[ - \lambda x + \Lambda_{\varphi^{\otimes N}} (\lambda) \right] 
\end{align}
where $\Lambda_{\varphi^{\otimes N}}( \lambda)$ equals for i.i.d. random variables the logarithmic moment generating function 
\begin{align}
\Lambda_{\varphi^{\otimes N}}( \lambda) = \log \langle \varphi, \; e^{\lambda ( O^{(1)} - \langle \varphi, O \varphi \rangle )} \varphi \rangle \; . 
\end{align}
In our main theorem we show that for the ground state $\psi_N$ of $H_N$, known to be not factorized due to particles' correlation, still large deviation estimates hold true. 

\subsection{Results} Before stating our main theorem, we introduce some more notation. In our result we consider operators $O$ such that the norm 
\begin{align}
\vertiii{O} : = \| (1 - \Delta ) O (1- \Delta)^{-1} \| 
\end{align}
is bounded. Furthermore, we define 
\begin{align}
\Lambda_+^* = 2 \pi \mathbb{Z}^3 \setminus \lbrace 0 \rbrace 
\end{align}
and the function $f \in \ell^2 ( \Lambda_+^*)$ by 
\begin{align}
\label{def:f}
f (p) =  \cosh ( \mu_p )  \widehat{qO \varphi_0} (p) + \sinh ( \mu_{p} )  \widehat{qO \varphi_0} (-p)  
\end{align}
where $q$ denotes the projection onto the orthogonal complement of the span of the condensate wave function (i.e. $q= 1- \vert \varphi_0 \rangle \langle \varphi_0 \vert$) and $\mu_p$ is given by the identity 
\begin{align}
\label{def:mu}
\coth ( 2 \mu_p ) = - \frac{p^2 + \widehat{v} (p) }{\widehat{v} (p)} \; . 
\end{align}

\begin{theorem}
\label{thm:ldp}
Let $v$ be a real-valued, even function with $0 \leq \widehat{v} \in \ell^1( \Lambda^*)$, such that $\| \mu \|_{\ell^2( \Lambda_+^*)}$ is sufficiently small. Let $\psi_N$ denote the ground state of the Hamiltonian $H_N$ defined in \eqref{def:HN}. 

Let $O$ denote a self-adjoint operator on $L^2 (  \Lambda)$ such that $\vertiii{O}< \infty$ and let $f$ be defined by \eqref{def:f}. For $O^{(j)}$ given by \eqref{def:Oj}, we define $O_{N} = N^{-1} \sum_{j=1}^N \left( O^{(j)} - \langle \varphi_0, O \varphi_0 \rangle \right)$.  \\

Then, there exist $C_1,C_2 > 0$ (independent of $O$) such that 

\begin{enumerate}
\item[(i)] for all $0 \leq x \leq  1/( C_1 \vertiii{O})$ 
\begin{align}
\limsup_{N \rightarrow \infty} N^{-1} \log \mathbb{P}_{\psi_{N}}  \left[ O_{N} > x\right] 
&\leq -  \frac{x^2} {2 \| f \|_{\ell^2( \Lambda_+^*)}^2} +  x^3 \frac{C_1 \vertiii{O}^3 }{\| f \|_{\ell^2( \Lambda_+^*)}^3}
\end{align}

\item[(ii)] for all $0 \leq x \leq  \| f \|_{\ell^2( \Lambda_+^*)}^4 /(C_2  \vertiii{O}^3)$ 
\end{enumerate}
 \begin{align}
\limsup_{N \rightarrow \infty} N^{-1} \log \mathbb{P}_{\psi_{N}}  \left[ O_{N} > x\right] 
&\geq  - \frac{x^2 }{ 2 \| f \|_{\ell^2( \Lambda_+^*)}^2 }-   x^{5/2} \frac{ C_2  \vertiii{O}^{3/2}  }{ \| f \|_{\ell^2( \Lambda_+^*)}^4} \;  . 
\end{align}
\end{theorem}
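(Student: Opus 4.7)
The natural approach is a Chernoff/Cram\'er scheme built on the logarithmic moment generating function
\begin{align*}
\Lambda_N(\lambda) := N^{-1} \log \big\langle \psi_N, e^{\lambda \sum_{j=1}^N (O^{(j)} - \langle \varphi_0, O \varphi_0 \rangle)} \psi_N \big\rangle.
\end{align*}
The plan is to prove that, for $\lambda$ in a small interval around $0$,
\begin{align*}
\Lambda_N(\lambda) = \tfrac{\lambda^2}{2} \| f \|_{\ell^2(\Lambda_+^*)}^2 + O\!\left( \lambda^3 \vertiii{O}^3 \right)
\end{align*}
uniformly in $N$ large, with $f$ as in \eqref{def:f}. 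The Gaussian leading term reflects that, by the Bogoliubov description of the ground state from \cite{R}, excitations around the condensate form a quasi-free state on the bosonic Fock space over $\ell^2(\Lambda_+^*)$, and the centered operator $\sum_j (O^{(j)} - \langle\varphi_0, O\varphi_0\rangle)$ maps, after the Weyl transformation $W(\sqrt{N}\varphi_0)$ and a Bogoliubov transformation $T_\mu$ implementing \eqref{def:mu}, to $\sqrt{N}\bigl(a^*(f)+a(f)\bigr)$ plus subleading terms controlled by the number operator on excitations.

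\textbf{Upper bound (i).} For $\lambda>0$, Markov gives $\mathbb{P}_{\psi_N}[O_N>x] \leq e^{-N\lambda x} e^{N \Lambda_N(\lambda)}$. Substituting the expansion above and optimizing in $\lambda$ with the choice $\lambda = x/\|f\|^2$ (which is legal as long as $x \leq 1/(C_1 \vertiii{O})$ keeps $\lambda$ in the convergence window of the Bogoliubov expansion) yields
\begin{align*}
\limsup_{N\to\infty} N^{-1}\log \mathbb{P}_{\psi_N}[O_N>x] \leq - \tfrac{x^2}{2\|f\|^2} + C\, x^3 \vertiii{O}^3 / \|f\|^3.
\end{align*}
Controlling $\Lambda_N(\lambda)$ itself requires writing $e^{\lambda \sum_j (O^{(j)}-\langle\varphi_0, O\varphi_0\rangle)} = U_N^* e^{\lambda \sqrt{N}(a^*(f)+a(f)) + \lambda \mathcal{E}_N} U_N$ with $U_N = T_\mu^* W(\sqrt{N}\varphi_0)^*$, using the unitary conjugation rules from Bogoliubov theory, and then applying the Baker--Campbell--Hausdorff (or a Duhamel/interpolation) argument to separate the Gaussian part from the error $\mathcal{E}_N$; the Gaussian contribution is evaluated on the vacuum via $\langle \Omega, e^{\mu(a^*(f)+a(f))}\Omega\rangle = e^{\mu^2 \|f\|^2 / 2}$, and the error $\mathcal{E}_N$ is dominated by the number operator on excitations, whose exponential moments on $T_\mu W^* \psi_N$ are bounded uniformly in $N$ by \cite{R,BPS}.

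\textbf{Lower bound (ii).} I would follow Cram\'er's tilting argument. Define the tilted state
\begin{align*}
\widetilde{\psi}_N^{(\lambda)} = \frac{e^{\lambda \sum_j (O^{(j)}-\langle\varphi_0, O\varphi_0\rangle)/2} \psi_N}{\| e^{\lambda \sum_j (O^{(j)}-\langle\varphi_0, O\varphi_0\rangle)/2} \psi_N \|},
\end{align*}
so that for any small $\delta>0$,
\begin{align*}
\mathbb{P}_{\psi_N}[O_N>x] \geq \mathbb{P}_{\psi_N}[|O_N-x|<\delta] \geq e^{-N\lambda(x+\delta)} e^{N\Lambda_N(\lambda)} \mathbb{P}_{\widetilde{\psi}_N^{(\lambda)}}[|O_N-x|<\delta].
\end{align*}
With $\lambda=x/\|f\|^2$, the leading term already gives the Gaussian $-x^2/(2\|f\|^2)$; the $x^{5/2}$ correction arises from the width $\delta$, which must be chosen just large enough so that $\mathbb{P}_{\widetilde{\psi}_N^{(\lambda)}}[|O_N-x|<\delta]$ remains of order $1$ as $N\to\infty$. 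To ensure this, one shows a law of large numbers under the tilted state by verifying that $\widetilde{\psi}_N^{(\lambda)}$ again displays BEC onto a slightly rotated condensate and that its excitation state is close to a shifted quasi-free state with mean $\lambda \|f\|^2 = x$ for $O_N$ and variance of order $1/N$; optimizing $\delta \sim x^{3/4}\vertiii{O}^{1/2}/\|f\|$ produces the stated $x^{5/2}$ error.

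\textbf{Main obstacles.} The technically hardest point is the control of $\Lambda_N(\lambda)$: the operator $\sum_j O^{(j)}$ grows with $N$, so its exponential cannot be expanded naively, and one must propagate estimates through the conjugation by $T_\mu W^*$ while keeping track of the $\vertiii{O}$-norm (which is exactly why the energy-type norm $\vertiii{O}=\|(1-\Delta)O(1-\Delta)^{-1}\|$ is used, to commute $O^{(j)}$ through the kinetic-energy terms governing the Bogoliubov estimates). A second delicate point is uniformity in $N$ of exponential moments of the excitation number operator on $T_\mu^* W(\sqrt{N}\varphi_0)^* \psi_N$, which requires combining the ground-state energy estimates and condensation bounds of \cite{R} in their exponential rather than polynomial form; this is what forces the restriction $x \lesssim 1/\vertiii{O}$ in part (i) and the narrower window $x \lesssim \|f\|^4/\vertiii{O}^3$ in part (ii), where the tilted law-of-large-numbers must additionally be verified.
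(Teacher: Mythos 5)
Your high-level strategy---estimate the logarithmic moment generating function and then run a Chernoff/Cram\'er argument---is indeed the one the paper uses: Theorem~\ref{thm:ldp} is derived from the two-sided MGF bounds of Theorem~\ref{thm:main}. However, your outline contains two genuine gaps, one structural and one in the lower bound.

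\textbf{Choice of excitation map.} You propose conjugating by the Weyl operator $W(\sqrt{N}\varphi_0)$ followed by a Bogoliubov transformation, and then computing the Gaussian contribution through $\langle\Omega, e^{\mu(a^*(f)+a(f))}\Omega\rangle = e^{\mu^2\|f\|^2/2}$. This is the computation for the \emph{asymptotic} field $\widetilde{\phi}_+(f) = \sqrt{N}(a^*(f)+a(f))$ on the full Fock space. But the paper works instead with the LNSS excitation map $\mathcal{U}_N$ onto the truncated Fock space $\mathcal{F}_{\perp\varphi_0}^{\leq N}$, precisely because there the number operator $\mathcal{N}_+$ is \emph{bounded} by $N$, and because $\mathcal{U}_N$ produces the modified field $\phi_+(g) = b^*(g)+b(g)$ (built from $b_p = a_p\sqrt{N-\mathcal{N}_+}/\sqrt{N}$), which maps $\mathcal{F}_{\perp\varphi_0}^{\leq N}$ into itself. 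Your route via $W(\sqrt{N}\varphi_0)$ leaves you on the full Fock space with an unbounded $\mathcal{N}_+$ and with $\widetilde{\phi}_+$ increasing the particle number without bound, so the Gronwall-type control of $e^{\lambda\sqrt{N}\widetilde{\phi}_+}$ against exponentials of $\mathcal{N}_+$ would fail; the modified-operator structure is load-bearing, not cosmetic. This also means that replacing $\phi_+$ by $\widetilde{\phi}_+$ to use the clean vacuum formula is not free: the paper treats exactly this replacement as a separate step (Lemma~\ref{lemma:step4}) with its own $O(N\lambda^3)$ error term.

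\textbf{Lower bound.} For part~(ii) you invoke the classical Cram\'er tilting and state that one should ``show a law of large numbers under the tilted state by verifying that $\widetilde{\psi}_N^{(\lambda)}$ again displays BEC.'' This is a substantial unproved claim, not a routine verification: the tilted state $e^{\lambda N O_N/2}\psi_N/\|\cdot\|$ is no longer a ground state of any Hamiltonian, and none of the available BEC or Bogoliubov estimates apply to it directly. The paper deliberately avoids this step. It instead appeals to a generalization of Cram\'er's theorem from~\cite{RSe} that produces the lower bound on the rate function directly from the \emph{two-sided} MGF bounds of Theorem~\ref{thm:main} (i.e.~from having both $\Lambda_N(\lambda) \lesssim \frac{\lambda^2}{2}\|f\|^2 + C\lambda^3$ and $\Lambda_N(\lambda) \gtrsim \frac{\lambda^2}{2}\|f\|^2 - C\lambda^3$), with no tilted-state analysis whatsoever. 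The cost of that route is the weaker $x^{5/2}$ error and the narrower window $x\lesssim\|f\|^4/\vertiii{O}^3$, which you correctly anticipated qualitatively, but the mechanism producing them is the Legendre-type optimization under two-sided MGF control, not the width $\delta$ of a tilted concentration window. Without the RSe lemma, your lower bound as written is incomplete.

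A smaller point: your error term after substituting $\lambda=x/\|f\|^2$ into $C\lambda^3\vertiii{O}^3$ is $Cx^3\vertiii{O}^3/\|f\|^6$, not $Cx^3\vertiii{O}^3/\|f\|^3$ as you wrote; matching the exponent in the theorem statement requires using $\|f\|\lesssim\vertiii{O}$, which you should make explicit.
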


We remark that for sufficiently small $x \leq \min \lbrace \| f \|_{\ell^2( \Lambda_+^*)}^4 /(C_2  \vertiii{O}^3), \; 1/ C_1 \vertiii{O} \rbrace $, Theorem \ref{thm:ldp} characterizes the rate function up to second order. Namely Theorem \ref{thm:ldp} shows that in the regime of large deviations, i.e. $x=O(1)$, we have 
\begin{align}
\Lambda_{\psi_N}^* (x) = -  \frac{x^2 }{ 2 \| f \|_{\ell^2( \Lambda_+^*)}^2 } + O( x^{5/2} ) \; . 
\end{align}

\subsubsection*{Regime of large deviations.} The present result in Theorem \ref{thm:ldp} provides a first characterization of the regime of large deviations (i.e. $x= O(1)$) for fluctuations around the condensate of bounded one-particle operators in the ground state. We remark that the variance $\| f \|_{\ell^2( \Lambda_+^*)}$ differs from the variance of factorized state $\varphi_0^{\otimes_s N}$ and is, in particular, fully characterized by the ground state's Bogoliubov approximation (for more details see \eqref{def:Q} and subsequent discussions resp. Lemma \ref{lemma:step3} in Section \ref{sec:proof}) representing the particles' correlation. 

Up to now, results in the regime of large deviations are available for the dynamics in the mean-field regime only. For factorized initial data, the rate function characterizing the fluctuations of bounded one-particles operators around the condensate's Hartree dynamics, were proven to satisfy a upper bound of the form of Theorem \ref{thm:ldp} (i) first  \cite{KRS}, and a lower bound of the form of Theorem \ref{thm:ldp} (ii) later \cite{RSe}.

\subsubsection*{Regime of standard deviations.} In the regime of standard deviations, i.e. $ x = O ( N^{-1/2})$, Theorem \ref{thm:ldp} furthermore implies 
\begin{align}
\label{eq:CLT}
\lim_{N \rightarrow \infty} \mathbb{P}_{\psi_N} \left[ \sqrt{N} O_N < x\right] = \int_{- \infty}^x e^{-x^2/ (2\| f \|_{\ell^2( \Lambda_+^*)}^2)}, 
\end{align}
thus a central limit theorem where the limiting Gaussian random variable's variance is given by $\| f \|_{\ell^2( \Lambda_+^*)}$ agreeing with earlier results \cite{RS}. In fact \cite{RS} proves a central limit theorem for fluctuations around the condensate for the ground state in the Gross-Pitaevski regime. The Gross-Pitaevski scaling regime considers instead of $v$, the $N$-dependent two-body interaction potential $v_N^\beta = N^{3 \beta} v_N (N^\beta \cdot )$ with $\beta =1$ (for more details and recent progress on results in the Gross-Pitaevski regime see \cite{BBCS_optimal,BBCS_Acta,BSchSch,HST,NT}). However, \eqref{eq:CLT} follows from adapting the analysis in \cite{RS} to the mathematically easier accessible mean-field scaling regime (corresponding to $\beta =0$).

Recently, \cite{BP} refined the characterization of the regime of standard deviations and derived an edge-worth expansion. 

Central limit theorems were proven first for the mean-field dynamics of Bose gases. Fluctuations of bounded one-particle operators around the Hartree equations were proven to have Gaussian behavior \cite{BKS}, though they do not correspond to i.i.d. random variables. These results were later generalized to multivariate central limit theorem \cite{BSS}, dependent random variables (i.e. $k$-particle operators) \cite{R_dp} and singular particles interaction in the intermediate scaling regime (for $v_N^\beta$ with $\beta \in (0,1))$) \cite{R}.  \\

Theorem \ref{thm:ldp} follows (similarly to \cite{KRS,RSe}) from estimates on the logarithmic moment generating function given in the following theorem. 

 \begin{theorem}
 Under the same assumptions as in Theorem \ref{thm:ldp}, 
 \label{thm:main}
 \begin{enumerate}
 \item[(i)] there exists a constant $C_1 >0$ such that for all $0 \leq \lambda \leq 1 / \vertiii{O}$ we have 
 \begin{align}
\liminf_{N \rightarrow \infty} N^{-1} \ln \mathbb{E}_{\psi_{N}} \left[ e^{\lambda O_{N}} \right]  \leq  \frac{\lambda^2 } 2 \| f \|^2_{\ell^2( \Lambda_+^*)} + C_1 \lambda^3 \vertiii{O}^3 
\end{align}
\item[(ii)] there exists a constant $C_2>0$ such that for all $0 \leq \lambda \leq 1 / \| \vertiii{O}$ we have 
 \end{enumerate}
 \begin{align}
\limsup_{N \rightarrow \infty} N^{-1}  \ln \mathbb{E}_{\psi_{N}} \left[ e^{\lambda O_{N}} \right]   \geq  \frac{\lambda^2 }{2} \| f \|^2_{\ell^2( \Lambda_+^*)} - C_2 \lambda^3 \vertiii{O}^3  
\end{align}
\end{theorem}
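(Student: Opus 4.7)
The plan is to translate the moment generating function to the excitation Fock space over $\{\varphi_0\}^\perp$, diagonalize the leading quadratic Hamiltonian of the Bose gas by a Bogoliubov transformation, and then read off the Gaussian factor $e^{\lambda^2 \|f\|^2/2}$ from a Weyl-operator identity; the remaining terms produce the cubic-in-$\lambda$ correction and are controlled by energy and number-operator estimates. I would first apply the unitary excitation map $U_N : L^2_s(\Lambda^N) \to \cF_+^{\leq N}$ that factors out the condensate mode $\varphi_0$. A direct second-quantization computation gives
\begin{align*}
U_N \sum_{j=1}^N \bigl(O^{(j)} - \langle \varphi_0, O \varphi_0\rangle\bigr) U_N^* &= \sqrt{N - \N}\,a(qO\varphi_0) + a^*(qO\varphi_0)\sqrt{N - \N} \\
&\quad + \rd\Gamma(qOq) - \langle \varphi_0, O\varphi_0\rangle\, \N.
\end{align*}
Next I would apply the Bogoliubov transformation $T = \exp\!\bigl(\tfrac12 \sum_{p \in \Lambda_+^*} \mu_p (a_p^* a_{-p}^* - a_p a_{-p})\bigr)$ with $\mu_p$ as in \eqref{def:mu}. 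Since $T^* a_p T = \cosh(\mu_p)\, a_p + \sinh(\mu_p)\, a_{-p}^*$, conjugation of the linear piece by $T^*$ produces precisely the field operator $\phi(f) := a(f) + a^*(f)$ with $f$ given by \eqref{def:f}. By the results on the excitation spectrum of the mean-field Bose gas (see \cite{R} and references therein), the rotated ground state $\xi_N := T^* U_N \psi_N$ has uniformly bounded moments of $\N$, namely $\langle \xi_N, \N^k \xi_N\rangle \leq C_k$, and is in a quantitative sense close to the Fock vacuum $\Omega$.

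Putting the two transformations together, the moment generating function becomes
\begin{align*}
\mathbb{E}_{\psi_N}\bigl[e^{\lambda \sum_j (O^{(j)} - \langle \varphi_0, O \varphi_0\rangle)}\bigr] = \langle \xi_N,\, e^{\lambda(\phi(f) + R_N)}\, \xi_N\rangle,
\end{align*}
with a remainder $R_N$ collecting $T^*\rd\Gamma(qOq)T$, the correction coming from $\sqrt{N-\N} - \sqrt{N}$, and the term $-\langle \varphi_0, O\varphi_0\rangle \N$. Replacing $\xi_N$ by $\Omega$ and dropping $R_N$, the classical Weyl identity $\langle \Omega, e^{\lambda \phi(f)}\Omega\rangle = e^{\lambda^2 \|f\|^2/2}$ then delivers exactly the claimed leading behaviour. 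For the upper bound (i) the cubic correction $C_1 \lambda^3 \vertiii{O}^3$ is extracted via a Duhamel expansion of $e^{\lambda(\phi(f) + R_N)}$ relative to $e^{\lambda \phi(f)}$, combined with the a priori bounds on $\N$-moments in $\xi_N$; the fact that $R_N$ has vanishing vacuum expectation to leading order is what promotes the naive $\lambda \vertiii{O}$ contribution to an $\lambda^3 \vertiii{O}^3$ correction in the logarithm. The norm $\vertiii{O}$ enters at this stage because closing the Duhamel estimates requires commuting $O$ through kinetic-energy weights, and $\vertiii{O} = \|(1 - \Delta) O (1 - \Delta)^{-1}\|$ is exactly the norm that allows $\rd\Gamma(qOq)$ to be controlled by the kinetic energy in the excitation sector.

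For the lower bound (ii) I would use the variational inequality $\|e^{\lambda X/2}\psi_N\|^2 \geq |\langle \phi, e^{\lambda X/2}\psi_N\rangle|^2/\|\phi\|^2$ applied to the trial state $\phi = U_N^* T\, W(\lambda f/2)\,\Omega$, where $W(g) = \exp(a^*(g) - a(g))$ is the Weyl operator; the shift $\lambda f/2$ is exactly the choice that saturates the Gaussian inequality at the vacuum level. On the Fock side this amounts to evaluating $\langle W(\lambda f/2)\Omega,\, e^{\lambda(\phi(f) + R_N)/2}\xi_N\rangle$; using the shift rule $W(g)^* \phi(f) W(g) = \phi(f) + 2\Re\langle f, g\rangle$ together with the vacuum approximation of $\xi_N$, the leading contribution reproduces $e^{\lambda^2 \|f\|^2/2}$, while the $-C_2\lambda^3 \vertiii{O}^3$ correction again comes from $R_N$. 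The main technical obstacle will be the uniform-in-$\lambda$ control of these Duhamel remainders: one must propagate the a priori number-operator bounds on $\xi_N$ through the $\lambda$-dependent tilting by $e^{\lambda O_N/2}$, and ensure that the error retains the correct cubic scaling throughout the range $0 \leq \lambda \lesssim 1/\vertiii{O}$. Getting this cubic (as opposed to merely quadratic) scaling is the delicate point and is precisely where the regularity norm $\vertiii{O}$ is essential.
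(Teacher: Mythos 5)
Your proposal identifies the right high-level ingredients (the excitation map $\mathcal{U}_N$, the Bogoliubov rotation diagonalizing $\mathcal{Q}$, the role of $\vertiii{O}$, a cubic-in-$\lambda$ correction), and correctly sees that the Weyl identity $\langle\Omega, e^{\lambda\phi(f)}\Omega\rangle = e^{\lambda^2\|f\|^2/2}$ supplies the Gaussian factor. However, the mechanism you propose for controlling the remainder does not close, and it differs from what makes the paper's argument work.

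The central gap is in the claim that a Duhamel expansion of $e^{\lambda(\phi(f)+R_N)}$ around $e^{\lambda\phi(f)}$, combined with the fact that $R_N$ has vanishing vacuum expectation, yields a cubic correction. Two problems arise. First, you bundle $\rd\Gamma(qOq)$, the correction $\sqrt{N-\N}-\sqrt{N}$, and $-\langle\varphi_0,O\varphi_0\rangle\N$ into a remainder $R_N$ acting on the full Fock space alongside the unbounded field $\phi(f)=a(f)+a^*(f)$. But then you have lost the key structural feature the paper insists on: the modified field $\phi_+(g)=b^*(g)+b(g)$, with its $\sqrt{N-\N}$ weights, \emph{preserves} the truncated Fock space $\cF^{\leq N}_{\perp\varphi_0}$, so that $\N\leq N$ is a bounded operator along the entire interpolation. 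The moment you replace $\phi_+$ by its formal $N\to\infty$ limit $\widetilde{\phi}_+$, the tilted state $e^{s\widetilde{\phi}_+}\Omega$ is a coherent state with mean particle number $\sim Ns^2\|f\|^2$ and the operator $\N$ appearing inside $R_N$ is no longer a priori controllable; the ``a priori bounds on $\N$-moments in $\xi_N$'' you invoke do not propagate under the $\lambda$-tilt. Second, the vacuum-expectation cancellation argument misfires: once you tilt, you are evaluating $\rd\Gamma(\cdot)$ on coherent states, where its expectation is $O(N\lambda^2\|O\|^3)$, not zero. The Duhamel series in $R_N$ is a series in an unbounded operator and there is no clean argument that the higher-order terms sum to $e^{O(N\lambda^3)}$. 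The paper instead uses a Gronwall argument on $\partial_s\|\xi(s)\|^2$ with an interpolating family (Lemma \ref{lemma:step1}), where the entire $B=\rd\Gamma$ contribution is absorbed into a deliberately inserted intermediate exponential $e^{\pm\kappa\lambda\N}$ via the operator inequality $B+\kappa\N\geq 0$ with $\kappa=2\|O\|$. This replacement costs nothing beyond $e^{O(N\lambda^3)}$ but fundamentally changes the object being estimated in Steps 2--4, and is not reproduced by a perturbative expansion.

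A second, separate gap is that you treat the transition from the true excitation ground state $\psi_{\mathcal{G}_N}$ to the Bogoliubov vacuum $\psi_\mathcal{Q}$ by saying $\xi_N=T^*\mathcal{U}_N\psi_N$ is ``in a quantitative sense close to $\Omega$''. This norm-closeness is a statement at fixed $\lambda=0$; it gives no control on the exponentially weighted overlap $\langle\xi_N, e^{\lambda(\cdots)}\xi_N\rangle$ versus $\langle\Omega, e^{\lambda(\cdots)}\Omega\rangle$. The paper's Step 2 (Lemma \ref{lemma:step2}) replaces $\psi_{\mathcal{G}_N}$ by $\psi_{\mathcal{Q}}$ via a Hellmann--Feynman derivative along the family $\mathcal{G}_N(s)=\mathcal{Q}+s\mathcal{R}_N$, combined with uniform resolvent and spectral-gap estimates (Proposition \ref{claim:gs}) and Gronwall; likewise Step 3 (Lemma \ref{lemma:step3}) proves the approximate Bogoliubov action on $\phi_+$ (not on $\widetilde\phi_+$) by yet another Hellmann--Feynman/Gronwall argument along $\psi_{\mathcal{Q}(s)}=e^{B(s\mu)}\Omega$. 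These two novel interpolation arguments are the heart of the paper and are entirely missing from the proposal; a Duhamel perturbation cannot substitute because, again, the errors involve unbounded operators conjugated by $\lambda$-tilts that require the $\N\leq N$ invariance that only the modified $b,b^*$ calculus preserves.
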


Theorem \ref{thm:ldp} follows from Theorem \ref{thm:main} by a generalization of Cramer's theorem(see \cite[Section 2]{RSe}).    \\

\subsubsection*{Idea of the proof.} The rest of this paper is dedicated to the proof of Theorem \ref{thm:main}, thus on estimates on the moment generating function. We recall that for the result of Theorem \ref{thm:main} we are interested in the leading order of the exponential of the moment generating function that is $o ( N\lambda^2 )$ in the limit of small $\lambda$ and large $N$. We will show that for the leading order fluctuations around the condensate are crucial that we describe by the excitation vector $\mathcal{U}_N \psi_N$ (for a precise definition of the unitary map $\mathcal{U}_N$ to the Fock space of excitations see \eqref{def:UN}). As a first step we prove that we can replace the moment generating function  $\mathbb{E}_{\psi_N} \left[ e^{\lambda O_N} \right]$ with the expectation value
\begin{align}
\langle \mathcal{U}_N  \psi_N, \;e^{\lambda \phi_+ (q O \varphi_0) /2} e^{\lambda \kappa \mathcal{N}_+} e^{\lambda \phi_+ (q O \varphi_0) /2} \mathcal{U}_N \psi_N \rangle \; 
\end{align} 
paying a price exponentially  $O ( N \lambda^3)$ and thus subleading (see Lemma \ref{lemma:step1}). Here we introduced the notation
\begin{align}
\label{eq:def-phi+-0}
\phi_+ (q_0 O \varphi_0) = \sqrt{N - \mathcal{N}_+} a(q_0 O \varphi_0) +  a^*(q_0 O \varphi_0)\sqrt{N - \mathcal{N}_+} \;  
\end{align}
where $a,a^*$ denote the creation and annihilation operators on the bosonic Fock space and $\cN_+$ the number of excitations (for a precise definition see Section \ref{sec:fluc}). Note that the operator $\phi_+$, in contrast to its asymptotic limit
\begin{align}
\widetilde{\phi}_+ (h) = \sqrt{ N} a(q_0 O \varphi_0) + \sqrt{ N} a^* (q_0 O \varphi_0)
\label{eq:widetilde-phi}
\end{align}
for $N \rightarrow \infty$, does not increase the number of excitations which will be crucial for our analysis. 

We remark that the excitation vector $ \mathcal{U}_N \psi_N$ is the ground state of the excitation Hamiltonian 
\begin{align}
\mathcal{U}_N H_N \mathcal{U}_N^* = \mathcal{Q} + \mathcal{R}_N  \; . 
\end{align}
Its quadratic (in modified creation and annihilation operators) part $\mathcal{Q}$ and the remainder term $\mathcal{R}_N$ are given in \eqref{def:Q} resp. \eqref{def:R}. In the second step we show that replacing $\mathcal{U}_N \psi_N$ with the ground state $\psi_{\mathcal{Q}}$ of the quadratic operator $\mathcal{Q}$ leads to an error exponentially $O( N \lambda^3)$ ( and thus subleading). While the first step follows strategies presented in \cite{KRS} on the dynamical problem, the second step uses novel techniques. The proof is based on the Heymann-Feynmann theorem and Gronwall's inequality applied for $s \in [0,1]$ to the family of ground states $\psi_{\mathcal{G}_N (s)}$ that corresponds to the Hamiltonians $\mathcal{G}_N (s) = \mathcal{Q} + s \mathcal{R}_N$ and thus interpolates between the excitation vector $\mathcal{U}_N \psi_N$ and $\psi_{\mathcal{Q}}$ (for more details see Proposition \ref{claim:gs} and Lemma \ref{lemma:step2}). 

We remark that the ground state of operators quadratic in standard creation and annihilation operators is well known and given by a quasi-free state, i.e. by 
\begin{align}
\label{def:Bogo-a}
e^{\widetilde{B}( \mu )}  \Omega \quad \text{with} \quad \widetilde{B}(\mu) = \frac{1}{2} \sum_{p \in \Lambda_+^*}\mu_p  \left(a_p^*a_{-p}^* - a_pa_{-p} \right) \; 
\end{align}
where $\mu$ is given by \eqref{def:mu} and vacuum vector $\Omega$. Note that the operator $\mathcal{Q}$ is quadratic in modified creation and annihilation operators. However, we will prove that its ground state $\psi_{\mathcal{Q}}$ is approximately given by a generalized quasi-free state, i.e. by 
\begin{align}
\label{def:Bogo-b}
e^{B ( \mu)} \Omega \quad \text{with} \quad B(\mu) = \frac{1}{2} \sum_{p \in \Lambda_+^*}\mu_p  \left(b_p^*b_{-p}^* - b_pb_{-p} \right) \; . 
\end{align}
A crucial property of a Bogoliubov transform \eqref{def:Bogo-a} is that its action on creation and annihilation operators is explicitly known. In particular, we have for the asymptotic limit of $\phi_+$ that 
\begin{align}
e^{\widetilde{B} ( \mu)}  \widetilde{\phi}_+ (q_0 O \varphi_0) e^{-\widetilde{B} ( \mu )} 
= \widetilde{\phi}_+ (f) \; . 
\end{align}
Though the explicit action of the generalized Bogoliubov transform \eqref{def:Bogo-b} on the operator $\phi_+$  is not known we show in the third step that we still have 
\begin{align}
e^{B ( \mu )} \phi_+ ( q_0 O \varphi_0 ) e^{- B ( \mu )} \approx \phi_+ ( f) 
\end{align}
with an error exponentially $O ( N\lambda^3)$. This argument will be based again on the Hellmann-Feynmann theorem together with Gronwall's inequality applied to the family of ground states $\psi_{\mathcal{Q} (s)}$ of $\mathcal{Q}(s) = \mathcal{D} + s \mathcal{R}_{\mathcal{Q}}$ for $s \in [0,1]$ where $\mathcal{D}$ is a quadratic, diagonal operator. Thus $\mathcal{Q}(s)$ interpolates between the ground state $e^{B ( \mu)}\psi_{\mathcal{Q}}$ and the vacuum vector (see Lemma \ref{lemma:step3}). 

In the last step we then compute the remaining expectation value 
\begin{align}
\langle \Omega, e^{\lambda \phi_+ (f) /2} e^{\lambda \kappa \mathcal{N}_+} e^{\lambda \phi_+ (f )/2} \Omega \rangle 
\end{align}
with $f$ given by \eqref{def:f}. A comparison with the asymptotic limit $\widetilde{\phi}_+$ shows that the exponential of $\lambda \cN_+$ contributes exponentially $O( N\lambda^3)$, and thus subleading, leading to Theorem \ref{thm:main}. For the true operator $\phi_+$ this holds still true in the limit $N \rightarrow \infty$ and follows from arguments given in \cite{KRS} (see Lemma \ref{lemma:step4}).  

\subsubsection*{Structure of the paper} The paper is structured as follows: In Section \ref{sec:fluc} we introduce the description of the fluctuations (called excitations) around the condensate in the Fock space of excitations. In particular, we prove properties of the excitations' Hamiltonian $\mathcal{G}_N $ and the quadratic approximation $\mathcal{Q}$ and their corresponding ground states (see Propositions \ref{claim:gs}, \ref{claim:Qs}). In Section \ref{sec:prel} we recall preliminary results from \cite{KRS,RSe} and prove further auxiliary Lemmas (in particular for generalized Bogoliubov transforms \eqref{def:Bogo-b}) that we will use later for the proof of Theorem \ref{thm:main} in Section \ref{sec:proof}.

\section{Fluctuations around the condensate}
\label{sec:fluc}

\subsection{Fock space of excitations} On the unit torus the condensate wave function $\varphi_0$ is given by the constant function. To study the fluctuations around the condensate, we need to factor out the condensates contributions. For this we use an observation from \cite{LNSS} that any $N$-particle wave function $\psi_N \in L^2( \Lambda^N)$ can be decomposed as 
\begin{align}
\psi_N = \eta_0 \otimes_s \varphi_0^{\otimes N} + \mu_1 \otimes_s \varphi_0^{\otimes (N-1)} + \dots + \eta_N 
\end{align} 
where the excitation vectors $\eta_j$ are elements of $L^2_{\perp \varphi_0} (\Lambda^j)$, the orthogonal complement in $L^2 (\Lambda^j)$ of the condensate wave function $\varphi_0$ and $\otimes_s$ denotes the symmetric tensor product. Furthermore, we define the unitary 
\begin{align}
\label{def:UN}
\mathcal{U}_N: L_s^2 \left( \Lambda^{N}\right) \rightarrow \mathcal{F}_{\perp \varphi_0}^{\leq N}, \quad  \psi_N \mapsto \lbrace \eta_1, \dots, \eta_N \rbrace 
\end{align}
mapping any $N$-particle wave function $\psi_N$ onto its excitation vector $\lbrace \eta_1, \dots, \eta_N \rbrace$ that is an element of the Fock space of excitations 
\begin{align}
\mathcal{F}_{\perp \varphi_0}^{\leq N} = \bigoplus_{j=0}^N L^2_{\perp \varphi_0} (\Lambda)^{\otimes_s j}  \; . 
\end{align}
A crucial property of elements of the Fock space of excitations $\xi_N \in \mathcal{F}_{\perp \varphi_0}^{\leq N}$ is that the number of particles operator  $\mathcal{N} = \sum_{p \in \Lambda^*} a_p^* a_p$ is bounded, i.e. $ \langle \xi_N, \mathcal{N} \xi_N \rangle \leq N \| \xi_N \|^2$. Here we introduced the standard creation and annihilation operators $a_p^*,a_p$ in momentum space defined through the following relation by the well known creation and annihilation operators in position space $\check{a} (f), \check{a}^* (f)$
\begin{align}
\label{def:ap}
a_p^* = \check{a}( \varphi_p) , \quad \text{resp.} \quad a_p = \check{a} (\varphi_p) \quad \text{with} \quad \varphi_p = e^{i p \cdot x} \quad \text{for} \quad p \in \Lambda_+^* = 2 \pi \mathbb{Z}^3
\end{align}
that satisfy the canonical commutation relations 
\begin{align}
\label{eq:CCR}
\left[ a_p^*, a_q \right] = \delta_{p,q}, \quad \text{and} \quad \left[ a_p, a_q \right] = \left[ a_p^*, a_q^* \right] =0 \; . 
\end{align}
Contrarily, on the full bosonic Fock space built over $L^2 (\Lambda^j)$ (instead of $L^2_{\perp \varphi_0} (\Lambda^j)$) and given by 
\begin{align}
\mathcal{F} = \bigoplus_{j=0}^\infty L^2 (\Lambda)^{\otimes_s j}  \; . 
\end{align}
the number of particles $\mathcal{N} = \sum_{p \in \Lambda} a_p^* a_p$ is an unbounded operator. 

For our analysis it will be useful to work on the Fock space of excitations that is equipped with modified creation and annihilation operators $b^*_p, b_q$ that leave (in contrast to the standard ones $a^*_p, a_q$)  $\mathcal{F}_{\perp \varphi_0}^{\leq N}$ invariant and were first introduced in \cite{BS}. They are given by 
\begin{align}
\label{def:b}
b_p = \frac{\sqrt{N - \mathcal{N}_+}}{\sqrt{N}} a_p , \quad b^*_p = a_p^* \frac{\sqrt{N -\mathcal{N}_+}}{\sqrt{N}}  \; .
\end{align}
with the number of excitations 
\begin{align}
\label{def:N+}
\mathcal{N}_+ = \sum_{p \in \Lambda^*_+} a_p^* a_p \quad \text{and } \quad \Lambda^*_+ = \Lambda^* \setminus \lbrace 0 \rbrace \; . 
\end{align}
It follows from \eqref{eq:CCR} that $b^*_p, b_q$ satisfy the modified commutation relations 
 \begin{align}
 \label{eq:comm}
\left[ b_p , b^*_q \right] = \delta_{pq}\left( 1 - \frac{\mathcal{N}_+}{N} \right) - \frac{1}{N} a^*_q a_p  \; .  \left[ b^*_p , b^*_q \right] = \left[ b_p, b_q \right] = 0  \; . 
 \end{align}
We remark that in the limit of $N \rightarrow \infty$, the commutation relations of $b_p^*,b_q$ agree with the canonical commutation relations \eqref{eq:CCR}. However the corrections that are $O(N^{-1})$ lead to difficulties in the analysis later. 

The operator $b_p^*,b_q$ arise from the unitary $\mathcal{U}_N$ applied for $p, q \not=0 $ to products of creation and annihilation operators, namely 
\begin{align}
\label{eq:prop-U1}
\mathcal{U}_N a_0^* a_q \mathcal{U}_N = \sqrt{N} b_p, \quad  \text{and} \quad \mathcal{U}_N a_q^* a_0 \mathcal{U}_N = \sqrt{N} b_p^* \; . 
\end{align}
Moreover, $\mathcal{U}_N$ satisfies the property 
\begin{align}
\label{eq:prop-U2}
\mathcal{U}_N a_p^* a_q \mathcal{U}_N =  a_p^* a_q, \quad \mathcal{U}_N a_0^* a_0 \mathcal{U}_N = N - \mathcal{N}_+ \; .  
\end{align}

\subsection{Excitation Hamiltonian} We can embed $L^2_s ( \Lambda^N)$ into the full bosonic Fock space $\mathcal{F}$ where the Hamiltonian $H_N$ defined in \eqref{def:HN} then reads in momentum space
\begin{align}
H_N := \sum_{p \in \Lambda^*} p^2 a_p^* a_p + \frac{1}{2N} \sum_{p,q,k \in \Lambda^*} \widehat{v} (k) a_{p-k}^* a_{q+k}^* a_p a_q \; . 
\end{align}
If $\psi_N$ denotes the ground state of $H_N$, then the excitation vector $\mathcal{U}_N \psi_N =: \psi_{\mathcal{G}_N} $ denotes the ground state of the excitation Hamiltonian 
\begin{align}
\mathcal{G}_N := \mathcal{U}_N H_N \mathcal{U}_N^*  - \frac{N}{2} \widehat{v} (0)
\end{align}
that can be explicitly computed using the properties of the unitary \eqref{eq:prop-U1}, \eqref{eq:prop-U2} and is of the form 
\begin{align}
\label{def:G}
\mathcal{G}_N = \mathcal{Q} + \mathcal{R}_N 
\end{align}
where $\mathcal{Q}$ denotes an operator quadratic in (standard) creation and annihilation operators and is given with the notation $\Lambda_+^* = \Lambda_+ \setminus \lbrace 0 \rbrace $  by
\begin{align}
\label{def:Q}
\mathcal{Q} := \sum_{p \in \Lambda^*_+} \left[   p^2 a_p^*a_p+ \widehat{v} (p) b_p^* b_p  + \frac{1}{2} \widehat{v} (p) \left( b_p^* b_{-p}^* + b_p b_{-p} \right) \right]  \;  
\end{align}
whereas the remainder terms collected in $\mathcal{R}_N$ and given by 
\begin{align}
\label{def:R}
\mathcal{R}_N :=& \frac{1}{\sqrt{N }} \sum_{p,q \in \Lambda_+^*, p \not= -q} \widehat{v} (q)  \left( b_{p+ q}^* a_{-q}^* a_p + {\rm h.c.} \right) \notag \\
& + \frac{1}{2N} \sum_{p,q \in \Lambda_+^*, q \not= -p,k} \widehat{v} (k) a_{p+k}^* a_{k-q}^* a_p a_k 
\end{align}
will be shown to contribute to our analysis sub-leading only. In fact, in the proof of Theorem \ref{thm:main} in Section \ref{sec:proof} it turns out that $\mathcal{Q}$ resp. its corresponding ground state $\psi_\mathcal{Q}$ is approximately given by 
\begin{align}
\label{def:psiQ}
\psi_{\mathcal{Q}} = e^{B ( \mu)} \Omega \; \quad \text{with} \quad e^{B ( \mu )} = \exp \Big( \sum_{p \in \Lambda_+^*}  \left[\mu_p  b_p^*b_{-p}^* -  \overline{\mu_p}b_pb_{-p} \right] \Big)
\end{align}
with $\mu_p$ given by \eqref{def:mu},i.e. $\psi_{\mathcal{Q}}$ is a generalized Bogoliubov transform $e^{B( \mu)}$ applied to the vacuum $\Omega$, fully determines the variance (i.e. $\| f \|_{\ell^2 ( \Lambda_+^*)}^2$ in Theorem \ref{thm:ldp}). We remark that the approximation of $\mathcal{G}_N(s)$ by $\mathcal{Q}$ is often referred to as Bogoliubov approximation.

Furthermore, we introduce the family of Hamiltonians $\lbrace \mathcal{G}_N (s) \rbrace_{s \in [0,1]}$ given by
\begin{align}
\label{def:Gs}
\mathcal{G}_N (s) = \mathcal{Q} + s \mathcal{R}_N 
\end{align}
 interpolating between the excitation Hamiltonian $\mathcal{G}_N$ and its quadratic approximation $\mathcal{Q}$. In the following Proposition we collect useful properties of  $\lbrace \mathcal{G}_N (s) \rbrace_{s \in [0,1]}$. For this we introduce the following notation for the particles' kinetic energy
 \begin{align}
 \mathcal{K} = \sum_{p \in \Lambda_+^*} p^2 \; . 
 \end{align}
  
\begin{proposition}
\label{claim:gs}
Let $s \in [0,1]$, then there exists a ground state $\psi_N (s)$ of the Hamiltonian $\mathcal{G}_N (s)$ defined in \eqref{def:Gs}. Furthermore, there exists a constant $C >0$ such that 
\begin{align}
\langle \psi_{\mathcal{G}_N (s)}, \;  (\mathcal{N}_+ +1)^k \psi_{\mathcal{G}_N (s)} \rangle \leq C \label{eq:N-gs}
\end{align}
for $k=1,2$ and the spectrum of the Hamiltonian $\mathcal{G}_N(s)$ has a spectral gap above the ground state $E_N (s)$  independent of $s,N$.

 Moreover, there exists $C>0$ such that for any Fock space vector $\xi \in \mathcal{F}_{\perp \varphi_0}^{\leq N}$ we have 
\begin{align}
\label{eq:resolvent-1}
\Big\|(\mathcal{N}_+ +1) \frac{q_{\psi_{\mathcal{G}_N (s)}}}{ \mathcal{G}_N (s) - E_N (s)} \;  \;  \xi \Big\| &\leq  C \| \xi \| , \notag \\
\Big\|(\mathcal{N}_+ +1)^{-1/2} \frac{q_{\psi_{\mathcal{G}_N (s)}}}{ \mathcal{G}_N (s) - E_N (s)} \;  \;  \xi \Big\| &\leq  C \| (\mathcal{N}_+ +1)^{-3/2} \xi \|
\end{align} 
\end{proposition}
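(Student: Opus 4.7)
The plan is to treat $\mathcal{G}_N(s) = \mathcal{Q} + s \mathcal{R}_N$ as a controlled perturbation of the quadratic Bogoliubov Hamiltonian $\mathcal{Q}$, whose spectral information is explicit. Conjugation by the Bogoliubov transform $e^{B(\mu)}$ of \eqref{def:psiQ}--\eqref{def:mu} diagonalises
\begin{align*}
e^{-B(\mu)}\mathcal{Q}\, e^{B(\mu)} \;=\; \sum_{p \in \Lambda_+^*} \epsilon_p\, a_p^* a_p + E_{\mathcal{Q}}\,, \qquad \epsilon_p \;=\; \sqrt{p^4 + 2 p^2 \widehat{v}(p)}\,,
\end{align*}
so $\psi_{\mathcal{Q}} = e^{B(\mu)}\Omega$ is the unique ground state of $\mathcal{Q}$. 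Since $|p|\geq 2\pi$ on $\Lambda_+^*$ and $\widehat{v}\geq 0$, one has $\epsilon_p \geq g_0 := (2\pi)^2$, giving $\mathcal{Q}$ a spectral gap $g_0$, compact resolvent on $\mathcal{F}_{\perp \varphi_0}$, and the coercivity $\mathcal{Q} \geq c\,\mathcal{N}_+ - C$ for constants $c, C > 0$ independent of $N$.

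The next step is a relative form bound $\pm\mathcal{R}_N \leq \delta\, \mathcal{Q} + C_\delta$, uniform in $s \in [0,1]$ and $N$. Each of the four contributions to \eqref{def:R} carries an explicit small factor: the pairing corrections use $b_p^* b_{-p}^* - a_p^* a_{-p}^* = O(\mathcal{N}_+/N)$; the cubic piece $N^{-1/2}\widehat{v}(q) b_{p+q}^* a_{-q}^* a_p$ is estimated by Cauchy--Schwarz, summed with $\widehat{v}\in\ell^1$; the quartic piece by $\|\widehat{v}\|_{\ell^1}\mathcal{N}_+^2/N$; and the $\mathcal{N}_+\sum\widehat{v}(p) a_p^* a_p$ term directly. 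Combined with $\mathcal{N}_+ \leq N$ on $\mathcal{F}_{\perp \varphi_0}^{\leq N}$ and $\mathcal{Q} \geq c\mathcal{N}_+ - C$, these are standard mean-field Bose-gas estimates (compare \cite{LNSS,BS,R}). Consequently $\mathcal{G}_N(s)$ is self-adjoint on the form domain of $\mathcal{Q}$, bounded below uniformly in $s$, and inherits the compact resolvent; min--max then produces a ground state $\psi_{\mathcal{G}_N(s)}$. The uniform spectral gap $g>0$ then follows either by the Feshbach projection onto $\{\psi_{\mathcal{G}_N(s)}\}^\perp$ combined with the gap of $\mathcal{Q}$, or by importing the $s=1$ diagonalisation of $\mathcal{G}_N$ from \cite{LNSS,R} and propagating it to $s\in[0,1]$ via the uniform control on $\mathcal{R}_N$.

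For \eqref{eq:N-gs} with $k=1$, the trial-state comparison $E_N(s) \leq \langle\Omega, \mathcal{G}_N(s)\Omega\rangle = 0$ (every term in $\mathcal{Q}$ and in \eqref{def:R} annihilates $\Omega$ from at least one side) combined with $\mathcal{G}_N(s) \geq (c/2)\mathcal{N}_+ - C'$ immediately yields $\langle\mathcal{N}_+\rangle_{\psi_{\mathcal{G}_N(s)}} \leq 2C'/c$. For $k=2$, I would test the eigenvalue equation against $(\mathcal{N}_++1)\psi_{\mathcal{G}_N(s)}$, giving $\langle\mathcal{N}_+(\mathcal{G}_N(s)-E_N(s))\rangle_{\psi_{\mathcal{G}_N(s)}} = 0$; symmetrising the product into an anticommutator plus $\tfrac12[\mathcal{N}_+,\mathcal{G}_N(s)]$ and noting that the commutator is again quadratic and bounded by $\delta\,\mathcal{N}_+^2 + C_\delta\mathcal{Q} + C$ closes the estimate when $\delta$ is small. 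The weighted resolvent bounds \eqref{eq:resolvent-1} are then obtained by the following scheme: on $\mathrm{ran}\, q_{\psi_{\mathcal{G}_N(s)}}$ one sharpens the coercivity to $\mathcal{G}_N(s) - E_N(s) \geq c''(\mathcal{N}_++1)$ by combining the gap $g$ with $\mathcal{G}_N(s) - E_N(s) \geq c\mathcal{N}_+ - C$; writing $\eta = (\mathcal{G}_N(s)-E_N(s))^{-1}q_{\psi_{\mathcal{G}_N(s)}}\xi$, one uses $(\mathcal{N}_++1)^2 \leq (c'')^{-1}(\mathcal{N}_++1)(\mathcal{G}_N(s)-E_N(s))$ and commutes $\mathcal{N}_+$ through $\mathcal{G}_N(s)$, absorbing the commutator $[\mathcal{N}_+,\mathcal{G}_N(s)]$ via the form bound above and Young's inequality to arrive at $\|(\mathcal{N}_++1)\eta\| \leq C\|\xi\|$. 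The second estimate is proved analogously, this time weighting by $(\mathcal{N}_++1)^{-3/2}$ on the right and using the same coercivity inverted.

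The main technical obstacle is ensuring that the spectral gap and the moment bounds are uniform in both $s \in [0,1]$ and $N \in \mathbb{N}$ simultaneously: the naive form bound $\pm\mathcal{R}_N \leq \delta\mathcal{Q} + C_\delta$ only forces $\delta$ small for $N$ large, so a fully $N$-uniform gap likely requires a modified Bogoliubov rotation diagonalising $\mathcal{G}_N(s)$ directly (as available for $s=1$ in \cite{R}) whose $s$-dependence is controlled, or else a continuity-in-$s$ argument anchored at the known $s=0$ and $s=1$ endpoints. The bootstrap for $\langle\mathcal{N}_+^2\rangle$ also needs careful handling of the off-diagonal terms $a_p^* a_{-p}^*$ in $\mathcal{Q}$, which change $\mathcal{N}_+$ by $\pm 2$ and therefore require a symmetrisation step rather than a direct commutator closing.
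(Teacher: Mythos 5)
Your broad outline — coercivity of $\mathcal{Q}$, relative control of $\mathcal{R}_N$, and commutator bootstraps for the weighted resolvent — matches the spirit of the paper's argument, but there are two concrete gaps in the way you set it up.

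First, the two-sided form bound $\pm\mathcal{R}_N \leq \delta\,\mathcal{Q}+C_\delta$ with $\delta$ small uniformly in $N$ is not available, and you correctly sense this at the end. What the paper actually proves is a \emph{one-sided} inequality with a different structure: writing $\mathcal{R}_N=\widetilde{\mathcal{R}}_N+\mathcal{V}_N$ with $\mathcal{V}_N\geq 0$ the positive quartic piece, Cauchy--Schwarz on the off-diagonal pairing and on the cubic terms bounds $\widetilde{\mathcal{R}}_N \geq -\varepsilon_1\mathcal{N}_+-\varepsilon_2\mathcal{V}_N - C$, and the $\varepsilon_2\mathcal{V}_N$ is then absorbed by the \emph{positive} quartic term already present in $\mathcal{R}_N$. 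The cubic terms simply cannot be estimated by $\delta\mathcal{Q}$ alone: their natural bound is $\|\mathcal{V}_N^{1/2}\psi\|\,\|(\mathcal{N}_++1)^{1/2}\psi\|$, so without the freely-available positivity of $\mathcal{V}_N$ you have nothing to absorb the quartic part into. Your relative bound framed against $\mathcal{Q}$ should therefore be replaced by the coercivity $\mathcal{G}_N(s)\geq C_1\mathcal{N}_+ -C_2$ obtained in this way.

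Second, the spectral gap does \emph{not} require any modified or $s$-dependent Bogoliubov rotation, and the ``propagation in $s$'' idea is unnecessary. The relevant mechanism is a norm (not form) bound on the remainder, $\|\mathcal{R}_N\psi\|^2 \leq C N^{-1}\|(\mathcal{N}_++1)^{3/2}\psi\|^2$, valid for all $\psi\in\mathcal{F}_{\perp\varphi_0}^{\leq N}$. Once $\langle\mathcal{N}_+^2\rangle$ is controlled on low-energy states (via the coercivity and the $k=2$ bootstrap), this makes $\mathcal{R}_N$ an $O(N^{-1/2})$ perturbation of $\mathcal{Q}$ on the low-energy sector, and min--max transfers the gap of $\mathcal{Q}$ directly to $\mathcal{G}_N(s)$ for all $s\in[0,1]$ simultaneously. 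This both resolves your uniformity worry and is much lighter than re-diagonalising.

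Two smaller points: your trial-state argument $E_N(s)\leq\langle\Omega,\mathcal{G}_N(s)\Omega\rangle=0$ is clean and a nice simplification over what the paper says implicitly. For the $k=2$ bound, however, testing the eigenvalue equation against $(\mathcal{N}_++1)\psi$ is vacuous (it gives $0=0$ after symmetrising); what works is sandwiching $(\mathcal{N}_++1)^2 \lesssim (\mathcal{N}_++1)^{1/2}(\mathcal{G}_N(s)+C)(\mathcal{N}_++1)^{1/2}$ using the coercivity and then controlling $[\mathcal{G}_N(s),(\mathcal{N}_++1)^{1/2}]$ via the integral representation $(\mathcal{N}_++1)^{1/2}=\tfrac1\pi\int_0^\infty \sqrt t\,(\mathcal{N}_++1+t)^{-1}\,dt$, reducing matters to $[\mathcal{G}_N(s),\mathcal{N}_+]$, which is quadratic and genuinely small. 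The same device is what makes the weighted resolvent estimates \eqref{eq:resolvent-1} close.
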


\begin{proof} The proof uses well known ideas and techniques introduced to prove results on the properties of $\mathcal{G}_N$ and its corresponding ground state $\psi_{\mathcal{G}_N}$ showing that the remainder $\mathcal{R}_N$ contributes sub-leading only (see for example \cite{LNSS,NN,R}). Since $\mathcal{G}_N (s)$ differs from $\mathcal{G}_N $ by a multiple of the remainder only, these techniques apply for $\mathcal{G}_N (s)$ as we shall show in the following.  

The strategy is as follows: First we show that $\mathcal{G}_N (s)$ is bounded from below by a multiple of $ \mathcal{N}_+   - C $ yielding the estimate \eqref{eq:N-gs} for $k=1$ and with further arguments for $k=2$, too. Then the remainder $\mathcal{R}_N$ can be proven to be sub-leading, and the existence of a spectral gap of the spectrum of $\mathcal{G}_N(s)$ independent of $N,s$ follows from the spectral properties of $\mathcal{Q}$. Finally we prove \eqref{eq:resolvent-1} from the previously proven properties. 

\subsubsection*{Proof of lower bound for $\mathcal{G}_N (s)$} First we shall prove that there exists constants $C_1, C_2 >0$ such that 
\begin{align}
\label{eq:estimate-N-G}
\mathcal{G}_N (s) \geq C_1 \mathcal{N}_+  - C_2 \; . 
\end{align}
To this end, we recall that by definition \eqref{def:Gs} we have $\mathcal{G}_N (s) =\mathcal{Q} + s \mathcal{R}_N$ for $s \in [0,1]$. For the quadratic operator $\mathcal{Q}$ we find since $\widehat{v}(p) = \widehat{v} (-p)$ and $\widehat{v}(p) \geq 0$
\begin{align}
\mathcal{Q} =& \sum_{p \in \Lambda_+^*} p^2 a_p^*a_p + \frac{1}{2} \sum_{p \in \Lambda_+^*} \widehat{v}(p) \left[b_p^* + b_{-p} \right] \left[b_{-p}^* + b_{p} \right]  -  \frac{1}{2 }\| \widehat{v} \|_{\ell^1} \notag \\
\geq& \sum_{p \in \Lambda_+^*} p^2 a_p^*a_p  - \| \widehat{v} \|_{\ell^1}\geq (2 \pi)^2 \mathcal{N}_+ - \frac{1}{2} \| \widehat{v} \|_{\ell^1}\; . \label{eq:estimate-N-Q}
\end{align}
Thus assuming that there exists sufficiently small $\varepsilon_1 >0$ with 
\begin{align}
\label{eq:estimate-R-N}
\mathcal{R}_N \geq - \varepsilon_1 \mathcal{N}_+ - C
\end{align}
then \eqref{eq:estimate-N-G} follows from \eqref{def:Gs} and \eqref{eq:estimate-N-Q}. We are left with proving \eqref{eq:estimate-R-N}. For this we use that the contribution of $\mathcal{R}_N$ quartic in creation and annihilation operator is non-negative, i.e. that we can write 
\begin{align}
\label{def:VN}
\mathcal{R}_N = \widetilde{\mathcal{R}}_N + \mathcal{V}_N, \quad \text{with} \quad \mathcal{V}_N = \frac{1}{2N} \sum_{p,q \in \Lambda_+^*, q \not= -p,k} \widehat{v} (k) a_{p+k}^* a_{k-q}^* a_p a_k 
\end{align}
and $\mathcal{V}_N \geq 0$ following from $\widehat{v} \geq 0$. Therefore to prove \eqref{eq:estimate-R-N} it suffices to show that 
\begin{align}
\widetilde{\mathcal{R}}_N \geq  - \varepsilon_1 \mathcal{N}_+  - \varepsilon_2 \mathcal{V}_N - C \label{eq:estimate-R-V}
\end{align}
for sufficiently small $\varepsilon_1, \varepsilon_2 >0$. We estimate the single contributions of $\mathcal{R}_N$ given in \eqref{def:R} separately using the bounds  
\begin{align}
\label{eq:bounds-a} 
\| a(h) \xi \| \leq \| h \|_{\ell^2( \Lambda^*)} \| \mathcal{N}^{1/2} \xi \|, \quad \| a^* (h) \xi \| \leq \| h \|_{\ell^2( \Lambda^*)} \| ( \mathcal{N}+ 1)^{1/2} \xi \|
\end{align}
where $a^*(h)= \sum_{p \in \Lambda^*} h_p a_p^* $ for any $h \in \ell^2 ( \Lambda_+^*)$, resp. for the modified creation and annihilation operators 
\begin{align}
\label{eq:bounds-b} 
\| b(h) \xi \| \leq \| h \|_{\ell^2( \Lambda_+^*)} \| \mathcal{N}^{1/2} \xi \|, \quad \| b^* (h) \xi \| \leq \| h \|_{\ell^2( \Lambda_+^*)} \| ( \mathcal{N}+ 1)^{1/2} \xi \| \; . 
\end{align}
By definition \eqref{def:R}, the operator $\widetilde{\mathcal{R}}_N$ is cubic in creation and annihilation operators and can be bounded with \eqref{eq:bounds-b}  by
\begin{align}
N^{-1/2} & \vert  \langle \psi , \; \sum_{p,q \in \Lambda_+^*, p \not= -q} \widehat{v} (q)   b_{p+ q}^* a_{-q}^* a_p \psi \rangle \vert \notag \\
= & N^{-1/2} \vert \langle \psi, \; \sum_{p,q \in \Lambda_+^*, p \not= -q} \widehat{v} (q)   a_{p+ q}^* a_{-q}^* a_p \sqrt{1 - (\mathcal{N}_+ +1 ) /N} \psi \rangle \vert \notag \\
\leq& C \Big((2N^{-1} \sum_{p,q \in \Lambda_+^*, p \not= -q} \widehat{v} (q)  \| a_{p+ q}a_{-q}  \psi \|^2 \Big)^{1/2} \notag \\
& \vspace{1cm} \times \Big(   \sum_{p,q \in \Lambda_+^*, p \not= -q} \widehat{v} (q) \| a_p \sqrt{1 - (\mathcal{N}_+ +1 ) /N} \psi \|^2 \Big)^{1/2}  \; . 
\end{align}
We switch to position space for the first factor and find 
\begin{align}
(2N)^{-1} \sum_{p,q \in \Lambda_*^*, p \not= q}  & \widehat{v} (q) \| a_{p+q} a_{-q} \psi \|^2 \notag \\
&= (2N)^{-1} \int dxdy \; v (x-y) \; \langle \psi, a_x^*a_y^*a_ya_x \; \psi \rangle = \langle \psi, \; \mathcal{V}_N \psi \rangle \; 
\end{align}
and therefore 
\begin{align}
N^{-1/2}  &  \vert  \langle \psi , \; \sum_{p,q \in \Lambda_+^*, p \not= -q} \widehat{v} (q)   b_{p+ q}^* a_{-q}^* a_p \psi \rangle \vert 
\leq& C \| \widehat{v} \|_{\ell^1} \| \mathcal{V}_N^{1/2} \psi \| \; \| ( \mathcal{N} + 1 )^{1/2} \psi \| \; . \label{eq:estimate-R-3} 
\end{align}
The hermitian conjugate can be estimated similarly. Thus with \eqref{eq:estimate-R-3} we arrive at \eqref{eq:estimate-R-V} and thus at \eqref{eq:estimate-N-G} using that $s \in [0,1]$. 

\subsubsection*{Proof of \eqref{eq:N-gs}} The lower bound \eqref{eq:estimate-N-G} has several consequences: On the one hand, the lower bound \eqref{eq:estimate-N-G} shows that $\mathcal{G}_N (s)$ is bounded from below by a constant $- C (N+1)$. On the other hand \eqref{eq:estimate-N-G} shows that for any normalized $\xi \in \mathcal{F}$ with $\mathds{1}_{\mathcal{G}_N (s) \leq \zeta } \xi = \xi$ (i.e. in particular for the ground state) that 
\begin{align}
\langle \xi, \; \mathcal{N}_+ \xi \rangle \leq C_1^{-1} \langle \xi, \;  \mathcal{G}_N (s) \xi \rangle + C_2 \leq C_1^{-1} \zeta + C_2 \; \label{eq:bound-N-k1}
\end{align}
which proves \eqref{eq:N-gs} for $k=1$. To prove \eqref{eq:N-gs} for $k=2$ we remark that \eqref{eq:estimate-N-G} furthermore implies 
\begin{align}
\langle \xi, \; (\mathcal{N}_+ +1)^2 \xi \rangle \leq& C \langle \xi, (\mathcal{N}_+ + 1 )^{1/2} \mathcal{G}_N (s) (\mathcal{N}_++1)^{1/2} \xi \rangle \; \notag \\
 =& C \zeta \langle \xi, \; \mathcal{N}_+ \xi \rangle + C \langle \xi, \mathcal{N}_+^{1/2} \left[ \mathcal{G}_N (s),  \mathcal{N}_+^{1/2} \right] \xi \rangle . \label{eq:bound-N-k2-1} 
\end{align}
With spectral calculus we find that 
\begin{align}
\label{eq:spec-calc}
\left[ \mathcal{G}_N (s),  \mathcal{N}_+^{1/2} \right] = \frac{1}{\pi}\int_0^\infty  \frac{\sqrt{t}}{ \mathcal{N}_+ + 1 + t} \left[ \mathcal{G}_N (s), \mathcal{N}_+ \right] \frac{1}{ \mathcal{N}_+ + 1 + t} dt 
\end{align}
We recall the definition of $\mathcal{G}_N (s)$ and compute the commutators for every term separately. We have 
\begin{align}
\left[ \mathcal{Q}, \mathcal{N}_+ \right] = \sum_{p \in\Lambda_+^*} \widehat{v} (p) \left[ b_p^*b_{-p}^* -  b_pb_{-p} \right] 
\end{align}
and thus 
\begin{align}
\vert \langle  & \xi, \mathcal{N}_+^{1/2} \left[ \mathcal{Q},  \mathcal{N}_+^{1/2} \right] \xi \rangle \vert \notag \\
 \leq&  \frac{1}{2} \int_0^\infty \sqrt{t} \; \Big( \sum_{p \in \Lambda_+^*} \widehat{v} (p) \Big\| b_p \frac{ (\mathcal{N}_+ + 1)^{1/2}}{\mathcal{N}_+ +1 + t}\xi \Big\| \Big)^{1/2} \Big( \sum_{p \in \Lambda_+^*} \widehat{v} (p) \Big\| b_{-p}^* \frac{ 1}{\mathcal{N}_+ +1 + t}\xi \Big\| \Big)^{1/2} dt \notag \\
&\quad +  \frac{1}{2} \int_0^\infty \sqrt{t} \; \Big( \sum_{p \in \Lambda_+^*} \widehat{v} (p) \Big\| b_p^* \frac{ (\mathcal{N}_+ + 1)^{1/2}}{\mathcal{N}_+ +1 + t}\xi \Big\| \Big)^{1/2} \Big( \sum_{p \in \Lambda_+^*} \widehat{v} (p) \Big\| b_{-p} \frac{ 1}{\mathcal{N}_+ +1 + t}\xi \Big\| \Big)^{1/2} dt \notag \\
\leq&  \int_0^\infty \frac{\sqrt{t}}{( 1 + t)^2}  dt \; \| ( \mathcal{N}_+ +  1 ) \xi \| \; \| ( \mathcal{N}_+ + 1)^{1/2} \xi \|  \leq C \| ( \mathcal{N}_+  + 1 ) \xi \| \; \| ( \mathcal{N}_+ + 1)^{1/2} \xi \| \; . 
\end{align}
The commutator with $\mathcal{R}_N$ follows similarly using that $[\mathcal{N}_+ , a_p] = - a_p$ resp.  $[\mathcal{N}_+ , a_p^*] = a_p^*$ and analogous estimates as in \eqref{eq:estimate-R-3}) (with $\mathcal{V}_N \leq CN^{-1} \mathcal{N}_+^2 \leq C \mathcal{N}_+$ on $\mathcal{F}_{\perp \varphi_0}^{\leq N}$). Thus we arrive at
\begin{align}
\label{eq:estimate-comm}
\vert\langle \xi, \mathcal{N}_+^{1/2} \left[ \mathcal{G}_N (s),  \mathcal{N}_+^{1/2} \right] \xi \rangle \vert \leq  C \| ( \mathcal{N}_+  + 1 ) \xi \| \; \| ( \mathcal{N}_+ + 1)^{1/2} \xi \| 
\end{align}
and with \eqref{eq:bound-N-k2-1} furthermore at 
\begin{align}
( 1- \frac{1}{2}) \langle \xi, \; (\mathcal{N}_+ +1)^2 \xi \rangle \leq C \langle \xi, \; (\mathcal{N}_+ +1) \xi \rangle \leq C
\end{align}
from \eqref{eq:bound-N-k1} yielding \eqref{eq:N-gs} for $k=2$. 

With \eqref{eq:bound-N-k1} we can now refine the estimates on the remainder. We shall prove that 
\begin{align}
\label{eq:R-norm}
\| \mathcal{R}_N \psi \|^2 \leq C N^{-1} \| (\mathcal{N} + 1 )^{3/2} \psi \|^2  \; . 
\end{align}
For the contribution of $\mathcal{R}_N$ in \eqref{def:R} cubic in (modified) creation and annihilation operators, we switch to position space and compute with the commutation relations for any vector $\psi \in \mathcal{F}_{\perp \varphi_0}^{\leq N}$ 
\begin{align}
\|  &\sum_{p,q \in \Lambda_+^*, p \not= -q}  \widehat{v} (q)   b_{p+ q}^* a_{-q}^* a_p \psi \|^2 \notag \\
=& \int_{\Lambda^{4}} dxdydzdw \; v(x-y) v(z- w ) \langle \psi, \; a_z^*a_x^*a_y^* a_w a_z a_x  ( N - \cN_+ - 1)/N  \; \psi \rangle  \notag \\
&+ \int_{\Lambda^{3}} dxdydz \; v (x-y) v(x-z) \langle \psi, \; a_z^* a_y^*a_z a_x  ( N - \cN_+ - 1)/N  \; \psi \rangle \notag \\
&+  \int_{\Lambda^{3}} dxdydw \;  \left( v (x-y) v(y-w) + v (x-y) v(w-x) \right)  \langle \psi, \; a_x^* a_y^*a_w a_x  ( N - \cN_+ - 1)/N  \; \psi \rangle \notag \\
&+  \int_{\Lambda^{3}} dxdydz \; v (x-y) v(z-y) \langle \psi, \; a_z^* a_x^*a_z a_x  ( N - \cN_+ - 1)/N  \; \psi \rangle  \notag \\
&+  \int_{\Lambda^{2}} dxdy \;  v (x-y)^2  \langle \psi, \; \left( a_y^*a_x  + a_x^*a_x \right)    ( N - \cN_+ - 1)/N  \;  \psi \rangle \; . 
\end{align}
Since $\| v \|_{L^\infty ( \Lambda)} \leq \| \widehat{v} \|_{\ell^1} ( \Lambda^*) \leq C$ we thus conclude by \eqref{eq:bounds-a} that 
\begin{align}
N^{-1}\|  \sum_{p,q \in \Lambda_+^*, p \not= -q}  \widehat{v} (q)   b_{p+ q}^* a_{-q}^* a_p \psi \|^2  \leq C N^{-1} \| ( \mathcal{N}_+ + 1)^{3/2} \psi \|^2 \; . \label{eq:R-norm2}
\end{align}
We proceed similarly for the contribution of $\mathcal{R}_N$ cubic in creation and annihilation operators $\mathcal{V}_N$ (see \eqref{def:R} resp.  \eqref{def:VN}). With the commutation relation we find for any $\psi \in \mathcal{F}_{\perp \varphi_0}^{\leq N}$
\begin{align}
(2N)^2  &\| \mathcal{V}_N   \psi \|^2 \notag\\
=& \int_{\Lambda^4} dxdydzdw \;   v(x-y) v(z-w) \; \langle \psi, \; a_z^*a_w^*a_x^*a_y^* a_za_wa_x a_y \psi \rangle \; \notag \\ 
&+ 2 \int_{\Lambda^3} dxdydz  \left( v(x-y) v(z-x) +  v(x-y) v(z-y) \right) \; \langle \psi, \; a_z^*a_x^*a_y^* a_za_x a_y \psi \rangle  \notag \\
&+ 2 \int_{\Lambda^2} dxdydz  \; v(x-y)^2  \; \langle \psi, \; a_x^*a_y^* a_x a_y \psi \rangle  
\end{align}
and thus we arrive with $\| v \|_{L^\infty ( \Lambda)} \leq \| \widehat{v} \|_{\ell^1 ( \Lambda^*)} \leq C$ at 
\begin{align}
\label{eq:R-norm3}
\| \mathcal{V}_N \psi \|^2 \leq C N^{-1} \| ( \mathcal{N}_+  + 1 )^2  \psi \|^2 \,. 
\end{align}
Summarizing \eqref{eq:R-norm1}, \eqref{eq:R-norm2} and \eqref{eq:R-norm3} we thus arrive at the desired estimate \eqref{eq:R-norm}. 

Therefore with \eqref{eq:R-norm} we find that for any $\psi \in \mathcal{F}_{\perp \varphi_0}^{\leq N}$ in the limit $N \rightarrow \infty$
\begin{align}
\langle \psi, \mathcal{G}_N (s) \psi \rangle =  \langle \psi, \mathcal{G}_N (s) \psi \rangle  + O (N^{-1/2} ) \label{eq:spectrum}
\end{align}
and with the min max principle it follows that the low energy states are determined through the quadratic Hamiltonian $\mathcal{Q}$. In particular the spectrum of $\mathcal{G}_N (s)$ has a spectral gap independent of $N,s$ (given in leading order by the spectral gap of $\mathcal{Q}$ (for more details see for example \cite{LNSS}). Furthermore with similar arguments as in \cite{GS} it follows that for every $s \in [0,1]$ there exists a ground state $\psi_N(s)$ approximated by the ground state of $\mathcal{Q}$. 

\subsubsection*{Proof of \eqref{eq:resolvent-1}}  

With \eqref{eq:estimate-N-G}  we find 
\begin{align}
&\frac{q_{\psi_{\mathcal{G}_N (s)}}}{ \mathcal{G}_N (s)  - E_N (s) } ( \mathcal{N}_+ + 1 )^2 \frac{q_{\psi_{\mathcal{G}_N (s)}}}{ \mathcal{G}_N (s)  - E_N (s) }  \notag \\
&\quad  \leq C \frac{q_{\psi_{\mathcal{G}_N (s)}}}{ \mathcal{G}_N (s)  - E_N (s) } ( \mathcal{N}_+ + 1 )^{1/2} \left( \mathcal{G}_N (s) - E_N (s)  + 1 \right)  ( \mathcal{N}_+ + 1 )^{1/2} \frac{q_{\psi_{\mathcal{G}_N (s)}}}{ \mathcal{G}_N (s)  - E_N (s) }   \label{eq:estimate-resolvent4}
\end{align}
In order to use \eqref{eq:estimate-N-G} once more, we write the r.h.s. as 
\begin{align}
&\frac{q_{\psi_{\mathcal{G}_N (s)}}}{ \mathcal{G}_N (s)  - E_N (s) } ( \mathcal{N}_+ + 1 )^{1/2} \left( \mathcal{G}_N (s) - E_N (s) \right)  ( \mathcal{N}_+ + 1 )^{1/2} \frac{q_{\psi_{\mathcal{G}_N (s)}}}{ \mathcal{G}_N (s)  - E_N (s) }    \notag \\
&\quad   = \frac{1}{2} \Big[  (\mathcal{N}_+ + 1 )^{1/2} , \; \Big[ (\mathcal{N}_+ + 1)^{1/2}, \;  \frac{q_{\psi_{\mathcal{G}_N (s)}}}{ \mathcal{G}_N (s)  - E_N (s) } \Big] \Big] \notag \\
&\quad \quad + \frac{1}{2} \frac{q_{\psi_{\mathcal{G}_N (s)}}}{ \mathcal{G}_N (s)  - E_N (s) }\left[ \left[  ( \mathcal{N}_+ + 1 )^{1/2}, \; \mathcal{G}_N (s) \right] \;  , ( \mathcal{N}_+ + 1 )^{1/2} \right] \frac{q_{\psi_{\mathcal{G}_N (s)}}}{ \mathcal{G}_N (s)  - E_N (s) }    \; . 
\end{align}
For the first term we find similarly to \eqref{eq:spec-calc} with spectral calculus
\begin{align}
& \Big[ ( \mathcal{N}_+ + 1)^{1/2},  \frac{q_{\psi_{\mathcal{G}_N (s)}}}{ \mathcal{G}_N (s)  - E_N (s) } \Big] \notag \\
=& \frac{q_{\psi_{\mathcal{G}_N (s)}}}{ \mathcal{G}_N (s)  - E_N (s) }  \left[ ( \mathcal{N}_+ + 1)^{1/2},  \mathcal{G}_N (s)  \right] \frac{q_{\psi_{\mathcal{G}_N (s)}}}{ \mathcal{G}_N (s)  - E_N (s) }  \notag \\
=& \frac{1}{\pi} \frac{q_{\psi_{\mathcal{G}_N (s)}}}{ \mathcal{G}_N (s)  - E_N (s) }  \int_0^\infty   \frac{\sqrt{t}}{ \mathcal{N}_+ + 1 + t} \left[ \mathcal{G}_N (s), \mathcal{N}_+ \right] \frac{dt}{ \mathcal{N}_+ + 1 + t} \; \;  \frac{q_{\psi_{\mathcal{G}_N (s)}}}{ \mathcal{G}_N (s)  - E_N (s) }  \label{eq:comm-G-N12}
\end{align}
and thus with similar estimates as in \eqref{eq:spec-calc}-\eqref{eq:estimate-comm} and \eqref{eq:estimate-N-G} 
\begin{align}
\vert& \langle \xi, \Big[ (\mathcal{N}_+ + 1 )^{1/2} , \; \Big[ \mathcal{N}_+^{1/2}, \;  \frac{q_{\psi_{\mathcal{G}_N (s)}}}{ \mathcal{G}_N (s)  - E_N (s) } \Big] \Big]   \xi \rangle \vert \notag \\
&\leq C \Big\| \frac{q_{\psi_{\mathcal{G}_N (s)}}}{ \mathcal{G}_N (s)  - E_N (s)} ( \mathcal{N}_+ +1 )^{1/2} \xi \| \; \|  ( \mathcal{N}_+ +1 ) \frac{q_{\psi_{\mathcal{G}_N (s)}}}{ \mathcal{G}_N (s)  - E_N (s)} \xi \|  \notag \\
&\leq C \| \xi \| \; \|  ( \mathcal{N}_+ +1 ) \frac{q_{\psi_{\mathcal{G}_N (s)}}}{ \mathcal{G}_N (s)  - E_N (s)} \xi \| \; .\label{eq:estimate-resolvent5}  
\end{align}
The second term of the r.h.s. of \eqref{eq:estimate-resolvent4} can be estimated similarly and we find  with \eqref{eq:spec-calc}-\eqref{eq:estimate-comm},  \eqref{eq:estimate-N-G} that 
\begin{align}
\vert & \langle \xi, \;  \frac{q_{\psi_{\mathcal{G}_N (s)}}}{ \mathcal{G}_N (s)  - E_N (s) }\left[ \left[  ( \mathcal{N}_+ + 1 )^{1/2}, \; \mathcal{G}_N (s) \right] \;  , ( \mathcal{N}_+ + 1 )^{1/2} \right] \frac{q_{\psi_{\mathcal{G}_N (s)}}}{ \mathcal{G}_N (s)  - E_N (s) }   \xi \rangle \vert \notag \\
&\leq C \Big\| \frac{q_{\psi_{\mathcal{G}_N (s)}}}{ \mathcal{G}_N (s)  - E_N (s)} ( \mathcal{N}_+ +1 )^{1/2} \xi \| \; \|  ( \mathcal{N}_+ +1 ) \frac{q_{\psi_{\mathcal{G}_N (s)}}}{ \mathcal{G}_N (s)  - E_N (s)} \xi \|  \notag \\
&\leq C \| \xi \| \; \|  ( \mathcal{N}_+ +1 ) \frac{q_{\psi_{\mathcal{G}_N (s)}}}{ \mathcal{G}_N (s)  - E_N (s)} \xi \| \; . \label{eq:estimate-resolvent6}
\end{align}
Thus we conclude with \eqref{eq:estimate-resolvent5}, \eqref{eq:estimate-resolvent6} from \eqref{eq:estimate-resolvent4} with the operator inequality 
\begin{align}
( 1- \frac{1}{2}) \frac{q_{\psi_{\mathcal{G}_N (s)}}}{ \mathcal{G}_N (s)  - E_N (s) } ( \mathcal{N}_+ + 1 )^2 \frac{q_{\psi_{\mathcal{G}_N (s)}}}{ \mathcal{G}_N (s)  - E_N (s) }  \leq C 
\end{align}
that finally leads to the desired first bound of \eqref{eq:resolvent-1}. 

The second bound follows with similar arguments from \eqref{eq:comm-G-N12}.

\end{proof}

\subsection{Generalized Bogoliubov transform} We note that the quadratic Hamiltonian $\mathcal{Q}$ is formulated w.r.t. to modified creation and annihilation operators. For operators quadratic in standard creation and annihilation operators, the corresponding unique ground state is explicitly known and given by a quasi-free state. However, here we do not have an explicit expression for the ground state $\psi_{\mathcal{Q}}$, but we will use that it is approximately given by the generalized quasi-free state $e^{B( \mu)} \Omega $ as defined in \eqref{def:Bogo-b}. In contrast to the standard Bogoliubov transform \eqref{def:Bogo-a} formulated w.r.t. to standard creation and annihilation operators, there is no exact formula for the action of $e^{B( \mu)}$ on modified creation and annihilation operators. However, we have 
\begin{align}
\label{def:d}
e^{B ( \mu)} b_p^* e^{-B ( \mu)} =& \gamma_p b_p^* + \sigma_p b_{-p} + d_p, \notag \\
 e^{B ( \mu)} b_p e^{-B ( \mu)} =& \gamma_p b_p +\sigma_p b_{-p}^* + d^*_p
\end{align}
where we write $\sigma_p = \sinh ( \mu_p)$ and $\gamma_p = \cosh ( \mu_p)$. The remainders $d_p, d_p^*$ are small on states with a small number of excitations. More precisely, \cite[Lemma 2.3]{BBCS_Acta} shows (since $\mu \in \ell^2( \Lambda_+^*)$)  that for any $k \in \mathbb{Z}$ there exists $C_k >0$ such that 
\begin{align}
\label{eq:estimates-dp}
\| ( \mathcal{N}_+  + 1)^{k/2} d_p \psi \| \leq C_k N^{-1} \left( \| b_p ( \mathcal{N}_+ 1)^{(k+2)/2} \psi \| + \vert \mu_p \vert \; \| ( \mathcal{N}_+  + 1)^{3/2} \psi \| \right) 
\end{align}
for all $p \in \Lambda_+^*$ and 
\begin{align}
\label{eq:estimates-dp*}
\| ( \mathcal{N}_+  + 1)^{k/2} d_p^* \psi \| \leq C_k N^{-1} \| ( \mathcal{N}_+  + 1)^{3/2} \psi \|
\end{align}
In particular this leads to 
\begin{align}
\label{eq:estimates-db}
\| ( \mathcal{N}_+ + 1)^{-1/2}  d_p^{\sharp_1} b_{\alpha p}^{\sharp_2} \psi \| , \; \| ( \mathcal{N}_+ + 1)^{-1/2}  b_p^{\sharp_1} d_{\alpha p}^{\sharp_2} \psi \|  \leq  C N^{-1/2} \| ( \mathcal{N}_+ + 1) \psi \|    
\end{align}
and 
\begin{align}
\label{eq:estimates-dd}
\| ( \mathcal{N}_+ + 1)^{-1/2}  d_p^{\sharp_1} d_{\alpha p}^{\sharp_2} \psi \| \leq  C N^{-1/2} \| ( \mathcal{N}_+ + 1) \psi \|
\end{align}
with $\sharp_i \in \lbrace \cdot, * \rbrace$ for $i=1,2$, either $\sharp_1 = \sharp_2$ or $\sharp_1 = * $ and $\sharp_2 = \cdot$ and $\alpha = -1$ if $\sharp_1 = \sharp_2$ and $\alpha =1$ otherwise. These estimates \eqref{eq:estimates-dp}, \eqref{eq:estimates-dp*}, \eqref{eq:estimates-dd} remain true when replacing $d_p, d_p^*$ resp. $d_p^{\sharp_1}d_{\alpha p}^{\sharp_2}$ with their (double commutator) with $\cN_+$; we have 
\begin{align}
\label{eq:bound-comm}
\| ( \mathcal{N}_+  + 1)^{k/2} \left[ \mathcal{N}_+,  d_p \right]  \psi \| \leq C_k  N^{-1} \left( \| b_p ( \mathcal{N}_+ 1)^{(k+2)/2} \psi \| + \vert \mu_p \vert \; \| ( \mathcal{N}_+  + 1)^{3/2} \psi \| \right) 
\end{align} 
resp. 
\begin{align}
\label{eq:bound-double-comm}
\| ( \mathcal{N}_+  + 1)^{k/2} \left[ \mathcal{N}_+, \left[ \mathcal{N}_+,  d_p \right] \right] \psi \| \leq C_k  N^{-1} \left( \| b_p ( \mathcal{N}_+ 1)^{(k+2)/2} \psi \| + \vert \mu_p \vert \; \| ( \mathcal{N}_+  + 1)^{3/2} \psi \| \right) 
\end{align}
and similarly for the other operators (note that \eqref{eq:bound-comm}, \eqref{eq:bound-double-comm} follow from the proof of \cite[Corollary 3.5]{BBCS_optimal}). Also, we know the generalized Bogoliubov transform approximate action on the kinetic term that is given by 
\begin{align}
e^{B( \mu)} \sum_{p \in \Lambda_+^*}&  p^2 a_p^*a_p e^{-B( \mu)} \notag \\
&= \sum_{p \in \Lambda_+^*} p^2 a_p^*a_p + \sum_{p \in \Lambda_+^*} p^2 \left( \sigma_p^2 + \sigma_p \gamma_p ( b_p^*b_{-p}^* + b_pb_{-p} ) + 2 \sigma_p^2 b_p^*b_p \right)  + \mathcal{R}_{\mathcal{K}}
\end{align}
where the remainder $\mathcal{R}_{\mathcal{K}}$ satisfies 
\begin{align}
\|( \mathcal{N}_+ + 1)^{-1/2}   \mathcal{R}_{\mathcal{K}} \psi \| \leq C  N^{-1/2} \| ( \mathcal{N}_+ + 1) \psi \| \; . \label{eq:bound-RK1}
\end{align}
and also similar bounds for its (double) commutator as formulated before \eqref{eq:bound-comm}, \eqref{eq:bound-double-comm}. Note that since $p^2 \mu_p \in \ell^1( \Lambda)$ and $\sigma_p, \gamma_p \in \ell^\infty( \Lambda_+^*)$, this is a consequence of \eqref{eq:estimates-db}, \eqref{eq:estimates-dd} (for more details see Lemma \ref{lemma:RK} below). Consequently, conjugating the quadratic Hamiltonian $\mathcal{Q}$ with the generalized Bogoliubov transform $e^{B( \mu)}$ almost diagonalizes $\mathcal{Q}$. More precisely, we have 
\begin{align}
e^{-B(\mu)} \mathcal{Q} e^{B(\mu)} = \mathcal{D} + \mathcal{R}_{\mathcal{Q}} 
\end{align}
where the diagonal operator $\mathcal{D}$ is given by 
\begin{align}
\mathcal{D} := \sum_{p \in \Lambda_+^*} \left( \left( p^2 + \widehat{v} (p) \right) ( \sigma_p^2 + \gamma_p^2) + 2 \sigma_p \gamma_p \widehat{v} (p) \right) a_p^*a_p 
\end{align}
and the remainder is 
\begin{align}
\mathcal{R}_{\mathcal{Q}}  =& \sum_{p \in \Lambda_+^*} (2 p^2  \sigma_p^2 + \widehat{v}(p) ) (b_p^*b_p -a_p^*a_p)  \notag \\
&+ \sum_{p \in \Lambda_+^*} \widehat{v}(p) \left( ( \gamma_p b_p^* + \sigma_p b_{-p} ) d_p + d_p^* ( \gamma_p b_p + \sigma_p b_{-p}^* ) + d_p^*d_p \right) \notag \\
&+ \sum_{p \in \Lambda_+^*} \widehat{v}(p)  \left( ( \gamma_p b_p^* + \sigma_p b_{-p} ) d_{-p}^* + d_p^* ( \gamma_p b_{-p}^* + \sigma_p b_{p} ) + d_p^*d_{-p}^* ) \right) +{\rm h.c.} \notag \\
&+ \mathcal{R}_{\mathcal{K}} \; . 
\end{align}
Though we do not have an explicit form of $e^{B(\mu)} \psi_{\mathcal{Q}}$, the ground state of the diagonal operator $\mathcal{D}$ is explicitly known and given by the vacuum $\Omega$. For this reason we will study for $s \in [0,1]$ the family of Hamiltonians
\begin{align}
\mathcal{Q} (s) := \mathcal{D} + s \mathcal{R}_{\mathcal{K}}
\end{align}
interpolating between $\mathcal{Q}(1) = e^{-B( \mu)}\mathcal{Q} e^{B(\mu)}$ and $\mathcal{Q}(0) = \mathcal{D}$. 

Similarly to Proposition \ref{claim:gs} we have the following properties. 

\begin{proposition}
\label{claim:Qs}
Let $s \in [0,1]$, then there exists a ground state $\psi_{\mathcal{Q}(s)}$ of the Hamiltonian $\mathcal{Q} (s)$ defined in \eqref{def:Gs}. Furthermore, there exists a constant $C_k >0$ (independent of $s,N$) such that 
\begin{align}
\langle \psi_{\mathcal{Q} (s)}, \;  (\mathcal{N}_+ +1)^k \psi_{\mathcal{Q} (s)} \rangle \leq C_k \label{eq:N-gs}
\end{align}
for $k=1,2$ and the spectrum of the Hamiltonian $\mathcal{Q}(s)$ has a spectral gap above the ground state $E (s)$  independent of $s,N$. Moreover, for $k=1,2$ there exists $C_k>0$ (independent of $N,s$) such that for any Fock space vector $\xi \in \mathcal{F}_{\perp \varphi_0}^{\leq N}$ we have 
\begin{align}
\label{eq:resolvent-1}
\Big\|(\mathcal{N}_+ +1)^{k/2} \frac{q_{\psi_{\mathcal{Q} (s)}}}{ \mathcal{Q} (s) - E (s)} \;  \;  \xi \Big\| &\leq  C_k \| \xi \| \; . 
\end{align} 
\end{proposition}

\begin{proof}
We proceed similarly as in the proof of Proposition \ref{claim:gs}. First note that from Proposition \ref{claim:gs} we have (since $\mathcal{Q} = \mathcal{G}_N(0)$) that 
\begin{align}
\mathcal{Q}(1) =  e^{-B( \mu)} (\mathcal{Q} - E(1)) e^{B( \mu)} \geq C_1 e^{-B( \mu)}\mathcal{N}_+e^{B( \mu)} - C_2
\end{align}
for some $C_1,C_2 >0$. The generalized Bogoliubov transform approximately preserves the number of particles. More precisely it follows from \cite[Lemma 2.4]{BBCS_Acta} that 
\begin{align}
e^{-B( \mu)} \mathcal{Q} e^{B( \mu)} \geq C_3 \mathcal{N}_+  - C_4
\end{align}
for some positive constants $C_3,C_4>0$. Since $\mathcal{D} \geq C \mathcal{N}_+ $ for some positive $C>0$ and $\mathcal{Q}(s) = s e^{-B( \mu)} \mathcal{Q} e^{B( \mu)} + (1-s) \mathcal{D}$ is a convex combination of both, we find 
\begin{align}
(\mathcal{Q}(s) - E(s)) \geq c_0 \mathcal{N}_+ - C_0 
\end{align}
for some positive constants $c_0,C>0$. This implies that for any normalized $\xi \in \mathcal{F}$ with $\xi = \mathds{1}_{\mathcal{Q}(s) \leq \zeta} \xi$ that 
\begin{align}
\langle \xi,  (\mathcal{N} + 1)^2 \xi \rangle \leq& C \langle \xi, \; (\mathcal{N}_+ +1)^{1/2} \mathcal{Q} (s) (\mathcal{N}_+ +1)^{1/2} \xi \rangle \notag \\
=& C \zeta \langle \xi, \; (\mathcal{N}_+ +1) \xi \rangle + C \langle \xi, \; \left[ (\mathcal{N}_+ +1)^{1/2}, \left[ \mathcal{Q} (s) , (\mathcal{N}_+ +1)^{1/2} \right] \right] \xi \rangle \;.
\end{align}
With formula \ref{eq:spec-calc} we write the commutator as 
\begin{align}
& \left[ (\mathcal{N}_+ +1)^{1/2}, \left[ \mathcal{Q} (s) , (\mathcal{N}_+ +1)^{1/2} \right] \right] \notag \\
 &=\frac{1}{\pi^2}\int_0^\infty dt_1 dt_2 \frac{1}{\mathcal{N}_+ + 1+ t_1}\frac{1}{\mathcal{N}_+ + 1+ t_2} \left[ \cN_+ , \; \left[ \cN_+, \mathcal{Q} (s) \right] \right] \frac{1}{\mathcal{N}_+ + 1+ t_2}\frac{1}{\mathcal{N}_+ + 1+ t_1}
\end{align}
Since $\widehat{v} \in \ell^2( \Lambda_+^*)$ and $\left[ \mathcal{N}_+, b_p^{\sharp}\right] = \alpha b_p^\sharp$ with $\sharp \in \lbrace \cdot, * \rbrace$ and $\alpha = -1$ if $\sharp  = \cdot $ and $\alpha =1$ if $\sharp = *$, it follows from \eqref{eq:bounds-b} and \eqref{eq:bound-double-comm}  that we can bound the double commutator in form by the number of particles. Thus we arrive for any $\xi \in \mathcal{F}_{\perp \varphi_0}^{\leq N}$ at 
\begin{align}
\langle \xi,  (\mathcal{N} + 1)^2 \xi \rangle \leq& C \zeta \langle \xi, \; (\mathcal{N}_+ +1) \xi \rangle + C \|( \mathcal{N} + 1) \xi \| \| \xi \| 
\end{align}
and we conclude by  $\langle \xi,  (\mathcal{N} + 1)^2 \xi \rangle \leq C$. The spectral gap and the bound on the resolvent follow with similar arguments as in the proof of Proposition \ref{claim:gs} using again the estimates on the double commutator. 

\end{proof}

\section{Preliminaries}
\label{sec:prel}

The proof of Theorem \ref{thm:main} is based on closed formulas derived in \cite{KRS} for the conjugation of operators of the form $b_p, b_p^*$ and 
\begin{align}
d \Gamma (H) = \sum_{p \in \Lambda_+^*} H_{p,q} \; a_p^*a_{-q} 
\end{align}
for any bounded operator $H$ on $\ell^2( \Lambda_+^*)$ with the exponential of $\mathcal{N}_+$ (given by \eqref{def:N+}) and the symmetric operator 
\begin{align}
\label{def:phi}
\phi_+ (h) = b^* (h) + b(h) = \sum_{p \in \Lambda_+^*} h_p \left[ b_p^* + b_{-p} \right] 
\end{align}
with $h \in \ell^2 ( \Lambda_+^*)$.  For this we furthermore define for any $h \in \ell^2 ( \Lambda_+^*)$ the anti-symmetric operator 
\begin{align}
i \phi_- (h)  = b(h) - b^*(h) = - \sum_{p \in \Lambda_+^*} h_p \left[  b_p^*- b_{-p} \right] \; . 
\end{align}
and (in abuse of notation) the shorthand notation
\begin{align}
\label{eq:gamma}
 \gamma_s = \cosh (s)  \quad \text{and} \quad \sigma_s = \sinh (s)
\end{align}

We recall the closed formulas from \cite{KRS} that are formulated in position space and easily translate with \eqref{def:ap} to momentum space relevant for the present analysis. 

\begin{lemma}[Proposition 2.2,2.4 in \cite{KRS}]\label{lemma:b-con}
With the shorthand notation $\| \cdot \|_{\ell^2( \Lambda_+^*)} = \| \cdot \|$ we have for $h \in \ell^2 (\Lambda_+^*)$ and  $p \in \Lambda_+^*$
\begin{equation}
\label{eq:prop1-1} 
\begin{split} 
e^{\sqrt{N} \phi_+ (h)} b_p e^{-\sqrt{N} \phi_+(h)} =&\; \gamma_{\| h \|} b_p + \gamma_{\| h \|} \frac{\gamma_{\| h \|} - 1}{\| h \|^2} h_{-p} i \phi_- (h) - \frac{\gamma_{\| h \|} - 1}{\| h \|^2} h_{-p}  b^* (h) \\ &- \sqrt{N} \, \gamma_{\|h \|} \frac{\sigma_{\| h \|}}{\| h \|}  h_{-p} \left( 1 - \frac{\N}{N} \right) + \frac{1}{\sqrt{N}} \, \frac{\sigma_{\| h \|}}{\| h \|} \frac{\gamma_{\| h \|} - 1}{ \| h \|^2}  h_{-p} a^* (h) a (h) \\ &+ \frac{1}{\sqrt{N}} \, \frac{\sigma_{\| h \|}}{\| h \|} a^* (h) a_{p}  \; .
\end{split} 
\end{equation}
Furthermore for any self-adjoint $H : \mathcal{D} ( H) \rightarrow \ell^2( \Lambda_+^*) $ with domain $\mathcal{D} (H) \subset \ell^2( \Lambda_+^*)$ and $h \in \mathcal{D} (H)$ we have 
\begin{equation}\label{eq:prop1-2}
\begin{split} 
& e^{\sqrt{N} \phi_+(h)} \rd \Gamma (H) e^{-\sqrt{N} \phi_+(h)} \notag \\
& \quad= \; \rd \Gamma (H) + \sqrt{N} \, \frac{\sigma_{\| h \|}}{\| h \|} i \phi_- (Hh) \\
 &\quad \quad - N \frac{\sigma_{\| h \|}^2}{\|h \|^2} \langle h, H h \rangle \left(1 - \frac{\N}{N} \right) + \frac{(\gamma_{\| h \|} -1)}{\| h \|^2} (a^* (h) a (Hh) + a^* (Hh) a (h) )
\\ &\quad \quad + \sqrt{N} \,  \frac{\sigma_{\| h \|}}{\| h \|} \frac{\gamma_{\| h \|} - 1}{\| h \|^2} \langle h, H h \rangle i \phi_- (h)  + \left( \frac{\gamma_{\| h \|} - 1}{\| h \|^2} \right)^2  \langle h, H h \rangle a^* (h) a (h)  \; . \end{split}\end{equation}
\end{lemma}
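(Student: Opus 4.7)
Both identities follow from a standard Duhamel/ODE argument. For $Y \in \{b_p,\, \mathrm{d}\Gamma(H)\}$, introduce the one-parameter family
\begin{align*}
A_Y(s) \;=\; e^{s\sqrt{N}\,\phi_+(h)}\, Y\, e^{-s\sqrt{N}\,\phi_+(h)}, \qquad s\in [0,1],
\end{align*}
which satisfies the initial condition $A_Y(0) = Y$ and the ODE $A_Y'(s) = \sqrt{N}\,[\phi_+(h), A_Y(s)]$. The formulas \eqref{eq:prop1-1} and \eqref{eq:prop1-2} are then closed-form expressions for $A_Y(1)$, so the task reduces to exhibiting an ansatz that satisfies this ODE and invoking uniqueness.

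\textbf{Step 1 (closed algebra of operators).} Using the modified commutation relations \eqref{eq:comm}, a direct computation gives, up to index conventions,
\begin{align*}
[\phi_+(h),\, b_p] \;=\; -\,h_{-p}\Bigl(1-\tfrac{\mathcal{N}_+}{N}\Bigr) \,+\, \tfrac{1}{N}\,a^*(h)\,a_p,
\end{align*}
and similarly $[\phi_+(h),\, \mathrm{d}\Gamma(H)]$ produces $-i\phi_-(Hh)$ together with $c$-number and $1/N$-corrections involving $a^*(h)a(Hh)$, $a^*(Hh)a(h)$ and $\langle h, Hh\rangle(1-\mathcal{N}_+/N)$. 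Iterating the adjoint action $[\phi_+(h),\,\cdot\,]$ on these new operators, one observes that it closes on a finite linear span. For \eqref{eq:prop1-1} this span is generated precisely by the six operators appearing on the right-hand side, namely $b_p$, $h_{-p}\,i\phi_-(h)$, $h_{-p}\,b^*(h)$, $\sqrt{N}\,h_{-p}(1-\mathcal{N}_+/N)$, $h_{-p}\,a^*(h)a(h)/\sqrt{N}$ and $a^*(h)a_p/\sqrt{N}$; for \eqref{eq:prop1-2} one needs an analogous, slightly larger, finite closed set. Identifying this closed algebra is the essential algebraic input.

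\textbf{Step 2 (ODE system and solution).} Make the linear ansatz $A_{b_p}(s) = \sum_j c_j(s)\, X_j$, where $X_j$ runs over the six operators of Step 1 and the initial data are $c_1(0) = 1$, $c_j(0)=0$ for $j \geq 2$. Substituting into the ODE and matching coefficients of each $X_j$ (using $[\mathcal{N}_+, a_p] = -a_p$, $[\mathcal{N}_+, a_p^*] = a_p^*$, and the elementary commutators of $b^*(h), b(h)$ with $\mathcal{N}_+$ and with $a^*(h)a(h)$) produces a $6\times 6$ linear system of ODEs with constant coefficients. After the rescaling $s \mapsto s\|h\|$, the generating matrix has eigenvalues $\pm 1$, so each $c_j(s)$ is an explicit combination of $\gamma_{s\|h\|} = \cosh(s\|h\|)$ and $\sigma_{s\|h\|} = \sinh(s\|h\|)$. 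Solving the scalar system and evaluating at $s=1$ yields exactly the coefficients in \eqref{eq:prop1-1}; the same strategy applied to the enlarged closed algebra yields \eqref{eq:prop1-2}.

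\textbf{Main obstacle.} The principal difficulty is purely algebraic bookkeeping: verifying that repeated commutators with $\phi_+(h)$ really stay inside the proposed finite span, rather than generating genuinely new operators, and tracking how the $\mathcal{N}_+/N$ corrections from the modified CCR \eqref{eq:comm} redistribute powers of $\sqrt{N}$ between the various $X_j$. The specific normalizations $h_{-p}$, the factors of $\sqrt{N}$ in the $(1-\mathcal{N}_+/N)$-term, and the $1/\sqrt{N}$ in the quartic terms are dictated precisely by the requirement that the closed system has constant coefficients after rescaling. Once this is in place, the remaining work is a routine (but tedious) integration of a linear ODE with hyperbolic eigenvalues.
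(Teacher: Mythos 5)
The paper does not prove Lemma \ref{lemma:b-con} itself---it is quoted directly from Propositions 2.2 and 2.4 of \cite{KRS}---so there is no in-paper proof to compare against. That said, your Duhamel/ODE strategy is exactly the route taken in \cite{KRS}: one differentiates $A_Y(s) = e^{s\sqrt{N}\phi_+(h)} Y e^{-s\sqrt{N}\phi_+(h)}$, observes that the adjoint action $\sqrt{N}\,[\phi_+(h),\cdot\,]$ maps the finitely many operators on the right-hand side of \eqref{eq:prop1-1}--\eqref{eq:prop1-2} back into their span, and then solves a linear system whose flow is a combination of $\cosh(s\|h\|)$ and $\sinh(s\|h\|)$.

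Two points deserve attention before the sketch would count as a complete proof. First, your computation of $[\phi_+(h), b_p]$ should yield $-h_p(1-\mathcal{N}_+/N) + N^{-1}a^*(h)a_p$ with the paper's convention $\phi_+(h)=\sum_q h_q(b_q^*+b_{-q})$; the $h_{-p}$ that appears throughout \eqref{eq:prop1-1} reflects the momentum-space pairing $a(h)=\sum_q h_q a_{-q}$ together with $\overline{h_p}=h_{-p}$ for real $h$, and one must be careful to keep this consistent when iterating the commutator, otherwise the closed system you construct will not match the statement. Second, the claim that the commutator genuinely closes on the stated six operators (and the analogous larger set for $\mathrm{d}\Gamma(H)$) is the substance of the proof, not merely ``bookkeeping'': the $1/N$-corrections in the modified CCR \eqref{eq:comm} produce operators such as $a^*(h)a(h)$ and $(1-\mathcal{N}_+/N)$ that do not arise in the standard Weyl-operator computation, and one must verify that no further terms are generated (e.g. by noting $b(h)=\sqrt{(N-\mathcal{N}_+)/N}\,a(h)$ so that $[\phi_+(h),a^*(h)]a(h)$ reduces back to $b$-type operators). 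Your proposal identifies the right structure, but the ``routine integration'' is only routine once this closure and the precise $\|h\|$-rescaling of each coefficient ODE have been carried out explicitly; the sketch does not yet do that.
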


A similar formula as \eqref{eq:prop1-1} for  $e^{\sqrt{N} \phi_+ (h)} b^*_p e^{-\sqrt{N} \phi_+(h)}$ follows when replacing $h$ with its negative $-h$ and taking the hermitian conjugate of \eqref{eq:prop1-1}. 

Furthermore the following closed formulas hold for the conjugation with respect to the exponential of the number of particles operator on the excitation Fock space $\mathcal{N}_+$.

\begin{lemma}[Proposition 2.5 \cite{KRS}]
\label{prop:eN}
Let $\cN_+$ be given by \eqref{def:N+} and $h \in \ell^2 ( \Lambda_+^+)$. Then for every  $s \in \bR$ we have with the short hand notation \eqref{eq:gamma} 
\begin{align}\label{eq:prop2} 
e^{-s \cN_+} b (h) e^{s \cN_+} &= e^s  b(h),  \notag\\
e^{-s \cN_+} b^* (h) e^{s \cN_+} &= e^{-s} b^* (h) , \notag \\
e^{-s \cN_+} \phi_+ (h) e^{s \cN_+} &= \gamma_s \phi_+ (h) + \sigma_s i \phi_- (h), \notag\\
e^{-s \cN_+} i \phi_- (h) e^{s \cN_+} &= \gamma_s i \phi_- (h) + \sigma_s \phi_+ (h) \; . 
\end{align}
\end{lemma}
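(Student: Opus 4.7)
The plan is to reduce everything to the fact that $b(h)$ and $b^*(h)$ are eigenvectors of the adjoint action of $\mathcal{N}_+$. First I would establish the basic commutation relations
\begin{align}
[\mathcal{N}_+, b(h)] = -b(h), \qquad [\mathcal{N}_+, b^*(h)] = b^*(h).
\end{align}
These follow directly from $[\mathcal{N}_+, a_p] = -a_p$ and $[\mathcal{N}_+, a_p^*] = a_p^*$ (which in turn come from the CCR \eqref{eq:CCR}), together with the fact that $\mathcal{N}_+$ commutes with itself and hence with the factor $\sqrt{(N-\mathcal{N}_+)/N}$ appearing in the definition \eqref{def:b} of $b_p, b_p^*$. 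Linearity in $h$ then yields the bracket relations for $b(h), b^*(h)$.

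With these in hand, the first two identities of \eqref{eq:prop2} follow from a one-line ODE argument: for any operator $X$ with $[\mathcal{N}_+, X] = c X$, the function $F(s) = e^{-s\mathcal{N}_+} X e^{s\mathcal{N}_+}$ satisfies $F'(s) = -c F(s)$ with $F(0) = X$, so $F(s) = e^{-cs} X$. Applying this with $c = -1$ to $X = b(h)$ gives $e^{-s\mathcal{N}_+} b(h) e^{s\mathcal{N}_+} = e^s b(h)$, and with $c = +1$ to $X = b^*(h)$ gives $e^{-s\mathcal{N}_+} b^*(h) e^{s\mathcal{N}_+} = e^{-s} b^*(h)$. (Well-definedness of the one-parameter group $e^{s\mathcal{N}_+}$ and the Duhamel-type differentiation are harmless on $\mathcal{F}_{\perp\varphi_0}^{\leq N}$, where $\mathcal{N}_+$ is bounded.)

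The remaining two identities then follow from a purely algebraic manipulation. Writing $\phi_+(h) = b^*(h) + b(h)$ and $i\phi_-(h) = b(h) - b^*(h)$, I would compute
\begin{align}
e^{-s\mathcal{N}_+} \phi_+(h) e^{s\mathcal{N}_+}
&= e^{-s} b^*(h) + e^s b(h) \notag \\
&= \tfrac{e^s+e^{-s}}{2}\bigl(b^*(h) + b(h)\bigr) + \tfrac{e^s-e^{-s}}{2}\bigl(b(h) - b^*(h)\bigr) \notag \\
&= \gamma_s\,\phi_+(h) + \sigma_s\, i\phi_-(h),
\end{align}
and analogously for $i\phi_-(h)$ by swapping the roles of the symmetric and antisymmetric combinations, producing $\gamma_s\, i\phi_-(h) + \sigma_s\, \phi_+(h)$.

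There is no real obstacle here; the only point that requires care is keeping the sign conventions straight between $\phi_+$ and $i\phi_-$ (in particular, remembering that $i\phi_-(h) = b(h) - b^*(h)$, with $b$ appearing with the positive sign), so that the hyperbolic rotation has the form stated. The entire content of the lemma is that $\mathrm{ad}_{\mathcal{N}_+}$ diagonalizes on the two-dimensional span of $\{b(h), b^*(h)\}$ with eigenvalues $\pm 1$, and therefore $e^{-s\mathcal{N}_+}(\cdot)e^{s\mathcal{N}_+}$ acts as a hyperbolic rotation on the real span of $\{\phi_+(h), i\phi_-(h)\}$.
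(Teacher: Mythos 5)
Your proof is correct, and it is the standard argument for such identities: since the paper cites this as Proposition~2.5 of \cite{KRS} without reproducing the proof, there is nothing to compare against in this manuscript, but the approach — computing $[\mathcal{N}_+, b(h)] = -b(h)$, $[\mathcal{N}_+, b^*(h)] = b^*(h)$ from the CCR (noting that the factor $\sqrt{(N-\mathcal{N}_+)/N}$ commutes with $\mathcal{N}_+$), solving the resulting first-order ODEs for $e^{-s\mathcal{N}_+} b^{(\ast)}(h) e^{s\mathcal{N}_+}$, and then taking the symmetric and antisymmetric combinations to obtain the hyperbolic rotation on $\mathrm{span}\{\phi_+(h), i\phi_-(h)\}$ — is precisely the intended one. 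The sign bookkeeping in your final step checks out: $e^{-s}b^*(h)+e^{s}b(h) = \gamma_s(b^*(h)+b(h)) + \sigma_s(b(h)-b^*(h))$.
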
 
 
Moreover we shall use the following Lemma proven in \cite{KRS}. 

\begin{lemma}[Proposition 2.6 \cite{KRS}] 
\label{prop:partialt} 
Let $h_\cdot : \mathbb{R} \rightarrow \ell^2( \Lambda_+^*), \;  t \mapsto h_t$ be a differentiable. For $\xi_1, \xi_2 \in \cF_{\perp \ph_0}^{\leq N}$ we find with the short hand notation \eqref{eq:gamma} 
\begin{align}\label{eq:partialt} 
&\Big\langle \xi_1 , \Big[ \partial_t e^{\sqrt{N} \phi_+ (h_t)} \Big] e^{-\sqrt{N} \phi_+ (h_t)} \xi_2 \Big\rangle \notag \\ 
&\quad = \; \sqrt{N} \, \frac{\sigma_{\| h_t \|}}{\| h_t \|}  \langle \xi_1 , \phi_+ ( \partial_t h_t) \xi_2 \rangle - 
\sqrt{N} \frac{\sigma_{\| h_t \|}}{\| h_t \|} \frac{\gamma_{\| h_t \|} -1}{\| h_t \|^2} \Im \langle \partial_t h_t , h_t \rangle  \langle \xi_1, \phi_- (h_t) \xi_2 \rangle 
\notag \\ 
&\quad \quad- \sqrt{N} \, \frac{\sigma_{\| h_t \|} - \| h_t \|}{\| h_t \|^3} \Re \langle \partial_t h_t , h_t \rangle \langle \xi_1, \phi_+ (h_t) \xi_2 \rangle - i N \frac{\sigma_{\| h_t \|}^2}{\| h_t \|^2} \Im \langle \partial_t h_t , h_t \rangle  \langle \xi_1 , (1 - \cN_+ / N) \xi_2 \rangle \notag \\ 
&\quad \quad +i  \left(\frac{\gamma_{\| h_t \|} - 1}{\| h_t \|^2} \right)^2 \Im \langle \partial_t h_t , h_t \rangle \langle \xi_1 , a^* (h_t ) a(h_t) \xi_2 \rangle \notag \\ 
&\quad\quad + 
 \frac{\gamma_{\| h_t \|} - 1}{\| h_t \|^2} \Big\langle \xi_1, \left[ a^* (h_t) a (\partial_t h_t) - a^* (\partial_t h_t) a (h_t) \right] \xi_2 \Big\rangle \; .
 \end{align}
\end{lemma}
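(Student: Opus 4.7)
The plan is to combine Duhamel's formula with Lemma~\ref{lemma:b-con}. Since $\phi_+$ is $\mathbb{R}$-linear in its argument, and in particular $e^{s\sqrt{N}\phi_+(h_t)} = e^{\sqrt{N}\phi_+(sh_t)}$, one has the exact identity
\begin{align*}
\bigl[\partial_t e^{\sqrt{N}\phi_+(h_t)}\bigr] e^{-\sqrt{N}\phi_+(h_t)} = \int_0^1 e^{\sqrt{N}\phi_+(sh_t)}\, \sqrt{N}\,\phi_+(\partial_t h_t)\, e^{-\sqrt{N}\phi_+(sh_t)}\, ds,
\end{align*}
which reduces the matter to conjugating $\phi_+(\partial_t h_t)$ by the one-parameter family $e^{\sqrt{N}\phi_+(sh_t)}$ and integrating in $s \in [0,1]$.

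I would then apply \eqref{eq:prop1-1} of Lemma~\ref{lemma:b-con} with the replacement $h \mapsto sh_t$, together with its hermitian-conjugate version for $b_p^*$, term by term to the expansion $\phi_+(\partial_t h_t) = \sum_p \bigl[\overline{(\partial_t h_t)_p}\, b_p + (\partial_t h_t)_p\, b_p^*\bigr]$. After contracting with these coefficients and summing on $p$, the $s$-dependent integrand becomes a finite linear combination of the operators $\phi_+(\partial_t h_t)$, $\phi_+(h_t)$, $i\phi_-(h_t)$, $1-\cN_+/N$, $a^*(h_t)a(h_t)$, and the mixed expression $a^*(h_t)a(\partial_t h_t) - a^*(\partial_t h_t)a(h_t)$, with coefficients that are rational in $\gamma_{s\|h_t\|}$, $\sigma_{s\|h_t\|}$, $\|h_t\|$, and either constant or proportional to $\langle \partial_t h_t, h_t\rangle$. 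The explicit powers of $s$ produced by the substitution $h \mapsto sh_t$ cancel exactly against the powers of $\|h\|$ in the denominators of \eqref{eq:prop1-1}, so only the hyperbolic factors carry a non-trivial $s$-dependence.

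The final step is to carry out the $s$-integration using elementary identities such as $\int_0^1 \gamma_{s\rho}\,ds = \sigma_\rho/\rho$, $\int_0^1 \sigma_{s\rho}\,ds = (\gamma_\rho-1)/\rho$, and $\int_0^1 \sigma_{s\rho}^2/\rho^2\,ds = (\sigma_\rho\gamma_\rho - \rho)/(2\rho^3)$, which generate exactly the prefactors appearing on the right-hand side of \eqref{eq:partialt}. The main obstacle I anticipate is bookkeeping: since \eqref{eq:prop1-1} already contains six summands, after taking the adjoint and contracting with the coefficients of $\partial_t h_t$ many cross terms appear, and they must be reorganized so that the symmetric combinations produce $\Re\langle \partial_t h_t, h_t\rangle$ while the antisymmetric combinations produce $\Im\langle \partial_t h_t, h_t\rangle$. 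Once one verifies that the $\Re$-parts multiply $\phi_+(h_t)$ and the $\Im$-parts multiply the phase-type operators $\phi_-(h_t)$, $1-\cN_+/N$, and $a^*(h_t)a(h_t)$, the matching with \eqref{eq:partialt} becomes a mechanical comparison of coefficients.
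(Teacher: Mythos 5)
This lemma is imported from \cite{KRS} without proof, so there is no argument in the present paper to compare against; one can only assess your proposal on its own terms, and it is sound. The Duhamel identity $[\partial_t e^{A(t)}]e^{-A(t)} = \int_0^1 e^{sA(t)}[\partial_t A(t)]e^{-sA(t)}\,ds$ together with $e^{s\sqrt{N}\phi_+(h_t)} = e^{\sqrt{N}\phi_+(sh_t)}$ (valid since $\phi_+$ is $\mathbb{R}$-linear and bounded on $\mathcal{F}_{\perp\varphi_0}^{\leq N}$) correctly reduces the problem to conjugating $\phi_+(\partial_t h_t)$ by $e^{\sqrt{N}\phi_+(sh_t)}$, and Lemma~\ref{lemma:b-con} with $h\mapsto sh_t$ is exactly the right tool. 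Your structural observation that all explicit powers of $s$ cancel against the $\|h\|$-powers in the denominators of \eqref{eq:prop1-1} checks out term by term: for instance in the second summand of \eqref{eq:prop1-1} the replacement produces $s^{-2}$ from $\|h\|^{-2}$ against $s$ from $h_{-p}$ and $s$ from $\phi_-(h)$. The elementary integrals you list are correct, and indeed
$\int_0^1 (\gamma_{s\rho}-1)\rho^{-2}\,ds = (\sigma_\rho-\rho)/\rho^3$,
$\int_0^1 \gamma_{s\rho}\sigma_{s\rho}\rho^{-1}\,ds = \sigma_\rho^2/(2\rho^2)$, and
$\int_0^1 \sigma_{s\rho}(\gamma_{s\rho}-1)\rho^{-3}\,ds = (\gamma_\rho-1)^2/(2\rho^4)$ (using $\sigma_\rho^2-2(\gamma_\rho-1)=(\gamma_\rho-1)^2$)
reproduce the third, fourth, and fifth prefactors of \eqref{eq:partialt} once you account for the factor of two coming from the two halves $b^*(\partial_t h_t)$ and $b(\partial_t h_t)$ of $\phi_+(\partial_t h_t)$. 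Your organizing principle for extracting $\Re\langle\partial_t h_t,h_t\rangle$ and $\Im\langle\partial_t h_t,h_t\rangle$ from the symmetric and antisymmetric combinations of the scalar contractions is also the right one. No gaps; this is the standard proof.
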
 

In the proof of the main theorem we consider operators conjugated w.r.t. to both exponentials $e^{\lambda \sqrt{N} \phi_+ (h)} e^{\lambda \cN_+}$ where the parameter $\lambda \in [0,1]$ is considered to be small. The previous Lemma yield in the following Corollary for the first order contributions. 

\begin{corollary}
\label{lemma:bb-conj}
Let $H \in \ell^2 ( \Lambda_+^*)$, $g \in \ell^2( \Lambda_+^*)$ and $\vert\lambda\vert, \vert \lambda \kappa \vert \leq 1$. Then, there exists $C>0$ (independent of $\kappa, \lambda$) such that 
\begin{align}
\|  \; \sum_{p \Lambda_+^*} H_{p}   & \left( e^{\kappa \lambda \cN_+} e^{\lambda \sqrt{N} \phi_+ (g)}  b_p^{\sharp_1} b_{\alpha p}^{\sharp_2}  e^{-\lambda \sqrt{N} \phi_+ (g)} e^{-\kappa \lambda \cN_+}  - b_p^{\sharp_1} b_{\alpha p}^{\sharp_2} \right)  \psi \|  \notag \\
&\leq C (\| g \|+ \vert \kappa\vert )  \vert \lambda \vert \sqrt{N} \| ( \mathcal{N}_+ +1)^{1/2} \psi \| + C \| g \|^2 \lambda^2 N \| \psi \| \label{eq:formula1}   
\end{align}
with $\sharp_i \in \lbrace  \cdot, * \rbrace $ for $i = 1,2$, either $\sharp_1 = \sharp_2$ and $\alpha = -1$ or $\sharp_1 = *, \sharp_2 = \cdot$ and  $\alpha =  1 $ otherwise. 
Furthermore, for $p^2 g \in \ell^2( \Lambda_+^*)$ there exists $C>0$ (independent of $\kappa,\lambda$) such that 
\begin{align}
\|  \sum_{p \Lambda_+^*} p^2 &   \left( e^{\kappa \lambda \cN_+} e^{\lambda \sqrt{N} \phi_+ (g)}  a_p^{*} a_{p}  e^{-\lambda \sqrt{N} \phi_+ (g)} e^{-\kappa \lambda \cN_+}  - a_p^* a_p \right) \psi \|  \notag \\
& \leq C \| p^2g \| \vert\lambda\vert \sqrt{N} \| ( \mathcal{N}_+ +1)^{1/2} \psi \| + C \| p^2g \|^2 \lambda^2 N \| \psi \|  \; . \label{eq:formula2}
\end{align}

\end{corollary}

\begin{proof}
We consider the case $\sharp_1 = *, \sharp_2 = \cdot$ and $\alpha =1$ first. 
We recall that by definition of the modified creation and annihilation operators in \eqref{def:b} we have 
\begin{align}
\sum_{p \in \Lambda_+}H_p \;  b_p^*b_{p} = \frac{N - \mathcal{N}_+ +1}{N} \sum_{p \in \Lambda_+} H_{p} \;  a_p^*a_p = \frac{N - \mathcal{N}_+ + 1}{N} \;  d \Gamma (H) 
\end{align}
and thus 
\begin{align}
\sum_{p \in \Lambda_+} H_{p}  & \; \left( e^{\kappa \lambda \cN_+} e^{\lambda \sqrt{N}\phi_+ (g)}  b_p^*b_{p} e^{ - \lambda \sqrt{N} \phi_+ (g)} e^{- \lambda \kappa \cN_+} - b_p^* b_{p}  \right) \notag \notag \\
=& \frac{N- \mathcal{N}_+ + 1}{N} \; \left(  e^{\kappa \lambda \cN_+} e^{\lambda \sqrt{N}\phi_+ (g)}    d \Gamma (H) e^{ - \lambda \sqrt{N} \phi_+ (g)} e^{- \lambda \kappa \cN_+} - d \Gamma (H) \right) \notag \\
&-  \left( e^{\kappa \lambda \cN_+} e^{\lambda \sqrt{N}\phi_+ (g)}  \frac{\mathcal{N}_+ }{N} e^{ - \lambda \sqrt{N} \phi_+ (g)} e^{- \lambda \kappa \cN_+} - \frac{\cN_+}{N} \right) d \Gamma (H) \; . 
\end{align}
From Lemma \ref{lemma:b-con} we find that 
\begin{align}
\sum_{p \in \Lambda_+} &  H_{p}   \; \left( e^{\kappa \lambda \cN_+} e^{\lambda \sqrt{N}\phi_+ (g)}  b_p^*b_{p} e^{ - \lambda \sqrt{N} \phi_+ (g)} e^{- \lambda \kappa \cN_+} - b_p^* b_{-q}\right) \notag \\
=&\frac{N- \mathcal{N}_+ + 1}{N} \left[ \lambda \gamma_{\kappa \lambda} \sqrt{N} \frac{\sigma_{\lambda\| g \|}}{\lambda\| g \|} i \phi_- (Hg)  + \lambda \sigma_{\kappa \lambda} \sqrt{N} \frac{\sigma_{\lambda \| g \|}}{\lambda \| g \|}  \phi_+ (Hg) \right.  \notag \\
 &\hspace{0.8cm}  - N  \lambda^2 \frac{\sigma_{\lambda\| g \|}^2}{(\lambda\|g \|)^2} \langle g, H g \rangle \left(1 - \frac{\N}{N} \right) + \lambda^2 \frac{(\gamma_{\lambda\| g \|} -1)}{(\lambda\| g \|)^2} (a^* (g) a (Hg) + a^* (Hg) a (g) ) 
\notag \\ 
&\hspace{0.8cm} + \sqrt{N} \lambda^3 \gamma_{\kappa \lambda} \frac{\sigma_{\lambda\| g \|}}{\lambda\| g \|} \frac{\gamma_{\lambda\| g \|} - 1}{(\lambda\| g \|)^2} \langle g, H g \rangle i \phi_- (g)  + \lambda^3 \sqrt{N} \, \sigma_{\kappa \lambda} \frac{\sigma_{\lambda\| g \|}}{\lambda\| g \|} \frac{\gamma_{\lambda\| g \|} - 1}{(\lambda\| g \|)^2} \langle g, H g \rangle \phi_+ (g)  \notag \\
& \hspace{0.8cm} \left.  +\lambda^4 \left( \frac{\gamma_{\lambda\| g \|} - 1}{(\lambda\| g\|)^2} \right)^2  \langle g, H g \rangle a^* (g) a (g)  \right] \; \notag \\
&+ \frac{1}{N}\left[  \lambda\gamma_{\kappa \lambda} \frac{\sigma_{\lambda\| g \|}}{\lambda\| g \|} i \phi_- (g)  + \lambda\sigma_{\kappa \lambda} \frac{\sigma_{\lambda\| g \|}}{\lambda\| g \|}  \phi_+ (g) \right.  \notag \\
 &\hspace{0.8cm} - N \lambda^2 \frac{\sigma_{\lambda\| g \|}^2}{(\lambda\|g \|)^2} \| g \|^2 \left(1 - \frac{\N}{N} \right) +  2 \lambda^2 \frac{(\gamma_{\lambda\| g \|} -1)}{(\lambda\| g \|)^2} a^* (g) a (g) 
\notag \\ 
&\hspace{0.8cm} + \sqrt{N} \lambda^3 \, \gamma_{\kappa \lambda} \frac{\sigma_{\lambda\| g \|}}{\lambda\| g \|} \frac{\gamma_{\lambda\| g \|} - 1}{(\lambda\| g \|)^2} \| g \|^2 i \phi_- (g)  +  \sqrt{N}\lambda^3 \, \sigma_{\kappa \lambda} \frac{\sigma_{\lambda\| g \|}}{\lambda\| g \|} \frac{\gamma_{\lambda\| g \|} - 1}{(\lambda\| g \|)^2} \|g \|^2 \phi_+ (g)  \notag \\
& \hspace{0.8cm} \left. + \lambda^4 \left( \frac{\gamma_{\lambda\| g \|} - 1}{(\lambda\| g\|)^2} \right)^2  \| g \|^2 a^* (g) a (g)  \right] d \Gamma (H) \; . \label{eq:dGamma-comm}
\end{align}
Since $g, Hg \in \ell^2( \Lambda_+^*)$ we find with \eqref{eq:bounds-a},\eqref{eq:bounds-b} for any $\psi \in \mathcal{F}_{\perp \varphi}^{\leq N}$ and $\vert\lambda \vert < 1$ that 
\begin{align}
\| \sum_{p \in \Lambda_+}    H_{p} &    \; \left( e^{\kappa \lambda \cN_+} e^{\lambda \sqrt{N}\phi_+ (g)}  b_p^*b_{p} e^{ - \lambda \sqrt{N} \phi_+ (g)} e^{- \lambda \kappa \cN_+} - b_p^* b_{-q}\right) \psi \| \notag \\
&\leq C \lambda \| g \| \sqrt{N} \| ( \mathcal{N}_+ +1)^{1/2} \psi \| + C \lambda^2 \|g \|^2 N \| \psi \|   \; . \label{eq:bound-1-bb*}
\end{align}
The remaining cases for \eqref{eq:formula1} (i.e. $\sharp_1 = \sharp_2$) follow similarly from Lemma \ref{lemma:b-con}. We have 
\begin{align}
\| \sum_{p \in \Lambda_+}    H_{p} &    \; \left( e^{\kappa \lambda \cN_+} e^{\lambda \sqrt{N}\phi_+ (g)}  b_p^*b^*_{-p} e^{ - \lambda \sqrt{N} \phi_+ (g)} e^{- \lambda \kappa \cN_+} - b_p^* b_{-q}\right) \psi \| \notag \\
&\leq C \lambda  (\| g \|  +  \kappa) \sqrt{N} \| ( \mathcal{N}_+ +1)^{1/2} \psi \| + C \lambda^2 \|g \|^2 N \| \psi \|   \; .
\end{align}
Note that the bound linear in $\lambda$ depends on $\kappa$ (in contrast to \eqref{eq:bound-1-bb*}) as $b_p^*b_{-p}^*$ does not commute with $e^{\lambda \kappa \cN_+}$. 

The second estimate \eqref{eq:formula2} follows with similar arguments from Lemma \ref{lemma:b-con} since 
\begin{align}
\sum_{p \in \Lambda_+} &  p^2   \; \left( e^{\kappa \lambda \cN_+} e^{\lambda \sqrt{N}\phi_+ (g)}  a_p^*a_{p} e^{ - \lambda \sqrt{N} \phi_+ (g)} e^{- \lambda \kappa \cN_+} - a_p^* a_{p}\right) \notag \\
=&\frac{N- \mathcal{N}_+ + 1}{N} \left[ \lambda \gamma_{\kappa \lambda} \sqrt{N} \frac{\sigma_{\lambda\| g \|}}{\lambda\| g \|} i \phi_- (\widetilde{g})  + \lambda \sigma_{\kappa \lambda} \sqrt{N} \frac{\sigma_{\lambda\| g \|}}{\lambda\| g \|}  \phi_+ (\tilde{g}) \right.  \notag \\
 &\hspace{2.2cm}  - N  \lambda^2 \frac{\sigma_{\lambda\| g \|}^2}{(\lambda\|g \|)^2} \langle g, H g \rangle \left(1 - \frac{\N}{N} \right) + \lambda^2 \frac{(\gamma_{\lambda\| g \|} -1)}{(\lambda\| g \|)^2} (a^* (g) a (\tilde{g}) + a^* (\tilde{g}) a (g) ) 
\notag \\ 
&\hspace{2.2cm} + \sqrt{N} \lambda^3 \gamma_{\kappa \lambda} \frac{\sigma_{\| g \|}}{\| g \|} \frac{\gamma_{\lambda\| g \|} - 1}{(\lambda\| g \|)^2} \langle g,  \tilde{g} \rangle i \phi_- (g)  + \lambda^3 \sqrt{N} \, \sigma_{\kappa \lambda} \frac{\sigma_{\lambda\| g \|}}{\lambda\| g \|} \frac{\gamma_{\lambda\| g \|} - 1}{(\lambda\| g \|)^2} \langle g,  \tilde{g} \rangle \phi_+ (g)  \notag \\
& \hspace{2.2cm} \left.  +\lambda^4 \left( \frac{\gamma_{\lambda\| g \|} - 1}{(\lambda\| g\|)^2} \right)^2  \langle g, H g \rangle a^* (g) a (g)  \right] \; \notag \\
\end{align}
where we denoted $\tilde{g}(p) = p^2 g (p)$. Since $\tilde{g} \in \ell^2( \Lambda_+^*)$ by assumption, \eqref{eq:formula2} follows with similar arguments. 
\end{proof}

For our analysis we need to improve those bounds and prove similar bounds for the conjugated operators
 \begin{align}
e^{\lambda \kappa \mathcal{N}_+} e^{\lambda \sqrt{N} \phi_+ (h)}\; d_p^{\sharp} \; e^{-\lambda \sqrt{N} \phi_+ (h)}e^{-\lambda \kappa \mathcal{N}_+}  \label{eq:estimates-dp-start}
 \end{align}
with $\sharp \in \lbrace \cdot, * \rbrace$ and $\kappa, \lambda \in \mathbb{R}$. For this, we are using closed formulas derived in \cite[Proposition 2.3-2.6]{KRS} and properties of $d_p,d_p^*$ from \cite{BS,BBCS_Acta} based on the expansion for any $p \in \Lambda_+^*$
\begin{align}
e^{-B(\mu)} b_p e^{B( \mu)} =& \sum_{n=1}^{m-1} (-1)^n \frac{{\rm ad}_{B(\mu)}^{(n)} (b_p)}{n!} \notag \\
 &+ \int_0^1 ds_1 \int_0^{s_1}  ds_2\dots \int_0^{s_{m-1}} ds_m e^{-s_m B( \mu)}{\rm ad}_{B( \mu)}^{(m)} (b_p) e^{s_m B( \mu)}
\end{align}
 with the recursive definition for the nested commutators 
\begin{align}
{\rm ad}_{B ( \mu)}^{(0)} (A) = A \quad \text{and} \quad {\rm ad}_{B( \mu)}^{(n)} = \left[ B( \mu), {\rm ad}_{B (\mu)}^{(n-1)} (A) \right] \; . 
\end{align}
In \cite{BS} it is shown that the nested commutators of $b_p,b_p^*$ are given in terms of the following operators: For $f_1, \dots , f_n \in \ell^2( \Lambda_+^*)$, $\sharp = ( \sharp_1, \dots, \sharp_n), \flat = (\flat_0, \dots, \flat_{n-1}) \in \lbrace \cdot, * \rbrace^n $ we define the $\Pi^{(2)}$-operator of order $n $ by 
\begin{align}
\label{def:Pi2}
\Pi_{\sharp, \flat}^{(2)} (f_1, \dots, f_n ) = \sum_{p_1, \dots, p_n \in \Lambda_+^*} b_{\alpha_0 p_1}^{\flat_0} a_{\beta_1 p_1}^{\sharp_1} a_{\alpha_1 p_2}^{\flat_1} a_{\beta_2 p_2}^{\sharp_2} a_{\alpha_2 p_3}^{\flat_2} \dots a_{\beta_{n-1} p_{n-1}}^{\sharp_{n-1}} a_{\alpha_{n-1} p_n}^{\flat_{n-1}} b_{\beta_n p_n}^{\sharp_n} \prod_{\ell=1}^n f_\ell (p_\ell ) 
\end{align}
were for $\ell = 0,1, \dots, n $ we define $\alpha_\ell =1$ if $\flat_\ell =*$., $\alpha_\ell = -1$ if $\flat_\ell = \cdot$, $\beta_\ell =1$ if $\sharp_\ell =\cdot$ and $\beta_\ell = -1$ of $\sharp_\ell = *$. Moreover, we require that for every $j=1, \dots, n-1$ we have either $\sharp_j = \cdot$ and $\flat_j = *$ or $\sharp_j = *$ and $\flat_j = \cdot$ (so that the product $a_{\beta_\ell p_\ell}^{\sharp_\ell} a_{\alpha_\ell p_{\ell + 1}}^{\flat_\ell}$ preserves the number of particles for all $\ell =1, \dots, n-1$). Then, the operator $\Pi_{\sharp, \flat}^{(2)} (f_1, \dots, f_n ) $ leaves the truncated Fock space invariant. Moreover if for some $\ell =1, \dots, n$, $\flat_{\ell -1} = \cdot$ and $\sharp_\ell =*$, we furthermore require that $f_\ell \in \ell^1 ( \Lambda_+^*)$ (so that we can normal order the operators).  
For $g, f_1, \dots, f_n \in \ell^2( \Lambda_+^*)$, $\sharp = ( \sharp_1, \dots, \sharp_n)\in \lbrace \cdot, *\rbrace^n, \flat = ( \flat_0, \dots, \flat_n) \in \lbrace \cdot, * \rbrace^{n+1}$ we define a $\Pi^{(1)}$-operator of order $n$ by 
\begin{align}
&\Pi^{(1)}_{\sharp,\flat} (f_1, \dots, f_n; g) \notag \\
&\quad = \sum_{p_1, \dots, p_n \in \Lambda_+^*} b_{\alpha_0, p_1}^{\flat_0} a_{\beta_1 p_1}^{\sharp_1} a_{\alpha_1 p_2}^{\flat_1} a_{\beta_2p_2}^{\sharp_2}a_{\alpha_2 p_3}^{\flat_2} \dots a_{\beta_{n-1}p_{n-1}}^{\sharp_{n-1}} a_{\alpha_{n-1} p_n}^{\flat_{n-1}} a_{\beta_n p_n}^{\sharp_n} a^{\flat_n} (g) \prod_{\ell = 1}^n f_\ell (p_\ell)
\end{align}
where $\alpha_\ell$ and $\beta_\ell$ are defined as before. Also here, we require that for all $\ell= 1, \dots, n$ either $\sharp_\ell = \cdot$ and $\flat_\ell =*$ or $\sharp = *$ and $\flat_\ell = \cdot$. Then the operators $\Pi^{(1)}$ leave the truncated Fock space invariant, too. Furthermore we require that $f_\ell \in \ell^1 ( \Lambda_+^*)$ if $\flat_{\ell -1} = \cdot$ and $\sharp_\ell = *$ for some $\ell=1, \dots, n$. The following Lemma proven in \cite{BS} shows that nested commutators ${\rm ad}_{B(\mu)} (b_p)$ can be expressed in terms of $(N-\mathcal{N}_+)/N$, $( N - ( \mathcal{N}_+ - 1))/N$ and $\Pi^{(1)}$ resp. $\Pi^{(2)}$-operators. 

\begin{lemma}
\label{lemma:dp}
Let $\mu \in \ell^2( \Lambda_+^*)$ be such that $\mu_p = \mu_{-p}$ for all $p \in \Lambda_+^*$. To simplify the notation, assume also $\mu$ to be real valued. Let $B( \mu)$ be defined as in \eqref{def:Bogo-b}, $n \in \mathbb{N}$ and $p \in \Lambda_+^*$. Then the nested commutator ${\rm ad}_{B(\mu)}^{(n)} (b_p)$ can be written as the sum of exactly $2^n n!$ terms wit the following properties. 

\begin{enumerate}
\item[(i)] Possibly up to a sign, each term has the form 
\begin{align}
\label{eq:lemma(i)}
\Lambda_1 \Lambda_2 \dots \Lambda_i N^{-k} \Pi_{\sharp, \flat}^{(1)} ( \mu^{j_1}. \dots, \mu^{j_k}; \mu_p^s \varphi_{\alpha p} )
\end{align}
for some $i,k,s \in \mathbb{N}$, $j_1, \dots, j_k \in \mathbb{N} \setminus \lbrace 0 \rbrace$, $\sharp \in \lbrace \cdot, * \rbrace^k, \flat \in \lbrace \cdot, * \rbrace^{k+1}$ and $\alpha \in \lbrace \pm \rbrace$ chosen so that $\alpha =1$ if $\flat_k = \cdot$ and $\alpha =-1$ of $\flat_k = *$ (recall that $\varphi_p (x) = e^{-ip \cdot x}$). In \ref{eq:lemma(i)} each operator $\Lambda_w : \mathcal{F}^{\leq N} \rightarrow \mathcal{F}^{\leq N}$, $w=1, \dots, i$ is either a factor of $(N-\mathcal{N}_+)/N$, a factor $( N - ( \mathcal{N}_+ - 1))/N$ or an operator of the form 
\begin{align}
\label{eq:lemma(ii)}
N^{-h}\Pi_{\sharp',\flat'}^{(2)} ( \mu^{z_1}, \mu^{z_2}, \dots, \mu^{z_h} )
\end{align}
for some $h,z_1, \dots, z_h \in \mathbb{N} \setminus \lbrace 0 \rbrace, \sharp, \beta \in \lbrace \cdot, * \rbrace^h$. 
\item[(ii)]  If a term of the form \eqref{eq:lemma(i)} contains $m \in \mathbb{N}$ factors $(N-\mathcal{N}_+)/N$ or $( N- ( \mathcal{N}_+ +1))/N$ and $j \in \mathbb{N}$ factors of te form \eqref{eq:lemma(i)} with $\Pi^{(2)}$ operators pf order $h_1, \dots, h_j \in \mathbb{N} \setminus \lbrace 0 \rbrace$, then we have 
\begin{align}
m+ (h_1 + 1) + \dots + (h_j +1) + (k+1) = n+1 
\end{align}
\item[(iii)]  If a term of the form \eqref{eq:lemma(i)} contains (considering all $\Lambda$-operators and the $\Pi^{(1)}$-operator) the arguments $\mu^{i_1}, \dots, \mu^{i_m}$ and the factor $\mu_p^s$ for some $m,s \in \mathbb{N}$ and $i_1, \dots, i_m \in \mathbb{N} \setminus \lbrace 0 \rbrace$, then 
\begin{align}
i_1 + \dots + i_m + s =n \; . 
\end{align}
\item[(iv)]  There is exactly one term having the form \eqref{eq:lemma(i)} with $k=0$ and such that all $\Lambda$-operators are factors of $(N-\mathcal{N}_+)/N$ or of $( N+1-\mathcal{N})/N$. It is given by 
\begin{align}
\left( \frac{N-\mathcal{N}_+}{N}\right)^{n/2} \left( \frac{N+1-\mathcal{N}_+}{N} \right)^{n/2} \mu_p^n b_p 
\end{align}
if $n$ is even, and by 
\begin{align}
- \left( \frac{N-\mathcal{N}_+}{N}\right)^{(n+1)/2} \left( \frac{N+1-\mathcal{N}_+}{N} \right)^{(n-1)/2} \mu_p^n b_{-p}^* 
\end{align}
if $n$ is odd. 
\item[(v)]  If the $\Pi^{(1)}$-operator in \eqref{eq:lemma(i)} is of order $k \in \mathbb{N}\setminus \lbrace 0 \rbrace$, it has either the form 
\begin{align}
\sum_{p_1, \dots, p_k} b_{\alpha_0 p_1}^{\flat_0} \prod_{i=1}^{k-1} a_{\beta_ip_i}^{\sharp_i} a_{\alpha_i p_{i+1}}^{\flat_i} a_{-p_k}^* \mu_{p}^{2r} a_p \prod_{i=1}^k \mu_{p_i}^{j_i} 
\end{align}
or the form 
\begin{align}
\sum_{p_1, \dots, p_k} b_{\alpha_0p_1}^{\flat_0} \prod_{i=1}^{k-1} a^{\sharp_i}_{\beta_i p_i} a_{\alpha_i p_{i+1}}^{\flat_i} a_{p_k}\mu_p^{2r+1} a_p^* \prod_{i=1}^k
 \mu_{p_i}^{j_i}
 \end{align}
for some $r \in \mathbb{N}$, $j_1, \dots, j_k \in \mathbb{N} \setminus \lbrace 0 \rbrace$. If it is of order $k=0$, then it is either given by $\mu_p^{2r}b_p$ or by $\mu_p^{2r+1} b_{-p}^*$ for some $r\in \mathbb{N}$. 

\item[(vi)] For every non-normally ordered term of the form 
\begin{align}
\sum_{q \in \Lambda^*} \mu_q^i a_q a_q^* , \quad \sum_{q \in \Lambda^*} \mu_q^i b_q a_q^*, \quad \sum_{q \in \Lambda^*} \mu_q^i a_q b_q^* \quad \text{or} \quad \sum_{q \in \Lambda^*} \mu_q^i b_q b_q^* 
\end{align}
appearing either in the $\Lambda$-operators or in the $\Pi^{(1)}$-operator in \eqref{eq:lemma(i)}, we have $i\geq 2$.
\end{enumerate}
\end{lemma}
As a consequence of Lemma \ref{lemma:dp}, for $\| \mu \| $ small enough we have 
\begin{align}
e^{-B( \mu)} b_p e^{B (\mu)} = \sum_{n=0}^\infty \frac{(-1)^n}{n!} {\rm ad}_{B ( \mu)}^{(n)} ( b_p ), \quad e^{-B( \mu)} b_p^* e^{B (\mu)} = \sum_{n=0}^\infty \frac{(-1)^n}{n!} {\rm ad}_{B ( \mu)}^{(n)} ( b_p^* ) 
\end{align}
and the series converge absolutely (see \cite[Lemma 3.3]{BBCS_optimal}). From this, we also get an explicitly define the remainder operators \eqref{def:d} by 
\begin{align}
\label{def:d-2}
d_p = \sum_{m \geq 0} \frac{1}{m!} \left[ {\rm ad}_{-B(\mu)}^{(m)} (b_p) - \mu_p^m b_{\alpha_m p}^{\sharp_m} \right], \quad d_p^*= \sum_{m \geq 0} \frac{1}{m!} \left[ {\rm ad}_{-B(\mu)}^{(m)} (b_p^*) - \mu_p^m b_{\alpha_m p}^{\sharp_{m+1}} \right]
\end{align}
where $p \in \Lambda_+^*$, $( \sharp_m, \alpha_m) = (\cdot, +1)$ if $m$ is even and $(\sharp_m, \alpha_m)= (*,-1)$ if $m$ is odd. This representation allows to prove the following improved error estimates on the remainder terms $d_p$ using Lemma \ref{lemma:dp} and Lemma \ref{lemma:b-con}. We start with the conjugation w.r.t. to $e^{\lambda \cN_+}$ first. 

\begin{lemma}
\label{lemma:eN-dp}
Under the same assumptions and notations of Lemma \ref{lemma:dp}, for $\vert \lambda \vert \leq 1$ and sufficiently small $\| \mu \| $, there exists $C>0$ (independent of $\lambda, \kappa$) such that 
\begin{align}
\label{eq:eN-dp}
\|\left( e^{\lambda \cN_+}d_p e^{-\lambda \cN_+} - d_p \right)\psi \| \leq C \vert  \lambda\vert    N^{-1} \left(  \| b_p ( \mathcal{N}_+ +1) \psi \| + \vert\mu_p \vert \| ( \mathcal{N}_+ +1)^{3/2} ) \psi \| \right) 
\end{align}
and 
\begin{align}
\label{eq:eN-dpstar}
\|\left( e^{\lambda \cN_+}d_p^* e^{-\lambda \cN_+} - d_p^* \right)\psi \| \leq C \vert \lambda\vert  N^{-1} \|  ( \mathcal{N}_+ +1)^{3/2} \psi \| \; . 
\end{align}
\end{lemma}

\begin{proof}
From \eqref{def:d-2} we find that 
\begin{align}
\| &\left( e^{\lambda \cN_+}d_p e^{-\lambda \cN_+} - d_p \right)\psi \| \notag \\
&\leq  \sum_{m \geq 0} \frac{1}{m!}  \Big\|\left( e^{\lambda \cN_+}\left[ {\rm ad}_{-B(\mu)}^{(m)} (b_p) - \mu_p^m b_{\alpha_m p}^{\sharp_m} \right] e^{-\lambda \cN_+}- \left[ {\rm ad}_{-B(\mu)}^{(m)} (b_p) - \mu_p^m b_{\alpha_m p}^{\sharp_m} \right] \right) \psi \Big\|  \label{eq:term10}
\end{align}
and by Lemma \ref{lemma:dp} the difference 
\begin{align}
e^{\lambda \cN_+}\left[ {\rm ad}_{-B(\mu)}^{(m)} (b_p) - \mu_p^m b_{\alpha_m p}^{\sharp_m} \right] e^{-\lambda \cN_+}- \left[ {\rm ad}_{-B(\mu)}^{(m)} (b_p) - \mu_p^m b_{\alpha_m p}^{\sharp_m} \right]
\end{align}
is the sum of one term of the form 
\begin{align}
A_p =  e^{\lambda \cN_+} & \left( \frac{N-\cN_+}{N}\right)^{\frac{m+(1-\alpha_m)/2}{2}} \left( \frac{N+1-\cN_+}{N}\right)^{\frac{m+(1+\alpha_m)/2}{2}} \mu_p b_{\alpha_m p}^{\sharp_m}  e^{-\lambda \cN_+} \notag \\
 &- \left( \frac{N-\cN_+}{N}\right)^{\frac{m+(1-\alpha_m)/2}{2}} \left( \frac{N+1-\cN_+}{N}\right)^{\frac{m+(1+\alpha_m)/2}{2}} \mu_p b_{\alpha_m p}^{\sharp_m} \label{eq:term11}
\end{align}
and $2^m m! - 1$ terms are of the form 
 \begin{align}
B_p =  e^{\kappa \lambda \cN_+}  & \Lambda_1 \dots \Lambda_{i_1} N^{-k} \Pi_{\sharp, \flat}^{(1)} ( \mu^{j_1}, \dots, \mu^{j_{k_1}}; \mu_p^{\ell_1} \varphi_{\alpha_{\ell_1} p} g_p ) e^{- \lambda \kappa\cN_+ } \notag \\
 &- \Lambda_1 \dots \Lambda_{i_1} N^{-k} \Pi_{\sharp, \flat}^{(1)} ( \mu^{j_1}, \dots, \mu^{j_{k_1}}; \mu_p^{\ell_1} \varphi_{\alpha_{\ell_1} p} ) \label{eq:term12}
\end{align} 
where $i_1, k_1, \ell_1 \in \mathbb{N}, j_1, \dots, j_k \in \mathbb{N} \setminus \lbrace 0 \rbrace$ and where each operator $\Lambda_r$ is either a factor $( N- \cN_+)/N$, a factor $(N+1-\cN_+)/N$  or a $\Pi^{(2)}$ operator of the form 
\begin{align}
\label{eq:term13}
N^{-h}\Pi^{(2)}_{\sharp,\flat} ( \mu^{z_1}, \dots, \mu^{z_h} ) 
\end{align}
with $h, z_1, \dots, z_h \in \mathbb{N} \setminus \lbrace 0 \rbrace$. We consider \eqref{eq:term11} and \eqref{eq:term12} separately, thus each term that is of the form \eqref{eq:term11} either has $k_1 >0$ or contains at least one operator of the form \eqref{eq:term13}. We start with estimating \eqref{eq:term11} first that vanishes for $m=0$. Furthermore we have from Lemma \ref{lemma:b-con} 
\begin{align}
 & \|  A_p \psi \| \notag \\
 &=  \Big\| \left( \frac{N-\cN_+}{N}\right)^{\frac{m+(1-\alpha_m)/2}{2}} \left( \frac{N+1-\cN_+}{N}\right)^{\frac{m+(1+\alpha_m)/2}{2}} \mu_p^m \left(   e^{\lambda \cN_+}  b_{\alpha_m p}^{\sharp_m} e^{-\lambda \cN_+} -  b_{\alpha_m p}^{\sharp_m}\right) \psi \| \notag \\ 
&\leq  \kappa \lambda C^m \vert \mu_p \vert^m N^{-1}\|( \mathcal{N}_+ + 1)^{3/2} \psi \| \; . \label{eq:term14}
\end{align}
For \eqref{eq:term12} we find 
\begin{align}
B_p =& \sum_{u=1}^i \left(  \prod_{t=1}^{u-1} e^{\lambda \cN_+}  \Lambda_t e^{-\lambda \cN_+} \right) \left( e^{\lambda \cN_+}  \Lambda_u e^{-\lambda \cN_+} - \Lambda_u \right) \prod_{t=u+1}^{i}   \Lambda_t  \notag \\
& \quad \times N^{-k} \Pi_{\sharp, \flat}^{(1)} ( \mu^{j_1}, \dots, \mu^{j_{k_1}}; \mu_p^{\ell_1} \varphi_{\alpha_{\ell_1} p} ) \notag \\
+& \left( \prod_{t=1}^{i}   \Lambda_t \right)   N^{-k} \notag \\
& \quad \times \left(  e^{\lambda \cN_+}\Pi_{\sharp, \flat}^{(1)} ( \mu^{j_1}, \dots, \mu^{j_{k_1}}; \mu_p^{\ell_1} \varphi_{\alpha_{\ell_1} p} ) e^{-\lambda \cN_+} - \Pi_{\sharp, \flat}^{(1)} ( \mu^{j_1}, \dots, \mu^{j_{k_1}}; \mu_p^{\ell_1} \varphi_{\alpha_{\ell_1} p} )\right) \;.
\end{align}
In case $\Lambda_u$ is of the form $( N- \mathcal{N}_+)/N $ or $( N +1- \mathcal{N}_+)/N$ then $e^{\lambda \cN_+}  \Lambda_u e^{-\lambda \cN_+} - \Lambda_u $ vanishes. Otherwise, if $\Lambda_u$ is an operator of the form $\Pi^{(2)}$ it creates resp. annihilates two particles, thus, in this case it follows from Proposition \ref{prop:eN} that $e^{\lambda \cN_+}  \Lambda_u e^{-\lambda \cN_+} - \Lambda_u =  ( e^{\lambda \kappa_u} -1)\Lambda_u$ with $\kappa_u =2$ or $\kappa_u = -2$. Similarly, as the operator $\Pi^{(1)}$ creates or annihilates one particle, we have 
\begin{align}
\Pi_{\sharp, \flat}^{(1)}  &( \mu^{j_1}, \dots, \mu^{j_1}, \dots, \mu^{j_{k_1}}; \mu_p^{\ell_1} \varphi_{\alpha_{\ell_1} p} ) e^{-\lambda \cN_+} - \Pi_{\sharp, \flat}^{(1)} ( \mu^{j_1}, \dots, \mu^{j_{k_1}}; \mu_p^{\ell_1} \varphi_{\alpha_{\ell_1} p} ) \notag \\
=&(e^{\lambda \kappa} - 1) \Pi_{\sharp, \flat}^{(1)} ( \mu^{j_1}, \dots,   \mu^{j_1}, \dots, \mu^{j_{k_1}}; \mu_p^{\ell_1} \varphi_{\alpha_{\ell_1} p} ) 
\end{align}
with $\kappa = 1$ or $\kappa=-1$. Therefore we find 
\begin{align}
\Big\| B_p \psi \Big\| \leq& \left(\sum_{u=1}^i (e^{\lambda\kappa_u} -1)+ (e^{\lambda\kappa}-1) \right) \|  \prod_{t=1}^{i}   \Lambda_t N^{-k} \Pi_{\sharp, \flat}^{(1)} ( \mu^{j_1}, \dots, \mu^{j_{k_1}}; \mu_p^{\ell_1} \varphi_{\alpha_{\ell_1} p} ) \psi \| \; .
\end{align}
We consider the case $\ell_1 =0$ and $\ell_1 >0$ separately (see for example \cite[Lemma 3.4]{BBCS_optimal} resp. \cite[Section 5]{BS}) and arrive with $\vert \mu_p \vert \leq \| \mu \|$ at 
\begin{align}
\Big\| B_p \psi \Big\| \leq & \vert \lambda\vert  C^m N^{-1} \left( \| \mu \|^{m-\ell_1} \vert \mu_p \vert^{\ell_1} \delta_{\ell_1 >0} \| ( \mathcal{N}_+ +1)^{3/2} \psi \| + \| \mu \|^m \| b_p ( \mathcal{N}_+ +1) \xi \| \right) \notag \\
\leq& \vert \lambda \vert  C^m N^{-1} \| \mu\|^{m-1}\left(\vert \mu_p \vert \delta_{m>0}  \| ( \mathcal{N}_+ +1)^{3/2} \psi \| + \| \mu \| \| b_p ( \mathcal{N}_+ +1) \xi \| \right) . \label{eq:term15}
\end{align}
We plug \eqref{eq:term14} and \eqref{eq:term15} into \eqref{eq:term10} and conclude for sufficiently small $\|\mu \|$ at \eqref{eq:eN-dp}. The second bound \eqref{eq:eN-dpstar} follows similarly using that in the case $\ell_1 =0$ we only have $\| b_p^* ( \cN_+ + 1) \psi \| \leq \| ( \cN_+ + 1)^{3/2} \psi \|$. 
\end{proof}

Next we prove similar estimates for the conjugation of $d_p,d_p^*$ with the two exponentials $ e^{\kappa \lambda \cN_+} e^{\lambda \sqrt{N} \phi_+ (g)} $. To this end we first prove the following auxiliary estimates.

\begin{lemma}
\label{lemma:Pi2-conj}
Under the same assumptions and notations of Lemma \ref{lemma:dp}, let $\vert \kappa \lambda \vert, \vert \lambda \vert \leq 1$, $g \in \ell^2( \Lambda_+^*)$. Then for sufficiently small $\| \mu \|$ there exists $C>0$ (independent of $\kappa, \lambda$) such that 
\begin{align}
\| \Big(  & e^{\kappa \lambda \cN_+} e^{\lambda \sqrt{N} \phi_+ (g)}  \mathcal{N}_+  e^{- \lambda \sqrt{N} \phi_+ (g))} e^{- \lambda \kappa\cN_+ } -  \mathcal{N}_+  \ \Big)  \psi \|\notag \\
& \leq C \left( \lambda^2 \| g \|^2 \| \psi \|  + \vert \lambda \vert \| g \|   N^{-1/2} \| ( \mathcal{N}_+ + 1)^{1/2} \psi \| \right)\; . \; \notag \\
\| \Big(  & e^{\kappa \lambda \cN_+} e^{\lambda \sqrt{N} \phi_+ (g)} \Pi^{(2)}_{\sharp',\flat'} ( \mu^{z_1}, \mu^{z_2}, \dots, \mu^{z_n} ) e^{- \lambda \sqrt{N} \phi_+ (g))} e^{- \lambda \kappa\cN_+ } - \Pi^{(2)}_{\sharp',\flat'} ( \mu^{z_1}, \mu^{z_2}, \dots, \mu^{z_n} )   \Big)  \psi \| \notag \\
& \leq  C  N^n \left(\lambda^2  \| g \|^2 \| \psi \|  + \vert \lambda \vert (\| g \|+ \vert \kappa \vert ) N^{-1/2} \| ( \mathcal{N}_+ + 1)^{1/2} \psi \| \right) . 
\end{align}
\end{lemma}

\begin{proof}
The first bound follows with similar arguments as in the proof of Lemma \ref{lemma:bb-conj}. For the second we note that from definition \eqref{def:Pi2} it follows 
\begin{align}
 & e^{\kappa \lambda \cN_+} e^{\lambda \sqrt{N} \phi_+ (g)} \Pi^{(2)}_{\sharp',\flat'} ( \mu^{z_1}, \mu^{z_2}, \dots, \mu^{z_n} ) e^{- \lambda \sqrt{N} \phi_+ (g))} e^{- \lambda \kappa\cN_+ } - \Pi^{(2)}_{\sharp',\flat'} ( \mu^{z_1}, \mu^{z_2}, \dots, \mu^{z_n} )  \notag \\
=& \sum_{p_1, \dots, p_n \in \Lambda_+^*} \left[ e^{\kappa \lambda \cN_+} e^{\lambda \sqrt{N} \phi_+ (g)}   b_{\alpha_0, p_1}^{\flat_0} e^{- \lambda \sqrt{N} \phi_+ (g)} e^{- \lambda \kappa\cN_+ } -  b_{\alpha_0, p_1}^{\flat_0} \right] \notag \\
& \hspace{3cm} \times \prod_{t=1}^{n-1} a_{\beta_t p_t}^{\sharp_t} a_{\alpha_t p_{t+1}}^{\flat_t}  a_{\beta_n p_n}^{\sharp_n} a^{\flat_n} (g) \prod_{\ell = 1}^n f_\ell (p_\ell) \notag \\
&+ \sum_{p_1, \dots, p_n \in \Lambda_+^*}   e^{\lambda \kappa \cN_+} e^{\sqrt{N} \phi_+(g)} b_{\alpha_0, p_1}^{\flat_0}  e^{-\sqrt{N} \phi_+(g)} e^{-\lambda \kappa \cN_+} \notag \\
& \hspace{0.2cm} \times \sum_{j=1}^n \left( \prod_{t=1}^{j-1}  e^{\kappa \lambda \cN_+} e^{\lambda \sqrt{N} \phi_+ (g) } a_{\beta_t p_t}^{\sharp_t} a_{\alpha_t p_{t+1}}^{\flat_t} e^{- \lambda \sqrt{N} \phi_+ (g)} e^{- \lambda \kappa\cN_+ }\right) \notag \\
& \hspace{0.8cm} \times \left[ e^{\kappa \lambda \cN_+} e^{\lambda \sqrt{N} \phi_+ (g)} a_{\beta_{j}p_{j}}^{\sharp_{j}}a_{\alpha_{j} p_{j+1}}^{\flat_{j}} e^{- \lambda \sqrt{N} \phi_+ (g)} e^{- \lambda \kappa\cN_+ } - a_{\beta_{j}p_{j}}^{\sharp_{j}}a_{\alpha_{j} p_{j+1}}^{\flat_{j}} \right] \notag \\
& \hspace{3cm} \times \left( \prod_{u=j+1}^{n-1} a_{\beta_u p_u}^{\sharp_u} a_{\alpha_u p_{u+1}}^{\flat_u} \right) a_{\beta_n p_n}^{\sharp_n} a^{\flat_n} (g) \prod_{\ell = 1}^n f_\ell (p_\ell) \; . \label{eq:conj-Pi2-1}
\end{align}
On the one hand, it follows from Lemma \ref{lemma:b-con} that 
\begin{equation}
\label{eq:prop1-1} 
\begin{split} 
e^{\lambda \kappa \cN_+}e^{\sqrt{N} \phi_+ (g)} &  b_{\alpha_0 p_1} e^{-\sqrt{N} \phi_+(g)}e^{\lambda \kappa \cN_+}  - b_{\alpha_0 p_1} \notag \\
=&\;  \left[ \left( \gamma_{\lambda \| g \|}  -1 \right)  e^{\lambda \kappa} +  \gamma_{\lambda \| g \|} \left( e^{\lambda \kappa} - 1 \right) \right]  b_{\alpha_0 p_1} \notag \\
&+ \lambda \gamma_{\| g \|} \frac{\gamma_{\lambda \| g \|} - 1}{\lambda^2\| g \|^2} g_{-\alpha_0 p_1} \left( \gamma_{\kappa \lambda}
i \phi_- (g) + \sigma_{\kappa \lambda} \phi_+ (g) \right) \notag \\
 &- \lambda \frac{\gamma_{\lambda \| g \|} - 1}{\| g \|^2} e^{- \lambda \kappa} g_{-\alpha_0 p}  b^* (g) - \sqrt{N} \, \lambda\gamma_{\lambda \|g \|} \frac{\sigma_{\lambda \| g \|}}{\lambda \| g \|}  g_{-\alpha_0 p_1} \left( 1 - \frac{\N}{N} \right) \notag \\
 & + \lambda^3 \frac{1}{\sqrt{N}} \, \frac{\sigma_{\lambda \| g \|}}{\lambda\| g \|} \frac{\gamma_{\lambda\| g \|} - 1}{ \lambda \| g \|^2}  g_{-\alpha_0 p_1} a^* (g) a (g) + \lambda\frac{1}{\sqrt{N}} \, \frac{\sigma_{\lambda \| g \|}}{\lambda \| g \|} a^* (g) a_{\alpha_0 p_1}  \; 
\end{split} 
\end{equation}
We recall that from the estimates \eqref{eq:bounds-b}, any term is $O( \sqrt{N} \lambda)$ for small $\lambda$ and large $N$. On the other hand, from Lemma \ref{lemma:b-con} for $\sharp_j = *$ and $\flat_j = \cdot$
\begin{equation}\label{eq:prop1-2}
\begin{split} 
& e^{\lambda \kappa \cN_+}e^{\sqrt{N} \phi_+(h)} a_{\beta_j p_j}^{\sharp_j} a_{\alpha_j p_{j+1}}^{\flat_j} e^{-\sqrt{N} \phi_+(h)} e^{-\lambda \kappa \cN_+}- a_{\beta_j p_j}^{\sharp_j} a_{\alpha_j p_{j+1}}^{\flat_j} \notag \\
& \quad= \lambda \sqrt{N} \, \frac{\sigma_{\lambda \| g \|}}{\lambda \| g \|} i \left[ \gamma_{\lambda\kappa} \left( b_{\beta_j p_j}^* g_{\alpha_j p_{j+1}} - b_{\alpha_j p_{j+1}} g_{- \beta_j p_j} \right)  + \sigma_{\lambda\kappa } \left( b_{\beta_j p_j}^* g_{\alpha_j p_{j+1}} + b_{\alpha_j p_{j+1}} g_{- \beta_j p_j} \right) \right] \\
 &\quad \quad -\lambda^2  N \frac{\sigma_{\lambda \| g \|}^2}{\lambda^2 \|g \|^2} g_{\beta_j p_j} g_{-\alpha_j p_{j+1}} \left(1 - \frac{\cN}{N} \right) + \lambda^2 \frac{(\gamma_{\lambda \| g \|} -1)}{\lambda^3\| g \|^2} ( g_{-\alpha_j p_{j+1}}a^*_{\beta_j p_j} a(g) + g_{\beta_j p_j}a^*(g) a_{\alpha_j p_{j+1}})
\\ &\quad \quad + \sqrt{N} \,  \lambda^3 \frac{\sigma_{\lambda \| g \|}}{\lambda \| g \|} \frac{\gamma_{\lambda \| g \|} - 1}{\lambda \| g \|^2} g_{\beta_j p_j} g_{-\alpha_j p_{j+1}} \left( \gamma_{\kappa \lambda} i \phi_- (g) + \sigma_{\kappa \lambda} \phi_+ (g) \right) \notag \\ 
&\quad \quad +\lambda^4 \left( \frac{\gamma_{\lambda \| g \|} - 1}{\| g \|^2} \right)^2  g_{\beta_j p_j} g_{-\alpha_j p_{j+1}}  a^* (g) a (g)  \; \end{split}
\end{equation}
and, similarly to Lemma \ref{lemma:bb-conj}, any term is either bounded by multiples of $\sqrt{N} \lambda \| ( \cN_+ + 1)^{1/2} \psi \|$ or $O( \lambda^2 N)$. Note that the case $\sharp_j = \cdot $ and $\flat_j = *$ follows in the same way using the commutation relations. Moreover, \eqref{eq:prop1-1},\eqref{eq:prop1-2} show that terms appearing in  \eqref{eq:conj-Pi2-1} of the form 
\begin{align}
\| e^{\lambda \kappa \cN_+} e^{\sqrt{N} \phi_+(h)} a_{\beta_j p_j}^{\sharp_j} a_{\alpha_j p_{j+1}}^{\flat_j} e^{-\sqrt{N} \phi_+(h)} e^{-\lambda \kappa \cN_+} \psi \|, 
\| e^{\lambda \kappa \cN_+} e^{\sqrt{N} \phi_+(h)} b_{\beta_j p_j}^{\sharp_j}  e^{-\sqrt{N} \phi_+(h)}e^{-\lambda \kappa \cN_+}  \psi \| 
\end{align}
bounded through multiples of  $N^{1/2} \| (\mathcal{N}_+ + 1)^{1/2} \| +  \lambda^2 N \| \psi \| \leq C N $ resp. $ \|( \mathcal{N}_+ + 1)^{1/2} \| +  \lambda N^{1/2} \| \psi \| \leq C N^{1/2} $. Since the number of particles operator can be easily commuted through $ a_{\beta_j p_j}^{\sharp_j} a_{\alpha_j p_{j+1}}^{\flat_j}, b_{\alpha_0 p_1}  $, we get 
\begin{align}
\| \Big( e^{\kappa \lambda \cN_+} & e^{\lambda \sqrt{N} \phi_+ (g)} \Pi^{(2)}_{\sharp',\flat'} ( \mu^{z_1}, \mu^{z_2}, \dots, \mu^{z_n} ) e^{- \lambda \sqrt{N} \phi_+ (g))} e^{- \lambda \kappa\cN_+ } - \Pi^{(2)}_{\sharp',\flat'} ( \mu^{z_1}, \mu^{z_2}, \dots, \mu^{z_n} ) \Big)  \psi  \| \notag \\
\leq& n (CN)^n \vert\lambda \vert ( \| g \| +  \vert \kappa \vert ) \left( \lambda^2 \| \psi \|+  N^{-1/2} \| ( \mathcal{N}_+ +1)^{1/2} \psi \| \right)  \notag \\
 \leq& (CN)^n \vert\lambda \vert ( \| g \| +  \vert \kappa \vert ) \left( \lambda^2 \| \psi \|+  N^{-1/2} \| ( \mathcal{N}_+ +1)^{1/2} \psi \| \right)  
\end{align}
and Lemma \ref{lemma:Pi2-conj} follows. 
\end{proof}

From these estimates, we derive the following estimates for \eqref{eq:estimates-dp-start}.

\begin{lemma}
\label{lemma:conj-dp}
Under the same assumptions and notations as in Lemma \ref{lemma:dp}, $g \in \ell^2( \Lambda_+^*)$ and $\vert \lambda \vert, \vert \lambda\kappa \vert \leq 1$ and $\| \mu \|$ small enough. Then there exists $C>0$ (independent of $\kappa, \lambda$) such that 
\begin{align}
\Big\| & ( \mathcal{N}_+ + 1)^{k/2} \left( e^{\kappa \lambda \cN_+} e^{\lambda \sqrt{N} \phi_+ (g)} d_p e^{- \lambda \sqrt{N} \phi_+ (g)} e^{- \lambda \kappa\cN_+ } -  d_p \right) \psi \Big\| \notag \\
&\leq  C  \left( (\| g \|+ \vert \kappa \vert ) \vert \mu_p \vert  + \vert g_p \vert ) \right) \vert \left( \vert \lambda\vert  N^{-1/2} \| ( \mathcal{N}_+ + 1 )^{(k+2)/2} \psi \|  + \vert \lambda\vert^3 \| g \|^2 \sqrt{N} \| ( \mathcal{N}_+ +1)^{k/2}\psi \| \right) \notag \\
 &+   C  (\| g \|+ \vert \kappa \vert )  \left( \vert\lambda\vert N^{-1/2} \| b_p( \mathcal{N}_+ + 1 )^{(k+1)/2} \psi \|  + \vert\lambda\vert^3 \| g \|^2 \sqrt{N} \| b_p( \mathcal{N}_+ +1)^{(k-1)/2}\psi \| \right) \; 
\end{align}
and 
\begin{align}
\Big\|  &  ( \mathcal{N}_+ + 1)^{k/2} \left( e^{\kappa \lambda \cN_+} e^{\lambda \sqrt{N} \phi_+ (g)} d_p^* e^{- \lambda \sqrt{N} \phi_+ (g)} e^{- \lambda \kappa\cN_+ } -  d_p^* \right) \psi \Big\|\notag \\
& \leq C (\| g \|+ \vert \kappa \vert )  \left( \vert\lambda\vert N^{-1/2} \| ( \mathcal{N}_+ + 1 )^{(k+2)/2} \psi \|  + \vert\lambda\vert^3 \| g \|^2 \sqrt{N} \| ( \mathcal{N}_+ + 1)^{k/2} \psi \| \right)  \notag \\
\end{align}
\end{lemma}

\begin{proof}
 We start with the first bound and observe that from \eqref{def:d-2} we have 
 \begin{align}
 \| \Big( e^{\kappa \lambda \cN_+}  & e^{\lambda \sqrt{N} \phi_+ (g)} d_p e^{- \lambda \sqrt{N} \phi_+ (g)} e^{- \lambda \kappa\cN_+ } -  d_p \Big) \psi \|   \notag \\
 =& \sum_{\geq 0} \frac{1}{m!} \Big\|  \Big( e^{\kappa \lambda \cN_+} e^{\lambda \sqrt{N} \phi_+ (g)}  \left[ \ad_{-B (\mu)}^{(m)} (b_q) - \mu_q^m b_{\alpha_m p}^{\sharp_m} \right] e^{- \lambda \sqrt{N} \phi_+ (g)} e^{- \lambda \kappa\cN_+ }  \notag \\
 &\hspace{6cm} -  \left[ \ad_{-B (\mu)}^{(m)} (b_q)  - \mu_q^m b_{\alpha_m p}^{\sharp_m} \right]   \Big) \psi \Big\| \; . \label{eq:term20}
\end{align}
By Lemma \ref{lemma:dp} the term inside the norm can be written by a sum where one term is given by 
\begin{align}
A_p =&  e^{\kappa \lambda \cN_+} e^{\lambda \sqrt{N} \phi_+ (g)}  \left( \frac{N - \mathcal{N}_+}{N}\right)^{\frac{m+ (1-\alpha_m)/2}{2}}  \left(\frac{N + 1- \mathcal{N}_+}{N}\right)^{\frac{m- (1-\alpha_m)/2}{2}}   \notag \\
& \hspace{3cm} \times \mu_p^mb^{\sharp_m}_{\alpha_m p} e^{- \lambda \sqrt{N} \phi_+ (g)} e^{- \lambda \kappa\cN_+ }  \notag \\
&-  \left( \frac{N - \mathcal{N}_+}{N}\right)^{\frac{m+ (1-\alpha_m)/2}{2}}  \left(\frac{N + 1- \mathcal{N}_+}{N}\right)^{\frac{m- (1-\alpha_m)/2}{2}}   \mu_p^mb^{\sharp_m}_{\alpha_m p} \label{eq:term21}
\end{align}
and $2^m m! - 1$ terms are of the form 
 \begin{align}
B_p= e^{\kappa \lambda \cN_+}  & e^{\lambda \sqrt{N} \phi_+ (g)}  \Lambda_1 \dots \Lambda_{i_1} N^{-k} \Pi_{\sharp, \flat}^{(1)} ( \mu^{j_1}, \dots, \mu^{j_{k_1}}; \mu_p^{\ell_1} \varphi_{\alpha_{\ell_1} p})  e^{- \lambda \sqrt{N} \phi_+ (g)} e^{- \lambda \kappa\cN_+ } \notag \\
& \hspace{3cm} - \Lambda_1 \dots \Lambda_{i_1} N^{-k} \Pi_{\sharp, \flat}^{(1)} ( \mu^{j_1}, \dots, \mu^{j_{k_1}}; \mu_p^{\ell_1} \varphi_{\alpha_{\ell_1} p})  \label{eq:term22}
\end{align} 
where $i_1, k_1, \ell_1 \in \mathbb{N}, j_1, \dots, j_k \in \mathbb{N} \setminus \lbrace 0 \rbrace$ and where each operator $\Lambda_r$ is either a factor $( N- \cN_+)/N$, a factor $(N+1-\cN_+)/N$  or a $\Pi^{(2)}$ operator of the form 
\begin{align}
\label{eq:term23}
N^{-h}\Pi^{(2)}_{\sharp,\flat} ( \mu^{z_1}, \dots, \mu^{z_h} ) 
\end{align}
with $h, z_1, \dots, z_h \in \mathbb{N} \setminus \lbrace 0 \rbrace$. We consider terms of the form \eqref{eq:term21} and \eqref{eq:term22} separately. Each term of the form \eqref{eq:term22} has either $k_1 >0$ or at least one operator that is of the form \eqref{eq:term23}. We start with \eqref{eq:term21} that vanishes for $m=0$. We have for $\beta_m = (1-\alpha_m)/2$
\begin{align}
A_p := & \sum_{j=1}^{(m+\beta_m)/2}  \left( \frac{N - \mathcal{N}_+}{N}\right)^{(m+\beta_m)/2 -j+1} \left( e^{\kappa \lambda \cN_+}   e^{\lambda \sqrt{N} \phi_+ (g)}  \frac{\mathcal{N}_+}{N}  e^{- \lambda \sqrt{N} \phi_+ (g)} e^{- \lambda \kappa\cN_+ } -  \frac{\mathcal{N}_+}{N}\right)  \notag \\
& \hspace{2cm} \times  \left(\frac{ N-\mathcal{N}_+}{N}\right)^{j} \left(\frac{N + 1- \mathcal{N}_+}{N}\right)^{(m+-\beta_m)/2}   \mu_p^mb^{\sharp_m}_{\alpha_m p} \notag \\
&+ \left(\frac{N - \mathcal{N}_+}{N}\right)^{(m+\beta_m)/2}  \sum_{j=1}^{(m-\beta_m)/2}  \left( \frac{N +1 - \mathcal{N}_+}{N}\right)^{(m-\beta_m)/2 -j+1}  \notag \\
&\hspace{0.5cm} \times \left( e^{\kappa \lambda \cN_+}   e^{\lambda \sqrt{N} \phi_+ (g)}  \left(\frac{\mathcal{N}_+}{N}\right) e^{- \lambda \sqrt{N} \phi_+ (g)} e^{- \lambda \kappa\cN_+ } -  \frac{ \mathcal{N}_+}{N}\right)    \left(\frac{N +1- \mathcal{N}_+}{N}\right)^{j} \mu_p^mb^{\sharp_m}_{\alpha_m p} \notag \\
&+ e^{\kappa \lambda \cN_+}  e^{\lambda \sqrt{N} \phi_+ (g)}  \left( \left(\frac{N - \mathcal{N}_+}{N}\right)^{(m+\beta_m)/2} \left(\frac{N + 1- \mathcal{N}_+}{N}\right)^{(m-\beta_m)/2}  -1 \right)  \notag \\
& \hspace{2cm} \times e^{- \lambda \sqrt{N} \phi_+ (g)} e^{- \lambda \kappa\cN_+ } \notag \\
&\hspace{3cm} \times \mu_p^m \left( e^{\kappa \lambda \cN_+} e^{\lambda \sqrt{N} \phi_+ (g)}  b^{\sharp_m}_{\alpha_m p} e^{- \lambda \sqrt{N} \phi_+ (g)} e^{- \lambda \kappa\cN_+ } - b^{\sharp_m}_{\alpha_m p} \right) \label{eq:term24}
\end{align}  
Since, for the second summand of the r.h.s. of \eqref{eq:term24} we have 
\begin{align}
&\left(\frac{N - \mathcal{N}_+}{N}\right)^{(m+\beta_m)/2} \left(\frac{N + 1- \mathcal{N}_+}{N}\right)^{(m-\beta_m)/2}  -1 \notag \\
&= - \sum_{j=1}^{(m+\beta_m)/2} \left(\frac{N - \mathcal{N}_+}{N}\right)^{(m+\beta_m)/2-j+1} \frac{\mathcal{N}_+}{N} \left(\frac{N - \mathcal{N}_+}{N} \right)^j \left(\frac{N + 1- \mathcal{N}_+}{N}\right)^{(m-\beta_m)/2} \notag \\
&\quad - \left(\frac{N - \mathcal{N}_+}{N}\right)^{(m+\beta_m)/2} \sum_{j=1}^{(m-\beta_m)/2} \left(\frac{N  +1- \mathcal{N}_+}{N}\right)^{(m+\beta_m)/2-j+1}\left(\frac{\mathcal{N}_+ -1}{N}\right)\left(\frac{N - \mathcal{N}_+}{N} \right)^j  \; . 
\end{align}
we can argue similarly as in the proof of Lemma \ref{lemma:Pi2-conj}. In particular, since powers of the number of excitations $ \mathcal{N}_+ $ can be easily commuted through any operator appearing \eqref{eq:term24}, we find from Lemma \ref{lemma:Pi2-conj} for the first and Lemma \ref{lemma:b-con} for the second term of the r.h.s. of  \eqref{eq:term22} 
\begin{align}
\|& ( \mathcal{N}_+ + 1)^{k/2} A_p \psi \| \notag \\
&\leq C^m \vert \mu_p \vert \| \mu \|^{m-1} \left( \vert \lambda \vert ( \| g \| + \vert \kappa \vert ) N^{-1/2} \| ( \mathcal{N}_+ + 1 )^{(k+2)/2} \psi \|  + \lambda^3 \| g \|^3  \sqrt{N} \| ( \mathcal{N}_+ + 1)^k \psi \| \right) \; . 
\end{align}
for some constants $C >0$.  For \eqref{eq:term22}, we write 
 \begin{align}
\mathcal{N}^k B_p 
 =& \mathcal{N}^k\sum_{n=1}^{i_1} \Big( \prod_{t=1}^{n-1} e^{\kappa \lambda \cN_+} e^{\lambda \sqrt{N} \phi_+ (g)}\Lambda_{t}  e^{- \lambda \sqrt{N} \phi_+ (g)} e^{- \lambda \kappa\cN_+ } \Big) \notag \\
 & \hspace{3cm} \times  \left( e^{\kappa \lambda \cN_+} e^{\lambda \sqrt{N} \phi_+ (g)}  \Lambda_n e^{- \lambda \sqrt{N} \phi_+ (g)} e^{- \lambda \kappa\cN_+ }  - \Lambda_n \right) \notag \\
 &\hspace{3cm} \times \Lambda_{n+1} \dots \Lambda_{i_1} \Pi_{\sharp, \flat}^{(1)} ( \mu^{j_1}, \dots, \mu^{j_{k_1}}; \mu_p^{\ell_1} \varphi_{\alpha_{\ell_1} p}) \notag \\
 &+ \mathcal{N}^k \Big( \prod_{t=1}^{i_1} e^{\kappa \lambda \cN_+} e^{\lambda \sqrt{N} \phi_+ (g)}   \Lambda_{t}  e^{- \lambda \sqrt{N} \phi_+ (g)} e^{- \lambda \kappa\cN_+ } \Big) \notag \\
 & \hspace{1cm} \times \left(  e^{- \lambda \sqrt{N} \phi_+ (g)} e^{- \lambda \kappa\cN_+ } N^{-k} \Pi_{\sharp, \flat}^{(1)} ( \mu^{j_1}, \dots, \mu^{j_{k_1}}; \mu_p^{\ell_1} \varphi_{\alpha_{\ell_1} p})  e^{- \lambda \sqrt{N} \phi_+ (g)} e^{- \lambda \kappa\cN_+ }  \right. \notag \\
 & \hspace{2cm} \left. - \Pi_{\sharp, \flat}^{(1)} ( \mu^{j_1}, \dots, \mu^{j_{k_1}}; \mu_p^{\ell_1} \varphi_{\alpha_{\ell_1} p})  \right)  \label{eq:term25}
\end{align} 
First note again powers of the number of excitations can be easily commuted through operators appearing in this term. Moreover, by Lemma \ref{lemma:Pi2-conj} we have 
\begin{align}
\| e^{\kappa \lambda \cN_+} e^{\lambda \sqrt{N} \phi_+ (g)}\Lambda_{t}  e^{- \lambda \sqrt{N} \phi_+ (g)} e^{- \lambda \kappa\cN_+ }  \psi \| \leq C \; .
\end{align}
Thus the first term of the r.h.s. of \eqref{eq:term25} can be estimated by Lemma \ref{lemma:Pi2-conj} distinguishing the case $\ell_1 =0$ and $\ell_1>0$ as in the proof of Lemma \ref{lemma:eN-dp} by 
\begin{align}
 & C^m   \| \mu \|^m \vert \mu_p \vert \left( \vert\lambda\vert ( \| g \| + \vert \kappa \vert ) N^{-1/2} \| ( \mathcal{N}_+ + 1 )^{(k+2)/2} \psi \|  + \vert\lambda\vert^3 \| g \|^3 \sqrt{N} \| ( \mathcal{N}_+ +1)^{k/2}\psi \| \right) \notag \\
 &+   C^m \| \mu \|^m \left( \vert\lambda\vert ( \| g \| + \vert \kappa \vert )   N^{-1/2} \| b_p( \mathcal{N}_+ + 1 )^{(k+1)/2} \psi \|  + \vert\lambda\vert^3\| g \|^3 \sqrt{N} \| b_p( \mathcal{N}_+ +1)^{(k-1)/2}\psi \| \right)
\end{align}
For the second term of the r.h.s. of \eqref{eq:term25} we proceed similarly as in the proof of Lemma \ref{lemma:Pi2-conj} by Lemma \ref{lemma:b-con} distinguishing again the case $\ell_1 =0$ and $\ell_1>0$ and thus finally get 
\begin{align}
&  C^m   \| \mu \|^m \left( ( \| g \| + \vert \kappa \vert )\vert \mu_p  \vert + \vert g_p \vert \right) \vert \left( \vert\lambda\vert  N^{-1/2} \| ( \mathcal{N}_+ + 1 )^{(k+2)/2} \psi \|  + \vert\lambda\vert^3 \| g \|^2 \sqrt{N} \| ( \mathcal{N}_+ +1)^{k/2}\psi \| \right) \notag \\
 &+   C^m \| \mu \|^m \left( \vert\lambda\vert  ( \| g \| + \vert \kappa \vert ) N^{-1/2} \| b_p( \mathcal{N}_+ + 1 )^{(k+1)/2} \psi \|  + \vert\lambda\vert^3 \| g \|^3  \sqrt{N} \| b_p( \mathcal{N}_+ +1)^{(k-1)/2}\psi \| \right) \; . 
\end{align}
Plugging these estimates into \eqref{eq:term20} we arrive for sufficiently small $\| \mu \| $ at Lemma \ref{lemma:conj-dp}. The second estimate of Lemma \ref{lemma:conj-dp} follows similarly. 
\end{proof}

\begin{lemma}
\label{lemma:conj-dd}
Under the same assumptions as in Lemma \ref{lemma:conj-dp}, there exists $C>0$ (independent of $\lambda, \kappa$) such that 
\begin{align}
\| ( \mathcal{N}_+ + 1)^{-k/2} & \left( e^{\kappa \lambda \cN_+} e^{\lambda \sqrt{N} \phi_+ (g)} d_p^{\sharp_1} d_{\alpha p}^{\sharp_2} e^{- \lambda \sqrt{N} \phi_+ (g))} e^{- \lambda \kappa\cN_+ } -  d_p^{\sharp_1} d_p^{\sharp_2} \right) \psi \|  \notag \\
& \leq C  (\| g \|+ \vert \kappa \vert ) \vert\lambda\vert  \| ( \mathcal{N}_+ +1 )^{1-k/2} \psi \| + C \vert\lambda\vert^3 \|g \|^3 N \| \psi \| 
\end{align}
and similarly
 \begin{align}
\|   ( \mathcal{N}_+ + 1)^{-k/2}  & \left( e^{\kappa \lambda \cN_+} e^{\lambda \sqrt{N} \phi_+ (g)} d_p^{\sharp_1} b_{\alpha p}^{\sharp_2} e^{- \lambda \sqrt{N} \phi_+ (g))} e^{- \lambda \kappa\cN_+ } -  d_p^{\sharp_1} b_p^{\sharp_2} \right) \psi \| \notag \\
\leq& C  (\| g \|+ \vert \kappa \vert ) \vert\lambda\vert  \| ( \mathcal{N}_+ +1 )^{1-k/2} \psi \| + C \vert\lambda\vert^3 \|g \|^3 N \| \psi \|   \notag \\
\|   ( \mathcal{N}_+ + 1)^{-k/2} & \left( e^{\kappa \lambda \cN_+} e^{\lambda \sqrt{N} \phi_+ (g)} b_p^{\sharp_1} d_{\alpha p}^{\sharp_2} e^{- \lambda \sqrt{N} \phi_+ (g))} e^{- \lambda \kappa\cN_+ } -  b_p^{\sharp_1} d_p^{\sharp_2} \right) \psi \|  \notag \\
\leq&  C  (\| g \|+ \vert \kappa \vert ) \vert\lambda\vert  \| ( \mathcal{N}_+ +1 )^{1-k/2} \psi \| + C \vert\lambda\vert^3 \|g \|^3 N \| \psi \| 
\end{align}
with $\sharp_i \in \lbrace \cdot, * \rbrace$ for $i=1,2$ either $\sharp_1 = \sharp_2$ or $\sharp_1 =*$ and $\sharp_2 = \cdot$ and $\alpha = -1$ if $\sharp_1 = \sharp=2$ and $\alpha =1$ otherwise. 
\end{lemma}

\begin{proof}
We start with $\sharp_1 = \sharp_2 = * $. We observe that from \eqref{def:d} we have 
\begin{align}
&  \left( e^{\kappa \lambda \cN_+} e^{\lambda \sqrt{N} \phi_+ (g)} d_p^{*} d_{-p}^{*} e^{- \lambda \sqrt{N} \phi_+ (g))} e^{- \lambda \kappa\cN_+ } -  d_p^{*} d_{-p}^{*}\right)\notag \\
 & \quad = \left( e^{\kappa \lambda \cN_+} e^{\lambda \sqrt{N} \phi_+ (g)} d_p^{*} 
 e^{- \lambda \sqrt{N} \phi_+ (g))} e^{- \lambda \kappa\cN_+ } -  d_p^{*} \right) e^{\kappa \lambda \cN_+} e^{\lambda \sqrt{N} \phi_+ (g)} d_{-p}^{*} 
 e^{- \lambda \sqrt{N} \phi_+ (g))} e^{- \lambda \kappa\cN_+ }  \notag \\
 &\quad  \quad + d_p^* \left( e^{\kappa \lambda \cN_+} e^{\lambda \sqrt{N} \phi_+ (g)} d_{-p}^{*} 
 e^{- \lambda \sqrt{N} \phi_+ (g))} e^{- \lambda \kappa\cN_+ }  - d_{-p}^* \right) 
\end{align}
and thus 
\begin{align}
\|  & ( \mathcal{N}_+ + 1)^{-k/2} \left( e^{\kappa \lambda \cN_+} e^{\lambda \sqrt{N} \phi_+ (g)} d_p^{*} d_{-p}^{*} e^{- \lambda \sqrt{N} \phi_+ (g))} e^{- \lambda \kappa\cN_+ } -  d_p^{*} d_{-p}^{*}\right) \psi \| \notag \\
\leq& \| ( \mathcal{N}_+ + 1)^{-k/2} \left( e^{\kappa \lambda \cN_+} e^{\lambda \sqrt{N} \phi_+ (g)} d_p 
 e^{- \lambda \sqrt{N} \phi_+ (g))} e^{- \lambda \kappa\cN_+ } -  d_p \right) \notag \\
 &\hspace{2cm} \times  e^{\kappa \lambda \cN_+} e^{\lambda \sqrt{N} \phi_+ (g)} d_{-p}^{*} 
 e^{- \lambda \sqrt{N} \phi_+ (g))} e^{- \lambda \kappa\cN_+ }  \psi \| \notag \\
 &+ \| ( \mathcal{N}_+ + 1)^{-k/2}  d_p^* \left( e^{\kappa \lambda \cN_+} e^{\lambda \sqrt{N} \phi_+ (g)} d_{-p}^{*} 
 e^{- \lambda \sqrt{N} \phi_+ (g))} e^{- \lambda \kappa\cN_+ }  - d_{-p}^* \right)  \psi \| 
 \end{align}
With \eqref{eq:estimates-dp} and Lemma \ref{lemma:conj-dp} we find for all $\vert\lambda \vert, \vert \kappa \lambda \vert \leq 1$ 
 \begin{align}
\|  & ( \mathcal{N}_+ + 1)^{-k/2}  \left( e^{\kappa \lambda \cN_+} e^{\lambda \sqrt{N} \phi_+ (g)} d_p^{*} d_{-p}^{*} e^{- \lambda \sqrt{N} \phi_+ (g))} e^{- \lambda \kappa\cN_+ } -  d_p^{*} d_{-p}^{*}\right) \psi \| \notag \\
 \leq& C \sqrt{N} \vert\lambda\vert^3 \|g \|^3 \|  e^{\kappa \lambda \cN_+} e^{\lambda \sqrt{N} \phi_+ (g)} d_{-p}^{*} 
 e^{- \lambda \sqrt{N} \phi_+ (g))} e^{- \lambda \kappa\cN_+ }  \psi \|  \notag \\
 &+  C \vert \lambda\vert  ( \| g \| + | \kappa | ) N^{-1/2} \| ( \mathcal{N}_+ + 1)^{(2-k)/2}  e^{\kappa \lambda \cN_+} e^{\lambda \sqrt{N} \phi_+ (g)} d_{-p}^{*} 
 e^{- \lambda \sqrt{N} \phi_+ (g))} e^{- \lambda \kappa\cN_+ }  \psi \|\notag \\
  &+ CN^{-1/2} \| ( \mathcal{N}_+ + 1)^{(2-k)/2} \left( e^{\kappa \lambda \cN_+} e^{\lambda \sqrt{N} \phi_+ (g)} d_{-p}^{*} 
 e^{- \lambda \sqrt{N} \phi_+ (g))} e^{- \lambda \kappa\cN_+ }  - d_{-p}^* \right)  \psi \| \notag \\
\leq&  C N \vert\lambda\vert^3 \| g \|^3 \| \psi \| + C N^{-1} \vert\lambda\vert  ( \| g \| + | \kappa | ) \| ( \mathcal{N}_+ + 1)^{(4-k)/2} \psi \| \; . 
  \end{align}
The remaining bounds follow similarly with \eqref{eq:bounds-b} and Lemmas \ref{lemma:b-con}, \ref{lemma:conj-dp}. 
\end{proof}

Additionally, we consider the conjugation of the kinetic energy with the generalized Bogoliubov transform that we write as 
\begin{align}
e^{B( \mu)} \sum_{p \in \Lambda_+^*}&  p^2 a_p^*a_p e^{-B( \mu)} \notag \\
&= \sum_{p \in \Lambda_+^*} p^2 a_p^*a_p + \sum_{p \in \Lambda_+^*} p^2 \left( \sigma_p^2 + \sigma_p \gamma_p ( b_p^*b_{-p}^* + b_p^*b_{-p}^* ) + 2 \sigma_p^2 b_p^*b_p \right)  + \mathcal{R}_{\mathcal{K}}
\end{align}
where the remainder $\mathcal{R}_{\mathcal{K}}$ satisfies the following properties. 

\begin{lemma}
\label{lemma:RK}
Under the same assumptions as in Lemma \ref{lemma:dp}, \ref{lemma:conj-dp}, let $p^2 \mu \in \ell^1( \Lambda_+^*) $ and $\| \mu \|$ small enough. Then there exists $C>0$ such that 
\begin{align}
\|( \mathcal{N}_+ + 1)^{-1/2}   \mathcal{R}_{\mathcal{K}} \psi \| &\leq C  N^{-1/2} \| ( \mathcal{N}_+ + 1) \psi \| \; . \label{eq:bound-RK1}  \\
\| \left[ ( \mathcal{N}_+ + 1)^{1/2}, \left[ \mathcal{N}_+ +1)^{1/2},  \mathcal{R}_{\mathcal{K}} \right] \right] \psi \| &\leq C  \| ( \mathcal{N}_+ + 1) \psi \| \; . \label{eq:bound-RK3} 
\end{align}Ffurthermore for $\vert\lambda \vert, \vert \kappa \lambda \vert \leq 1$ there exists $C>0$ (independent of $\lambda, \kappa$) such that
\begin{align}
\| & \left( e^{\kappa \lambda \cN_+} e^{\lambda \sqrt{N} \phi_+ (g)} \mathcal{R}_{\mathcal{K}} e^{- \lambda \sqrt{N} \phi_+ (g))} e^{- \lambda \kappa\cN_+ } -  \mathcal{R}_{\mathcal{K}} \right) \psi \| \notag \\
&\leq  C  N \vert\lambda\vert^3 \| g \|^3 \| \psi \| + C \vert\lambda\vert (\| g \|+ \vert \kappa \vert )\| ( \mathcal{N}_+ + 1) \psi \| \; .  \label{eq:bound-RK2}
\end{align}

\end{lemma}

\begin{proof}
We compute 
\begin{align}
e^{- B( \mu)} \mathcal{K} e^{B ( \mu)}  =& \mathcal{K} + \int_0^1 ds \; \frac{d}{ds} e^{-sB( \mu)} \mathcal{K} e^{s B( \mu)} \notag \\
=&\mathcal{K} + \int_0^1 ds \;  e^{-s B(\mu)} [\mathcal{K}, B(\mu)] e^{s B(\mu)} \notag \\
=& \mathcal{K} + \int_0^1 ds \; \sum_{p \in \Lambda_+^*} p^2 \mu_p e^{-sB(\mu)} \left( b_pb_{-p} + b_p^* b_{-p}^* \right) e^{s B ( \mu)} \; \notag \\
 =& \mathcal{K} + \sum_{p \in \Lambda_+^*} \sigma_p^2 + \sum_{p \in \Lambda_+^*} p^2 \gamma_p \sigma_p ( b_pb_{-p}^2 + b_p^*b_{-p}^*) + 2 \sum_{p \in \Lambda_+^*} b_p^* b_p +  \mathcal{R}_{\mathcal{K}}
\end{align}
where 
\begin{align}
\mathcal{R}_{\mathcal{K}} =& \sum_{n,m \geq 0} \frac{(-1)^{n+m}}{n!m! (n+m+1)} \sum_{p \in \Lambda_+^*} p^2 \mu_p^{n+1} b_{\alpha_n p}^{\sharp_n} \left[ {\rm ad}^{(m)}_{B( \mu)}(b_{-p}) - \mu_p^m b^{\sharp_m}_{- \alpha_m p} \right] + {\rm h.c.}\notag \\
&+ \sum_{n,m \geq 0} \frac{(-1)^{n+m}}{n!m! (n+m+1)} \sum_{p \in \Lambda_+^*} p^2 \mu_p \left[ {\rm ad}^{(n)}_{B( \mu)}(b_{-p}) - \mu_p^n b^{\sharp_n}_{- \alpha_n p} \right] \mu_p^{m+1} b_{\alpha_m p}^{\sharp_m} + {\rm h.c.} \notag \\
&+ \sum_{n,m \geq 0} \frac{(-1)^{n+m}}{n!m! (n+m+1)} \sum_{p \in \Lambda_+^*} p^2 \mu_p \left[ {\rm ad}^{(n)}_{B( \mu)}(b_{-p}) - \mu_p^n b^{\sharp_n}_{- \alpha_n p} \right] \left[ {\rm ad}^{(m)}_{B( \mu)}(b_{-p}) - \mu_p^m b^{\sharp_m}_{- \alpha_n p} \right]  
\end{align}
We recall that it follows with the same arguments as in Lemma \ref{lemma:conj-dp} (see for example \cite[Lemma 3.4]{BBCS_optimal} that
\begin{align}
\| ( \mathcal{N}_+ +1)^{-1/2} \left( {\rm ad}^{(m)}_{B( \mu)}(b_{-p}) - \mu_p^m b^{\sharp_m}_{- \alpha_n p} \right) \psi \| \leq CN^{-1} \| ( \mathcal{N}_+ + 1)^{2} \psi \|
\end{align}
for all $p \in \Lambda_+^*$. Since $p^2 \mu \in \ell^1 ( \Lambda_+^*)$ by assumption,  the first estimate \eqref{eq:bound-RK1} follows. This estimates remains true for the double commutator, too (see \cite[Lemma 3.4]{BBCS_optimal}) and thus \eqref{eq:bound-RK3} follows. For the second estimate \eqref{eq:bound-RK2}, we recall that in the proof of Lemma \ref{lemma:conj-dp} we more precisely prove that 
\begin{align}
\|  \Big( & e^{\kappa \lambda \cN_+} e^{\lambda \sqrt{N} \phi_+ (g)}  \left[ \ad_{-B (\mu)}^{(m)} (b_q) - \mu_q^m b_{\alpha_m p}^{\sharp_m} \right] e^{- \lambda \sqrt{N} \phi_+ (g)} e^{- \lambda \kappa\cN_+ } -  \left[ \ad_{-B (\mu)}^{(m)} (b_q) \mu_q^m b_{\alpha_m p}^{\sharp_m} \right]  \Big) \psi \| \notag \\
 \leq&  C^m \vert\lambda\vert (\| g \|+ \vert \kappa \vert ) \| ( \mathcal{N}_+ + 1) \psi \| + C^m \| g \|^3 \vert\lambda\vert^3 N \| \psi \| \; \| \psi \| \; . 
\end{align}
and thus \eqref{eq:bound-RK2} follows. 
\end{proof}

 \section{Proof of Theorem \ref{thm:main}} 
 \label{sec:proof}

In this section we prove Theorem \ref{thm:main}, thus we estimate the logarithmic moment generating function. For this we define the centered (w.r.t. to the condensate's expectation value) operator 
\begin{align}
\widetilde{O} := O - \langle \varphi_0,  O \varphi_0 \rangle
\end{align}
and recall that we need to compute the moment generating function 
\begin{align}
\mathbb{E}_{\psi_{N}}  \left[ e^{\lambda O_{N} } \right]  = \langle \psi_{N}, \; e^{ \lambda O_N } \psi_{N} \rangle \; 
\end{align}
We consider the embedding of $\psi_{N} \in L_s^2 ( \mathbb{R}^{3N})$ in the full bosonic Fock space where we have the identity 
\begin{align}
\label{eq:O_N-Gamma}
O_N = \sum_{p, q \in \Lambda^*}  \widehat{\widetilde{O}}_{p,q}  \; a_p^* a_{-q} \; . 
\end{align}
where $\widehat{\widetilde{O}}_{p,q}$ denotes the Fourier coefficients of $\widetilde{O}$, i.e. $\widehat{\widetilde{O}}_{p,q} = \int_{\Lambda \times \Lambda} dxdy \; \widetilde{O}(x;y) e^{i(px+qy)}$. By definition of $\mathcal{U}_{N}$ in \eqref{def:UN} we observe that we can write $\psi_{N}$ as 
\begin{align}
\psi_{N} =\mathcal{U}_{N} \psi_{\mathcal{G}_N}
\end{align}
where $\psi_{\mathcal{G}_N}$ denotes the ground state of the excitation Hamiltonian $\mathcal{G}_N$ defined in \eqref{def:G}. The properties \eqref{eq:prop-U1}, \eqref{eq:prop-U2} of the unitary $\mathcal{U}_{N}$ show that 
\begin{align}
\mathcal{U}_{N} &  \sum_{p, q \in \Lambda^*}  \widehat{\widetilde{O}}_{p,q}  \; a_p^* a_{-q} \;  \mathcal{U}_{N}^* = \sum_{p,q \in \Lambda_+^*} \widehat{\widetilde{O}}_{p,q} a_p^*a_{-q}   + \sqrt{N} \phi_+ ( \widehat{O \varphi_0} )  
\end{align}
where we recall the notation \ref{def:phi}. Furthermore we introduce the notation 
\begin{align}
\label{def:g,B}
g = \widehat{O \varphi_0}  \quad \text{and} \quad  B = \sum_{p,q \in \Lambda_+^*}\widehat{\widetilde{O}}_{p,q}  \; a_p^*a_{-q} 
\end{align}
and thus arrive at 
\begin{align}
\mathbb{E}_{\psi_{N}}  \left[ e^{\lambda O_N } \right]  = \langle \psi_{\mathcal{G}_N}  ,\;   \; e^{ \lambda \sqrt{N}\phi_+ ( g ) +  \lambda  B } \psi_{\mathcal{G}_N}  \rangle \;  \label{eq:char-1}
\end{align}
In the following we will compute the expectation value of the r.h.s. of  \eqref{eq:char-1}. First we will show that the operator $B$ contributes to our analysis sub-leading only (see Lemma \ref{lemma:step1}). This will be based on ideas introduced in \cite{KRS,RSe}. Second we will show that the ground state $\psi_{\mathcal{G}_N}$ of the excitation Hamiltonian $\mathcal{G}_N$ (defined in \eqref{def:G}) approximately behaves as the ground state $\psi_{\mathcal{Q}}$ of the excitation Hamiltonian's quadratic approximation $\mathcal{Q}$ (defined in \eqref{def:Q}) (Lemma \ref{lemma:step2}). Then we show that $\psi_{\mathcal{Q}}$ effectively acts as a Bogoliubov transformation on the observable $\phi_+ (g)$. We remark that this would be an immediate consequence if the operator $\phi_+$ defined in \eqref{def:phi} would be formulated w.r.t. to standard creation and annihilation operators. However, $\phi_+$ is formulated w.r.t. to modified creation and annihilation operators that lead to more involved calculations (see Lemma \ref{lemma:step3}). Finally, in the last step, we compute the remaining expectation value (see Lemma \ref{lemma:step4}). 

While the first and the forth step are based on ideas presented in \cite{KRS} for the dynamical problem, the second and third step use novel ideas and techniques based on the Hellmann-Feynmann theorem and Gronwall's inequality. 

\subsection{Step 1} In this step we show that the operator $B$ defined in \eqref{def:g,B} contributes to the expectation value \eqref{eq:char-1} exponentially cubic in $\lambda$ only. This Lemma follows closely the proof of \cite[Lemma 3.3]{RSe} resp. \cite[Lemma 3.1]{KRS} considering a similar result for the dynamics in the mean-field regime ($\beta =0$). The results \cite{KRS,RSe} are formulated in position space, however the proofs and results easily carried over to momentum space. 

\begin{lemma}
\label{lemma:step1} Under the same assumptions as in Theorem \ref{thm:main} there exists $C>0$ such that for all $0 \leq \lambda \leq 1/\| O \|$ we have
\begin{align}
& e^{- C N \| O \|^3 \lambda^3}  \langle \psi_{\mathcal{G}_N}, \; e^{\lambda \sqrt{N} \phi_+ (g)/2} e^{- 2 \lambda \| O \| \mathcal{N}_+} e^{\lambda \sqrt{N} \phi_+ (g)/2}  \psi_{\mathcal{G}_N}\rangle \notag \\
&\quad  \leq  \langle \psi_{\mathcal{G}_N}, \; e^{\lambda \sqrt{N} \phi_+ (g) + \lambda B} \psi_{\mathcal{G}_N} \rangle  \notag \\
&\quad \leq e^{C  N \| O \|^3 \lambda^3 }  \langle \psi_{\mathcal{G}_N}, \;  e^{\lambda \sqrt{N} \phi_+ (g)/2} e^{ 2 \lambda \| O \| \mathcal{N}_+} e^{\lambda \sqrt{N} \phi_+ (g)/2} \psi_{\mathcal{G}_N} \rangle \; . 
\end{align}
\end{lemma}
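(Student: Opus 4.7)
The argument parallels those of \cite[Lemma 3.1]{KRS} and \cite[Lemma 3.3]{RSe}. The fundamental structural input is the operator bound
\[
\pm B \;\leq\; \|O\|\, \mathcal{N}_+ \qquad \text{on } \mathcal{F}_{\perp \varphi_0}^{\leq N},
\]
which follows from $B = \mathrm{d}\Gamma(q O q)$ on the excitation Fock space together with $\|qOq\|\leq \|O\|$ (the shift by $\langle \varphi_0, O\varphi_0\rangle$ only affects the zero-mode, which has been excised).

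Writing $A := \sqrt{N}\,\phi_+(g)$ for brevity, I would prove the upper bound by introducing the interpolating quantity
\[
G(\tau) \;:=\; \log\langle \psi_{\mathcal{G}_N},\, e^{\tau A/2}\, e^{2\tau \|O\|\, \mathcal{N}_+}\, e^{\tau A/2}\, \psi_{\mathcal{G}_N}\rangle \;-\; \log\langle \psi_{\mathcal{G}_N},\, e^{\tau A + \tau B}\, \psi_{\mathcal{G}_N}\rangle,
\]
noting $G(0) = 0$, and showing $|G'(\tau)| \leq C \tau^2\, N\, \|O\|^3$ uniformly for $\tau \in [0, \lambda]$ with $\lambda \leq 1/\|O\|$. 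Integrating then yields $|G(\lambda)| \leq C N \|O\|^3 \lambda^3$, which is equivalent to the claimed upper bound. The lower bound follows the same argument with $-2\tau\|O\|\mathcal{N}_+$ in place of $+2\tau\|O\|\mathcal{N}_+$.

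The derivative $G'(\tau)$ is a difference of two expectation-value ratios. Because $A$ commutes with itself, its contribution appears symmetrically on both sides of $G$ and cancels up to conjugation corrections. The leading remaining piece compares $2\|O\|\mathcal{N}_+$ (conjugated by $e^{\tau A/2}$) against $B$ (in the tilted expectation defined by $e^{\tau(A+B)}$). By the operator bound above, $2\|O\|\mathcal{N}_+ - B \geq \|O\|\mathcal{N}_+ \geq 0$ at $\tau = 0$, and this sign-definiteness is preserved at first order, supplying the correct sign. The error terms come from (i) conjugation of $\mathcal{N}_+$ through $e^{\tau A/2}$, handled via Lemma \ref{lemma:b-con}, which produces commutators of order $\tau\sqrt{N}\|O\|$; (ii) a Duhamel expansion passing between $e^{\tau(A+B)}$ and the symmetrically split form $e^{\tau A/2} e^{\tau B} e^{\tau A/2}$, producing a second factor $\tau\sqrt{N}\|O\|$. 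Combined with $\|g\|_{\ell^2(\Lambda_+^*)} \leq \|O\|$ and the moment bounds $\langle \psi_{\mathcal{G}_N}, (\mathcal{N}_+ + 1)^k \psi_{\mathcal{G}_N}\rangle \leq C$ for $k=1,2$ from Proposition \ref{claim:gs}, each error term scales as $\tau^2\, N\, \|O\|^3$.

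The main obstacle will be controlling the Duhamel errors uniformly in $\tau$: one requires estimates of the schematic form
\[
\bigl|\langle \psi_{\mathcal{G}_N}, \Phi\, e^{\tau(A+B)}\, \psi_{\mathcal{G}_N}\rangle\bigr| \;\leq\; C\, \|\Phi^*\, (\mathcal{N}_+ + 1)^{k/2} \psi_{\mathcal{G}_N}\|\, \bigl\langle \psi_{\mathcal{G}_N},\, e^{\tau A/2}\, e^{2\tau\|O\|\mathcal{N}_+}\, e^{\tau A/2}\, \psi_{\mathcal{G}_N}\bigr\rangle^{1/2}
\]
for certain operators $\Phi$ quadratic in creation/annihilation. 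These in turn rely on moment bounds on $\mathcal{N}_+$ against both $\psi_{\mathcal{G}_N}$ and the tilted state $e^{\tau A/2} \psi_{\mathcal{G}_N}$, for which the restriction $\lambda\|O\|\leq 1$ is essential so that the exponential weight $e^{2\tau\|O\|\mathcal{N}_+}$ can be absorbed into the Proposition \ref{claim:gs} estimates.
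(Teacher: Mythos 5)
Your plan takes a genuinely different route from the paper, but it has a gap that I don't think can be repaired as stated. The paper introduces, at \emph{fixed} $\lambda$, the one-parameter family of vectors
\[
\xi(s) = e^{-(1-s)\lambda\kappa\mathcal{N}_+/2}\, e^{(1-s)\lambda\sqrt{N}\phi_+(g)/2}\, e^{s\lambda[B+\sqrt{N}\phi_+(g)]/2}\,\psi_{\mathcal{G}_N}, \qquad s\in[0,1],
\]
so that $\|\xi(0)\|^2$ and $\|\xi(1)\|^2$ are exactly the two sides of the claimed inequality, and shows $\partial_s\|\xi(s)\|^2 = 2\Re\langle\xi(s),\mathcal{M}(s)\xi(s)\rangle$ with $\mathcal{M}(s)+\mathcal{M}(s)^*\gtrsim -C\lambda^3 N\|O\|^3$ as an operator inequality on $\mathcal{F}_{\perp\varphi_0}^{\leq N}$; every single term of $\mathcal{M}(s)$ carries at least a factor $\lambda^2$ (after the overall prefactor $\lambda$ from differentiating $s\lambda B/2$), because the conjugation formulas of Lemma~\ref{lemma:b-con} applied with the argument $h(s)=(1-s)\lambda g$ always produce factors $\sigma_{\|h(s)\|}\sim\lambda$, $\gamma_{\|h(s)\|}-1\sim\lambda^2$, $\sinh((s-1)\lambda\kappa/2)\sim\lambda$. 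This is precisely why a cubic-in-$\lambda$ exponent appears, uniformly in $s$.

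Your interpolation is in $\tau$ itself. A direct computation shows that your bound $|G'(\tau)|\leq C\tau^2 N\|O\|^3$ fails at $\tau=0$: since $\text{expr}_1(0)=\text{expr}_2(0)=1$,
\[
G'(0) = \bigl(\langle\psi_{\mathcal{G}_N},A\psi_{\mathcal{G}_N}\rangle + 2\|O\|\langle\psi_{\mathcal{G}_N},\mathcal{N}_+\psi_{\mathcal{G}_N}\rangle\bigr) - \bigl(\langle\psi_{\mathcal{G}_N},A\psi_{\mathcal{G}_N}\rangle + \langle\psi_{\mathcal{G}_N},B\psi_{\mathcal{G}_N}\rangle\bigr) = 2\|O\|\langle\mathcal{N}_+\rangle - \langle B\rangle,
\]
and by your own structural bound $\pm B\leq\|O\|\mathcal{N}_+$ this is $\geq \|O\|\langle\mathcal{N}_+\rangle>0$, an order-one constant, not $O(\tau^2 N)$. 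You half-acknowledge this when you say the sign-definite piece ``supplies the correct sign,'' but that statement is in direct tension with the displayed two-sided bound on $G'$; a one-sided bound $G'(\tau)\geq -C\tau^2 N\|O\|^3$ is the most you could hope for, and you have not indicated why the $O(\tau)$ corrections to the sign-definite zero-order piece (e.g.\ the $\lambda^2$-order piece coming from $G''(0)$, which involves covariances such as $\mathrm{Cov}(A,\mathcal{N}_+)\sim\sqrt{N}\|O\|$) cannot be negative and of size $\sqrt{N}\|O\|^2\tau$, which is not dominated by $N\|O\|^3\tau^2$ for $\tau\ll 1/(\sqrt{N}\|O\|)$. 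A second, related problem is that $|G(\lambda)|\leq CN\|O\|^3\lambda^3$ is \emph{not} ``equivalent to the claimed upper bound'': the lemma's upper and lower inequalities use opposite signs in the exponent $e^{\pm2\lambda\|O\|\mathcal{N}_+}$, so one $G$ controls only one side, which you then note yourself. The fix is to adopt the paper's $s$-interpolation, under which the Gronwall estimate closes cleanly because the derivative is controlled by an operator inequality whose non-sign-definite part is genuinely $O(\lambda^2)$ uniformly in $s$.
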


\begin{proof} We start with the lower bound (i.e. the first inequality of Lemma \ref{lemma:step1}) and define similarly to \cite[Lemma 3.3]{RSe} for $s \in [0,1]$ and $\kappa>0$ the Fock space vector 
\begin{align}
\label{def:xi1}
\xi (s) := e^{-(1-s) \lambda \kappa\mathcal{N}_+/2} e^{(1-s) \lambda\sqrt{N} \phi_+ (g)/2} e^{s \lambda \left[ B + \sqrt{N} \phi_+ (g) \right]/2} \psi_{\mathcal{G}_N} \; . 
\end{align}
We remark that by construction $\xi_N(s)$ is an element of the Fock space of excitations $\mathcal{F}_{\perp \varphi_{N}}^{\leq \varphi_0}$, thus the number of particles of $\xi(s)$ is at most $N$. This observation will be crucial for our analysis later. Since we have for $s=0$ 
\begin{align}
\label{eq:xi-10}
\| \xi (0) \|^2 =\left\langle \psi_{\mathcal{G}_N},  \;  e^{\lambda \sqrt{N} \phi_+ (g )/2 } e^{-\kappa \lambda  \cN_+ } e^{\lambda \sqrt{N} \phi_+ (g)/2 } \psi_{\mathcal{G}_N} \right\rangle
\end{align}
and for $s=1$ 
\begin{align}
\label{eq:xi-11}
\| \xi (1) \|^2 =  \left\langle \psi_{\mathcal{G}_N}, \; e^{ \lambda \sqrt{N} \phi_+ (g) + \lambda B} \psi_{\mathcal{G}_N}\right \rangle , 
\end{align}
it suffices to control the difference of \eqref{eq:xi-10} and \eqref{eq:xi-11} to get the desired estimate. We aim to control their difference through the derivative 
\begin{align}
\label{eq:Re-1}
\partial_s \| \xi (s) \|^2 = 2 \Re \langle \xi (s) , \,  \partial_s \xi (s) \rangle = 2 \Re \langle \xi (s), \,  \cM (s) \xi (s) \rangle = \Re \langle \xi (s), \; \left( \mathcal{M} (s)  + \mathcal{M} (s) ^* \right) \xi (s)  \rangle 
\end{align}
with the operator $\cM_s$ given by
\begin{align}
\label{eq:def-M}
\cM (s) =& \frac{\lambda}{2} e^{-(1-s)\lambda \kappa \cN_+/2} e^{(1-s)\lambda \sqrt{N} \phi_+ (g )/2} \;  B  \; e^{-(1-s)\lambda \sqrt{N} \phi_+ (g )/2} e^{(1-s)\lambda \kappa \cN_+ /2} + \frac{\lambda \kappa}{2} \cN_+  .
\end{align} 
The results from \cite[Propositions~2.2--2.4]{KRS} (summarized in Lemma \ref{lemma:b-con}, \ref{prop:eN}) provide formulas to compute the operator $\cM (s) $ explicitly. We use the short-hand notation \eqref{eq:gamma} and  $h (s) = (1-s) \lambda g $  and arrive at 
 \begin{align}
 \label{eq:M1}
&\frac{\cM (s) + \cM (s)^*}{\lambda} \notag \\
 & \quad =  \sum_{p,q \in \Lambda_+^*} \widehat{\widetilde{O}}_{p,q} a_p^* a_{-q} +  \kappa \cN_+ \notag\\
 &\quad + \frac{\gamma_{\| h(s) \|} -1}{\| h(s) \|^2}\sum_{p,q, k \in \Lambda_+^*} \left[  h_p (s)  \widehat{\widetilde{O}}_{q,k} h_k (s)  a^*_p a _{-q} +  h_q (s)  \widehat{\widetilde{O}}_{p,k} h_k (s)  a^*_p a _{-q} \right] \notag \\
  &\quad \quad -  \frac{\sigma_{\| h(s) \|}^2}{\| h(s) \|^2} \langle h(s) , \widetilde{O} h(s)  \rangle \left(N - \cN_+ \right) \notag\\
  &\quad\quad +  \left( \frac{\gamma_{\| h(s) \|} - 1}{\| h(s) \|^2} \right)^2  \langle h(s) , \; \wO  h(s)  \rangle \;  \sum_{p,q \in \Lambda_+^*} h_p(s) h_{q} (s) a_p^* a_{-q} \notag \\
  &\quad\quad  + \sqrt{N} \,  \frac{\sigma_{\| h(s) \|}}{\| h(s) \|}
  \sinh ((s-1) \lambda \kappa /2)  \left[ \frac{\gamma_{\| h(s) \|} - 1}{\| h(s) \|^2} \langle h(s) , \wO h(s) \rangle \phi_+ (h (s))  + \phi_+ (\widehat{\widetilde{O}} h (s)) \right]  \; .  
 \end{align}
With the bounds \eqref{eq:bounds-b} for any Fock space vector $\xi \in \mathcal{F}_{\perp \varphi_0}^{\leq N}$ and 
\begin{align}
\label{eq:boundh}
\| h (s) \|_2 \leq \lambda \|  O \varphi_0\|_2 \leq  \lambda \| O \|  \,  \|\varphi_0\|_2\leq 1, \quad \| \widetilde{O} \| \leq  \| O \| \left( 1 + \| \varphi_0 \|_2^2 \right) = 2 \| O \| 
\end{align} 
for all $\lambda \| O \| \leq 1$, we observe that that all but the terms of the first line of the r.h.s. of \eqref{eq:M1} are bounded by $C \lambda^2 N $. For the first line, however, we find with the choice $\kappa = 2 \| O \|$
\begin{align}
 \sum_{p,q \in \Lambda_+^*} \widehat{\widetilde{O}}_{p,q} a_p^* a_{q}  +  \kappa \cN_+  \geq  (- 2 \| O \|  + \kappa ) \cN_+ \geq 0
\end{align}
as operator inequality on the Fock space of excitations $\mathcal{F}_{\perp \varphi_0}^{\leq N}$. Summarizing, we arrive at 
\begin{align}
& \frac{\cM (s) + \cM^* (s)}{\lambda} \geq C  \lambda^2 N 
\end{align}
again as operator inequality on $\mathcal{F}_{\perp \varphi_0}^{\leq N}$ that yields 
 \begin{align}
 \label{eq:lower-1}
\frac{2}{\lambda} \text{Re } \langle \xi (s)  , \cM (s) \;  \xi (s) \rangle \geq 
- C  \lambda^2 N \| O \|^3   \| \xi (s)  \|^2 \,.
 \end{align}
 In combination with \eqref{eq:Re-1} the lower bound from Lemma \ref{lemma:step1} now follows from Gronwall's inequality. 

The upper bound is proven with a similar strategy (see also \cite{KRS} for more details) replacing the constant $\kappa$ in the definition of the Fock space vector $\xi (s)$ in \eqref{def:xi1} by $-\kappa$ and estimating the terms of \eqref{eq:M1} from above instead of from below. 
\end{proof}

\subsection{Step 2.} The goal of the second step is to show that we can replace $\psi_{\mathcal{G}_N}$, the ground state of the excitation Hamiltonian $\mathcal{G}_N $ with the ground state $\psi_{\mathcal{Q}}$ of its quadratic approximation $\mathcal{Q}$. The idea is to use the family of Hamiltonians $\lbrace \mathcal{G}_N (s) \rbrace_{s \in [0,1]}$ defined in \eqref{def:Gs} interpolating between the excitation Hamiltonian $\mathcal{G}_N = \mathcal{G}_N (1)$ and its corresponding quadratic approximation $\mathcal{Q} = \mathcal{G}_N (0)$. We remark that Proposition \ref{claim:gs} summarizes useful properties of the Hamiltonians $\lbrace \mathcal{G}_N (s) \rbrace_{s \in [0,1]}$ and their corresponding ground states $\lbrace \psi_{\mathcal{G}_N (s)} \rbrace_{s \in [0,1]}$ that will be crucial for the proof of the following Lemma. 

This Lemma's proof is crucially different from the proof of \cite{KRS,RSe} where the analogous step was based on properties of the dynamical evolution.

\begin{lemma} 
\label{lemma:step2}
Under the same assumptions as in Theorem \ref{thm:ldp}, there exist constants $C_1, C_2 >0$ and $\kappa_1, \kappa_2$ such that for all $0 \leq \lambda \leq \min \lbrace 1/(\kappa_1 \vertiii{O}), 1/(\kappa_2 \vertiii{O}) \rbrace$ we have 
\begin{align}
& \langle \psi_{\mathcal{G}_N}, \; e^{\lambda \sqrt{N} \phi_+ (g)/2} e^{  2 \lambda \| O \| \mathcal{N}_+} e^{\lambda \sqrt{N} \phi_+ (g)/2}  \psi_{\mathcal{G}_N}\rangle  \notag \\
&\quad \leq e^{ C_1 ( N \lambda^3  + \lambda)}  \langle \psi_{\mathcal{Q}}, \; e^{\lambda \sqrt{N} \phi_+ (g)/2} e^{ \kappa_1 \lambda \vertiii{O} \mathcal{N}_+} e^{\lambda \sqrt{N} \phi_+ (g)/2}  \psi_{\mathcal{Q}}\rangle  
\end{align}
resp. 
\begin{align}
&  \langle \psi_{\mathcal{G}_N}, \; e^{\lambda \sqrt{N} \phi_+ (g)/2} e^{- 2 \lambda \| O \| \mathcal{N}_+} e^{\lambda \sqrt{N} \phi_+ (g)/2}  \psi_{\mathcal{G}_N}\rangle  \notag \\
&\quad \geq e^{- C_2 ( N \lambda^3 + \lambda)}  \langle \psi_{\mathcal{Q}}, \; e^{\lambda \sqrt{N} \phi_+ (g)/2} e^{- \kappa_2 \lambda \vertiii{O} \mathcal{N}_+} e^{\lambda \sqrt{N} \phi_+ (g)/2}  \psi_{\mathcal{Q}}\rangle  
\end{align}
\end{lemma}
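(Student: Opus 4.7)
The plan is to interpolate both the state and the weight via a parameter $s \in [0,1]$, and then apply the Hellmann--Feynman theorem together with Gronwall's inequality. For the upper bound, choose $\kappa_1$ so that $\kappa_1\vertiii{O} > 2\|O\|$, and set $\kappa(s) = \kappa_1\vertiii{O} + s(2\|O\| - \kappa_1\vertiii{O})$, so that $\kappa$ decreases monotonically from $\kappa(0)=\kappa_1\vertiii{O}$ to $\kappa(1)=2\|O\|$. Define
\begin{align}
\Phi(s) := \langle \psi_{\mathcal{G}_N(s)}, \; e^{\lambda\sqrt{N}\phi_+(g)/2} \, e^{\kappa(s)\lambda \mathcal{N}_+} \, e^{\lambda\sqrt{N}\phi_+(g)/2} \; \psi_{\mathcal{G}_N(s)}\rangle,
\end{align}
so that $\Phi(1)$ equals the LHS and $\Phi(0)$ equals the RHS of the desired upper bound. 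The goal then reduces to showing $\partial_s \log \Phi(s) \leq C(N\lambda^3 + \lambda)$ uniformly in $s$, after which Gronwall (integration over $s\in[0,1]$) yields the inequality. The lower bound is symmetric: interpolate $\kappa(s)$ from $-\kappa_2\vertiii{O}$ at $s=0$ monotonically up to $-2\|O\|$ at $s=1$, and estimate $\partial_s \log \Phi(s)$ from below by $-C(N\lambda^3+\lambda)$.

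The derivative $\partial_s \Phi(s)$ splits into an observable-derivative and a state-derivative term. The observable-derivative piece equals $\lambda\,\partial_s\kappa(s)\,\langle \psi_{\mathcal{G}_N(s)}, e^{\lambda\sqrt{N}\phi_+(g)/2} \mathcal{N}_+ e^{\kappa(s)\lambda \mathcal{N}_+} e^{\lambda\sqrt{N}\phi_+(g)/2}\psi_{\mathcal{G}_N(s)}\rangle$, which is non-positive since $\partial_s \kappa \leq 0$ and $\mathcal{N}_+ \geq 0$; it therefore contributes favorably to the upper bound. For the state-derivative term I invoke the Hellmann--Feynman formula
\begin{align}
\partial_s \psi_{\mathcal{G}_N(s)} = - R(s) \mathcal{R}_N \psi_{\mathcal{G}_N(s)}, \qquad R(s) := \frac{q_{\psi_{\mathcal{G}_N(s)}}}{\mathcal{G}_N(s) - E_N(s)}.
\end{align}
Factor the positive operator $A(s)=e^{\lambda\sqrt{N}\phi_+(g)/2} e^{\kappa(s)\lambda\mathcal{N}_+} e^{\lambda\sqrt{N}\phi_+(g)/2}$ as $C(s)C(s)^*$ with $C(s) = e^{\lambda\sqrt{N}\phi_+(g)/2} e^{\kappa(s)\lambda\mathcal{N}_+/2}$. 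Cauchy--Schwarz then gives
\begin{align}
\vert 2\mathrm{Re}\langle R(s)\mathcal{R}_N \psi_{\mathcal{G}_N(s)}, A(s) \psi_{\mathcal{G}_N(s)}\rangle\vert \leq 2\|C(s)^* R(s)\mathcal{R}_N \psi_{\mathcal{G}_N(s)}\| \cdot \Phi(s)^{1/2},
\end{align}
so the Gronwall loop closes once the first factor is bounded by $C(N\lambda^3 + \lambda)^{1/2}\Phi(s)^{1/2}$.

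For this bound, I combine the resolvent estimate $\|(\mathcal{N}_+ + 1)R(s)\xi\| \leq C\|\xi\|$ from Proposition \ref{claim:gs}, the pointwise bound $\|\mathcal{R}_N \psi\|^2 \leq CN^{-1}\|(\mathcal{N}_+ + 1)^{3/2}\psi\|^2$ from \eqref{eq:R-norm}, and the uniform moment bound $\langle\psi_{\mathcal{G}_N(s)},(\mathcal{N}_++1)^k\psi_{\mathcal{G}_N(s)}\rangle \leq C$ from \eqref{eq:N-gs}. Together they yield $\|(\mathcal{N}_+ + 1)R(s)\mathcal{R}_N\psi_{\mathcal{G}_N(s)}\| \leq CN^{-1/2}$. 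To insert $C(s)^*$ on the left, I use the explicit conjugation identities of Lemmas \ref{lemma:b-con} and \ref{prop:eN} to propagate particle-number moments through $C(s)^*$; the resulting estimate expresses $\|C(s)^*\eta\|$ in terms of $\|e^{\kappa(s)\lambda\mathcal{N}_+/2}(\mathcal{N}_++1)^k\eta\|$ for suitable $k$, which in turn is controlled by $\Phi(s)^{1/2}$ after inserting a resolution of the identity in terms of $\mathcal{N}_+$-eigenspaces.

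The main technical obstacle lies in the last step: the formulas of Lemma \ref{lemma:b-con} produce shifts of size $\sqrt{N}\lambda$ in the $b_p, b_p^*$ when conjugating through $e^{\lambda\sqrt{N}\phi_+(g)/2}$, and these shifts must be absorbed into $\Phi(s)$ rather than appearing as leading-order errors. The gap between $\kappa_1\vertiii{O}$ in the weight of $\Phi(0)$ and $2\|O\|$ in the weight of $\Phi(1)$ is precisely the room required to absorb these $\sqrt{N}\lambda$ contributions inside $e^{\kappa(s)\lambda\mathcal{N}_+}$, at the cost of the $O(N\lambda^3)$ factor in the Gronwall exponent. The residual $O(\lambda)$ term comes from the purely algebraic factor $\lambda\,\partial_s\kappa(s)$ appearing in the observable derivative after the cancellation of the sign-favorable part.
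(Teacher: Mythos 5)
Your overall framework --- interpolating along $\mathcal{G}_N(s)=\mathcal{Q}+s\mathcal{R}_N$ with an $s$-dependent weight $\kappa(s)$, computing $\partial_s\psi_{\mathcal{G}_N(s)}$ by Hellmann--Feynman with the reduced resolvent, and closing with Gronwall --- is exactly the paper's strategy for this step. The gap is in how you handle the state-derivative term. After your Cauchy--Schwarz you must bound $\|C(s)^*R(s)\mathcal{R}_N\psi_{\mathcal{G}_N(s)}\|$ with $C(s)^*=e^{\kappa(s)\lambda\mathcal{N}_+/2}e^{\lambda\sqrt{N}\phi_+(g)/2}$, and this quantity is \emph{not} controlled by $\Phi(s)^{1/2}$: the quantity $\Phi(s)=\|C(s)^*\psi_{\mathcal{G}_N(s)}\|^2$ is built from the ground state itself, whereas your first factor involves the unrelated vector $\eta=R(s)\mathcal{R}_N\psi_{\mathcal{G}_N(s)}$, so no resolution of the identity in $\mathcal{N}_+$-eigenspaces relates the two. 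The only generic bound available goes through the operator norm of $e^{\lambda\sqrt{N}\phi_+(g)/2}$ on $\mathcal{F}^{\leq N}_{\perp\varphi_0}$, which is of order $e^{cN\lambda\|g\|}$; combined with $\|\eta\|\leq CN^{-1/2}$ this yields $e^{O(N\lambda)}$, fatal at the target scale $e^{O(N\lambda^3+\lambda)}$. (Your stated target $C(N\lambda^3+\lambda)^{1/2}\Phi^{1/2}$ is also off by itself: Gronwall would then give $e^{C(N\lambda^3+\lambda)^{1/2}}$, which exceeds $e^{C(N\lambda^3+\lambda)}$ whenever $N\lambda^3+\lambda<1$.)

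The paper avoids the Cauchy--Schwarz separation altogether: it writes $\partial_s\|\xi(s)\|^2=2\Re\langle\xi(s),\mathcal{M}(s)\xi(s)\rangle$ for $\xi(s)=e^{-\lambda\kappa_s\mathcal{N}_+/2}e^{\lambda\sqrt{N}\phi_+(g)/2}\psi_{\mathcal{G}_N(s)}$ and conjugates the reduced resolvent and $\mathcal{R}_N$ (first rewritten in terms of the modified operators $b,b^*$) explicitly through both exponentials via the closed formulas of Lemmas \ref{lemma:b-con}--\ref{prop:partialt}, together with the resolvent identity for the conjugated, non-self-adjoint $\widetilde{\mathcal{G}}_N'(s)$. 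The outcome is a two-sided \emph{form} bound of the type $\pm\bigl(\mathcal{M}(s)+\mathcal{M}(s)^*\bigr)\leq C\lambda(1+\kappa_s)(\mathcal{N}_++1)+C(N\lambda^3+\lambda)$, and the $\lambda(\mathcal{N}_++1)$ error --- which is \emph{not} $O(N\lambda^3+\lambda)\|\xi\|^2$, since $\mathcal{N}_+$ can be of order $N$ on the truncated Fock space --- is cancelled by the $-\lambda\dot{\kappa}_s\mathcal{N}_+$ term upon choosing $\kappa_s$ to solve $\dot{\kappa}_s=C(1+\kappa_s)$, an exponential profile rather than your linear one. Your proposal discards the $\dot{\kappa}$-term as merely ``sign-favorable'' while simultaneously burying the $\lambda\mathcal{N}_+$ errors inside a norm that cannot be closed; these two choices are jointly what breaks the argument, even though your closing intuition about the gap between $2\|O\|$ and $\kappa_1\vertiii{O}$ being the room needed for absorption is correct.
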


\begin{proof} We start with the lower bound, i.e. the second inequality of Lemma \ref{lemma:step2}. The upper bound then follows with similar arguments. 

We consider the two families of Hamiltonians $\lbrace \mathcal{G}_N (s) \rbrace_{s \in [0,1]}$ defined in \eqref{def:Gs}. We shall prove first that denoting with $\psi_{\mathcal{G}_N(s)}$ the ground state of $\mathcal{G}_N (s)$ 
\begin{align}
\label{eq:lemma2-2}
&\langle \psi_{\mathcal{G}_N (1)}, \; e^{\lambda \sqrt{N} \phi_+ (g)/2} e^{  - 2 \lambda \| O \| \mathcal{N}_+} e^{\lambda \sqrt{N} \phi_+ (g)/2}  \psi_{\mathcal{G}_N (1)}\rangle  \notag \\
&\quad \geq e^{ - C_2 ( N \lambda^3 + \lambda)}  \langle \psi_{\mathcal{G}_N(0)}, \; e^{\lambda \sqrt{N} \phi_+ (g)/2} e^{- \kappa_2 \lambda \vertiii{O} \mathcal{N}_+} e^{\lambda \sqrt{N} \phi_+ (g)/2}  \psi_{\mathcal{G}_N (0)}\rangle  
\end{align}
for some constants $C_2,\kappa_2 >0$ which together with the observation $\mathcal{G}_N(0) = \mathcal{Q}$ yields the lower bound of Lemma \ref{lemma:step2}. 

For $s \in [0,1]$ let $\mathcal{G}_N (s) $ denote the Hamiltonian defined in \eqref{def:Gs} with corresponding ground state $\psi_{\mathcal{G}_N (s)}$. Then we define the Fock space vector 
\begin{align}
\xi (s) = e^{- \lambda \kappa_s  \mathcal{N}_+/2} e^{\lambda \phi_+ (g) /2} \psi_{\mathcal{G}_N (s)} \; . 
\end{align}
where $\kappa_s : [0,1] \rightarrow \mathbb{R}$ denotes a differentiable positive function with $\kappa_1 = 2 \| O \|$ chosen later. We remark that it follows from Proposition \ref{claim:gs} that $\psi_{\mathcal{G}_N(s)} \in \mathcal{F}_{\perp \varphi_0}^{\leq N}$ for all $s \in [0,1]$ and thus $\xi_N (s) \in \mathcal{F}_{\perp \varphi_0}^{\leq N}$. Moreover,  
\begin{align}
\label{eq:xi-1}
\| \xi (1)  \|_2^2 = \left\langle  \psi_{\mathcal{G}_N}, \; e^{ \lambda \sqrt{N}\phi_+ ( g) /2 } e^{ -2 \lambda \| O \| \mathcal{N}_+ } e^{\lambda \sqrt{N} \phi_+ ( g ) /2 }  \psi_{\mathcal{G}_N} \right\rangle \;. 
\end{align}
and 
\begin{align}
\label{eq:xi-0}
\| \xi (0) \|_2^2 = \left\langle \psi_{\mathcal{G}_N (0)} , \; e^{ \lambda \sqrt{N}\phi_+ (g ) /2 } e^{ - \lambda \kappa_0 \mathcal{N}_+ } e^{\lambda \sqrt{N} \phi_+ ( g) /2 } \psi_{\mathcal{G}_N (0)} \right\rangle \; . 
\end{align}
Thus we are left with controlling the difference of \eqref{eq:xi-1} and \eqref{eq:xi-0} for which we shall use estimates on the derivative
\begin{align}
\label{eq:step2-1}
\partial_s \| \xi (s) \|^2 = 2 \Re \langle \xi (s) , \; \partial_s \xi (s) \rangle \; . 
\end{align}
As a preliminary step towards computing the derivative of $\xi (s)$ we compute with the Hellmann-Feynmann theorem the ground states derivative given with the notation $q_{\psi_{\mathcal{G}_N (s)}} := 1 - \vert \psi_{\mathcal{G}_N (s)} \rangle \langle \psi_{\mathcal{G}_N (s)} \vert$ by 
\begin{align}
\label{eq:deriv-gs-step1}
  \partial_s \psi_{\mathcal{G}_N (s)} = \frac{q_{\psi_{\mathcal{G}_N (s)}}}{\cG_N (s) - E_N(s) }\mathcal{R}_N \;  \psi_{\mathcal{G}_N (s)} \; . 
\end{align}
Proposition \ref{claim:gs} ensures that the reduced resolvent $ \frac{q_{\psi_{\mathcal{G}_N (s)}}}{\cG_N (s) - E_N(s) }$ is well defined for all $s \in [0,1]$ and, in particular, bounded from above independent of $N,s$. 
We remark that by Proposition \ref{claim:gs} the r.h.s. of \eqref{eq:deriv-gs-step1} is, Eq. \eqref{def:Q} and \eqref{eq:bounds-b} in norm by bounded by a constant.  However we can not bound the derivative in norm here, but we need to compute the conjugation of the operators of the r.h.s. of \eqref{eq:deriv-gs-step1} with the exponentials of $\sqrt{N} \lambda\phi_+ (g ), \mathcal{N}_+$ to then bound the operators of the r.h.s. of \eqref{eq:deriv-gs-step1} in form. To this end, we introduce the splitting $ \mathcal{R}_N = \sum_{j=1}^2 \mathcal{R}_N^{(j)}$  given by 
\begin{align}
\label{def:Rj}
\mathcal{R}_N^{(1)} =& \frac{1}{\sqrt{N}} \sum_{p,q \in \Lambda_+^*, p \not= q} \widehat{v}(q) \left( b_{p+q}^* a_{-q}^* a_p + {\rm h.c.} \right) \notag \\
\mathcal{R}_N^{(3)} =&  \frac{1}{2N} \sum_{p,q,k \in \Lambda_+^* } \widehat{v} (k) a_{p-k}^* a_{q+k}^* a_p a_q \; 
\end{align}
We remark that $\mathcal{G}_N (s)$ leaves the truncated Fock space invariant (as it is formulated w.r.t. to modified creation and annihilation operators only). Moreover, we recall that it follows from Lemma \ref{claim:gs} that 
\begin{align}
\label{eq:bounds-tildeG}
\Big \| \frac{q_{\psi_{\mathcal{G}_N (s)}}}{\cG_N (s) - E_N(s) }  \Big\|,  \quad \| \frac{q_{\psi_{\mathcal{G}_N (s)}}}{\cG_N (s) - E_N(s) } ( \mathcal{N}_+ + 1)\Big\| \leq C  
\end{align}
and 
\begin{align}
\label{eq:bounds-tildeG2}
 \Big\|( \mathcal{N}_+ + 1)^{-1/2} \frac{q_{\psi_{\mathcal{G}_N (s)}}}{\cG_N (s) - E_N(s) } ( \mathcal{N}_+ + 1)^{3/2} \Big\| \leq C  \; . 
\end{align}
We use these properties now in the following to estimate the derivative
\begin{align}
\label{eq:deriv-norm1}
\partial_s \| \xi_N(s) \|^2 = 2 \Re \langle \xi_N (s), \; \mathcal{M} (s) \xi_N (s) \rangle 
\end{align}
where the operator $\cM (s)$ is with \eqref{eq:deriv-gs-step1} given by 
\begin{align}
\cM (s) =&   e^{-  \lambda \kappa_s  \mathcal{N}/2} e^{\lambda \sqrt{N} \phi_+ ( q O \varphi ) /2 } \frac{q_{\psi_{\mathcal{G}_N (s)}}}{\cG_N (s) - E_N(s)}  \mathcal{R}_N   e^{-\kappa_s \lambda \sqrt{N} \phi_+ ( g ) /2 } e^{\kappa_s \lambda\mathcal{N}_+ /2}  \; \notag \\
&- \lambda  \dot{\kappa}_s \mathcal{N}_+ \; .   \label{def:M1}
\end{align}
It follows that denoting with
\begin{align}
\widetilde{ \mathcal{G}}_N(s) = e^{ - \lambda \kappa_s \mathcal{N}_+ /2} e^{\lambda \sqrt{N} \phi_+ ( g ) /2 } \mathcal{G}_N(s) e^{-\lambda \sqrt{N} \phi_+ ( g) /2 } e^{  \lambda \kappa_s \mathcal{N}_+ /2} 
\end{align}
the conjugated Hamiltonian and thus 
\begin{align}
e^{ - \lambda \kappa_s \mathcal{N}_+ /2} & e^{\lambda \sqrt{N} \phi_+ ( g ) /2 } \frac{q_{\psi_{\mathcal{G}_N (s)}}}{\cG_N (s) - E_N(s)} e^{-\lambda \sqrt{N} \phi_+ ( g ) /2 } e^{ \lambda \kappa_s \mathcal{N}_+ /2} =   \frac{q_{\psi_{\widetilde{\mathcal{G}}_N (s)}}} {\widetilde{\cG}_N (s) - E_N(s)} \label{eq:red-resolv-1} 
\end{align}
where we introduced the notation 
\begin{align}
\widetilde{q}_{\psi_{\mathcal{G}_N (s)}} :=  e^{ - \lambda \kappa_s \mathcal{N}_+ /2}  e^{\lambda \sqrt{N} \phi_+ ( g ) /2 } q_{\psi_{\mathcal{G}_N (s)}}  e^{-\lambda \sqrt{N} \phi_+ ( g ) /2 } e^{ \lambda \kappa_s \mathcal{N}_+ /2} . 
\end{align}
We remark that despite that $\widetilde{q}_{\psi_{\mathcal{G}_N (s)}}$ is not a projection it still commutes with $\cG_N (s)$ and we have for any $\psi \in \mathcal{F}_{\perp \varphi}^{\leq N}$ the bound 
\begin{align}
\| \widetilde{q}_{\psi_{\mathcal{G}_N (s)}} \xi_N (s) \| &\leq C  \| \xi_N(s) \| + \|  e^{-\lambda \sqrt{N} \phi_+ ( g ) /2 } e^{ \lambda \kappa_s \mathcal{N}_+ /2}  \psi_{\mathcal{G}_N(s)} \|  \; \|  e^{\lambda \sqrt{N} \phi_+ ( g ) /2 } e^{- \lambda \kappa_s \mathcal{N}_+ /2}  \xi_N (s) \| \; \notag \\
& \leq C \| \xi_N(s) \| \; . 
\end{align}
Hence we find 
\begin{align}
\label{eq:estime-tildeG}
\Big\| \frac{\widetilde{q}_{\psi_{\mathcal{G}_N (s)}}} {\widetilde{\cG}_N (s) - E_N(s)}  \xi_N (s) \Big\| \leq C \Big\| \widetilde{q}_{\psi_{\mathcal{G}_N (s)}}\xi_N (s) \Big\| \leq C \| \xi_N (s) \| \; 
\end{align}
where the estimates are independent in $N$. Furthermore with the notation
\begin{align}
e^{ - \lambda \kappa_s \mathcal{N}_+ /2} e^{\lambda \sqrt{N} \phi_+ ( g ) /2 } \mathcal{R}_N e^{-\lambda \sqrt{N} \phi_+ ( g) /2 } e^{  \lambda  \mathcal{N}_+ /2} = \mathcal{R}_N +  A_{\mathcal{R}_N}    
\end{align}
we arrive for \eqref{def:M1} at 
\begin{align}
\cM (s) =&  \frac{\widetilde{q}_{\psi_{\mathcal{G}_N (s)}}} {\widetilde{\cG}_N (s) - E_N(s)}  \mathcal{R}_N + \frac{\widetilde{q}_{\psi_{\mathcal{G}_N (s)}}} {\widetilde{\cG}_N (s) - E_N(s)} A_{\mathcal{R}_N }  - \lambda  \dot{\kappa}_s \mathcal{N}_+ \; .   \label{eq:M2}
\end{align}
We will show in the following that the first two terms can be bounded by terms that are either $O( \lambda^3 N)$ (and thus subleading) or can be bounded in terms of operators that are compensated by the last term for properly chosen $\kappa_s$. To this end, we estimate the two first terms of the r.h.s. of \eqref{eq:M2} separately. While the second term is by definition already at least linear in $\lambda$, we need to use the reduced resolvent's properties for the first term. 

We start with the second term of the r.h.s. of \eqref{eq:M2} and consider for this  the single contributions of $\mathcal{R}_N$ separately, i.e. we define with \eqref{def:Rj} the sum $ A_{\mathcal{R}_N} =  \sum_{j=1}^2  A_{\mathcal{R}_N^{(j)}}$ where 
\begin{align}
\label{def:A-R} 
A_{\mathcal{R}_N^{(j)}}= e^{ - \lambda \kappa_s \mathcal{N}_+ /2} e^{\lambda \sqrt{N} \phi_+ ( g ) /2 } \mathcal{R}_N^{(j)} e^{-\lambda \sqrt{N} \phi_+ ( g) /2 } e^{  \lambda \kappa_s \mathcal{N}_+/2 } - \mathcal{R}_N^{(j)} \; . 
\end{align}
The first term $A_{\mathcal{R}_N^{(1)}}$ is by \cite[Proposition 2.2-2.4]{KRS} (resp. Lemma \ref{lemma:b-con}, \ref{prop:eN}) of the form 
\begin{align}
\label{eq:splitting-AR2}
A_{\mathcal{R}_N^{(1)}} = B_{\mathcal{R}_N^{(1)}} + D_{\mathcal{R}_N^{(1)}} + \mathcal{E}_{\mathcal{R}_N^{(1)}}
\end{align}
where $\| B_{\mathcal{R}_N^{(1)}} \| \leq C \| O \|^3 \lambda^3 N$ for all $\lambda \kappa_s \leq 1$ while the operator $D_{\mathcal{R}_N^{(1)}}$ quadratic in $\lambda$ is given by 
\begin{align}
D_{\mathcal{R}_N^{(1)}} =& \frac{\lambda^2}{4}  \sum_{p,q  \in \Lambda_+^*, p  \not= q} \widehat{v} (q)  \left[ \kappa_s b_{p+q}^* + \sqrt{N} g_{p+q} \Big( 1- \frac{\cN_+}{N} \Big) - \frac{1}{\sqrt{N}}  a^*_{p+q} a(g)  \right]  \left[ g_{-q} b_p - g_p b_{-q} \right]   \notag \\
&+ \frac{\lambda^2}{4\sqrt{N}}  \sum_{p,q  \in \Lambda_+^*, p  \not= q} \widehat{v} (q)  \left[ \| g \|^2_{\ell^2} b^*_{p+q} - g_{p+q} i \phi_- (g) + g_{p+q} b(g) \right] a^*_{-q}a_p \notag \\
&+ \frac{\lambda^2}{4\sqrt{N}}  \sum_{p,q  \in \Lambda_+^*, p  \not= q} \widehat{v} (q)   b_{p+q}^* \left[ \kappa_s \sqrt{N} \left( g_{-q} b_p + g_p b_{-q} \right)   -  N h_{-q} h_p \big ( 1 -\frac{\cN_+}{N} \Big) \right. \notag \\
& \hspace{5cm}  +  h_{-q} a^*(g) a_p + h_p a_{-q}^*    a(g)  \Big] + {\rm h.c.} 
\end{align}
and thus bounded for $\psi \in \mathcal{F}_{\perp \varphi_0}^{\leq N}$ and $\lambda \kappa_s \leq 1$ by 
\begin{align}
\label{eq:AR2-D}
\| D_{\mathcal{R}_N^{(1)}} \psi \| &\leq C \lambda^2 \| O \|^2 \sqrt{N} ( \kappa_s +1 ) \| ( \mathcal{N}_+ +1 )^{1/2} \psi \| \notag \\
&\leq C  \ O \|^3 \lambda^3 N( \kappa_s^2 + 1) \| \psi \|^2 +  C \lambda  \| O \| \| \cN_+^{1/2} \psi \|^2 \; . 
\end{align}
For the contribution of \eqref{eq:splitting-AR2} linear in $\lambda$ we find 
\begin{align}
\mathcal{E}_{\mathcal{R}_N^{(1)}} =& \frac{\lambda}{2\sqrt{N}}  \sum_{p,q  \in \Lambda_+^*, p  \not= q} \widehat{v} (q)  \left[ \kappa_s b_{p+q}^* + \sqrt{N} g_{p+q} \Big( 1- \frac{\cN_+}{N} \Big) - \frac{1}{\sqrt{N}}  a^*_{p+q} a(g)  \right]  a^*_{-q}a_p   \notag \\
&+ \frac{\lambda}{2}  \sum_{p,q  \in \Lambda_+^*, p  \not= q} \widehat{v} (q)  b_{p+q}^*  \left[ g_{-q} b_p - g_p b_{-q} \right] \notag \\
 &- {\rm h.c.} 
\end{align}
that is bounded for $\psi \in \mathcal{F}_{\perp \varphi_0}^{\leq N}$ and $\lambda\kappa_s \leq 1$ by 
\begin{align}
\label{eq:AR2-E}
\vert \langle \psi, \; \frac{\widetilde{q}_{\psi_{\mathcal{G}_N (s)}}} {\widetilde{\cG}_N (s) - E_N(s)} \mathcal{E}_{\mathcal{R}_N^{(1)}} \psi \rangle \vert  \leq C \lambda  \| O \| ( \kappa_s + 1 ) \| ( \mathcal{N}_+  +  1)^{1/2} \psi \|^2 \; .  
\end{align}
Thus from \eqref{eq:splitting-AR2}, \eqref{eq:AR2-D} and \eqref{eq:AR2-E} we get for all $\lambda\kappa_s \leq 1$
\begin{align}
\label{eq:AR2}
\vert \langle \xi_N (s),  & \; \frac{\widetilde{q}_{\psi_{\mathcal{G}_N (s)}}} {\widetilde{\cG}_N (s) - E_N(s)} A_{\mathcal{R}_N^{(1)}} \xi_N (s) \rangle  \vert \notag \\
&\leq  C \lambda \| O \| ( \kappa_s + 1 ) \langle \xi_N (s), ( \cN_+ + 1) \xi_N(s) \rangle + C N \lambda^3 \| \xi_N (s) \|^2 \; 
\end{align}
that is again of the desired form. For the second term of \eqref{def:Rj} we first observe that by the commutation relations we can write 
\begin{align}
\mathcal{R}_N^{(2)} = \frac{1 }{2N} \sum_{p,q,k \in \Lambda_+^*} \widehat{v} (k) a^*_{p-k}  a_q a^*_{q+k}a_p  
\end{align}
and thus, we find with \cite[Proposition 2.2-2.4]{KRS} (resp. Lemma \ref{lemma:b-con}, \ref{prop:eN}) that 
\begin{align}
\label{eq:splitting-AR3}
A_{\mathcal{R}_N^{(2)}} = B_{\mathcal{R}_N^{(2)}} + D_{\mathcal{R}_N^{(2)}} + \mathcal{E}_{\mathcal{R}_N^{(2)}}
\end{align}
where $\lambda\kappa_s \leq 1$ we have $\| B_{\mathcal{R}_N^{(3)}} \| \leq C\ |O \|^3 \lambda^3 N$ and 
\begin{align}
D_{\mathcal{R}_N^{(2)}} =& \frac{\lambda^2 N }{8N} \sum_{p,q,k \in \Lambda_+^*} \widehat{v} (k) \left[ g_{-p+k} a_{q} - g_q a^*_{p-k} \right] \left[ g_{-q-k} a_{p} - g_p a^*_{q+k}  \right] \notag \\
&+ \frac{\lambda^2}{8N}  \sum_{p,q,k \in \Lambda_+^*} \widehat{v} (k)  a^*_{p-k}a_q \Big[ \kappa_s \sqrt{N} \left[ g_{-q-k} a_{p} + g_p a^*_{q+k}  \right]- N  g_{-q-k} g_p \Big( 1 -\frac{\cN_+}{N} \Big)  \notag \\
&\hspace{6cm} +  a^*(g) g_{-q-k} a_{p} + a_{q+k}^* g_{p} a (g)  \Big] \notag \\
&+ \frac{\lambda^2}{8N}  \sum_{p,q,k \in \Lambda_+^*} \widehat{v} (k) \Big[ \kappa_s \sqrt{N} \left[ g_{-p+k} a_{q} + g_q a^*_{p-k}  \right]- N  g_{-p+k}  g_q \Big( 1 -\frac{\cN_+}{N} \Big)  \notag \\
&\hspace{6cm} +  a^*(g) g_{-p+k} a_{q} + a_{p-k}^* g_{q} a (g)  \Big] a^*_{q+k}a_p  \notag \\
&+ {\rm h.c.}
\end{align}
and thus bounded for $\psi \in \mathcal{F}_{\perp \varphi_0}^{\leq N}$ and $\lambda\kappa_s \leq 1$ by 
\begin{align}
\label{eq:AR3-D}
\| D_{\mathcal{R}_N^{(2)}} \psi \| \leq C \lambda^2 \sqrt{N} \| ( \mathcal{N}_+ +1 )^{1/2} \psi \| \leq C  \| O \|^3 \lambda^3 N \| \psi \|^2 + C \lambda \ O \| \| \cN_+^{1/2} \psi \|^2 \; 
\end{align}
For the linear contributions of \eqref{eq:splitting-AR3} we find 
\begin{align}
\mathcal{E}_{\mathcal{R}_N^{(2)}} =& \frac{\lambda \sqrt{N} }{4N} \sum_{p,q,k \in \Lambda_+^*} \widehat{v} (k)\left(  \left[ g_{-p+k} a_{q} - g_q a^*_{p-k} \right] a^*_{q+k}a_p + a^*_{p-k}a_q \left[ g_{-q-k} a_{p} - g_p a^*_{q+k}  \right]  \right) \notag \\
&- {\rm h.c.}
\end{align}
that is bounded for $\psi \in \mathcal{F}_{\perp \varphi_0}^{\leq N}$ and $\lambda\kappa_s \leq 1$ by 
\begin{align}
\label{eq:AR3-E}
\vert \langle \psi, \; \frac{\widetilde{q}_{\psi_{\mathcal{G}_N (s)}}} {\widetilde{\cG}_N (s) - E_N(s)} \mathcal{E}_{\mathcal{R}_N^{(3)}} \psi \rangle \vert  \leq C \lambda \ |O \| \| ( \mathcal{N}_+  +  1)^{1/2} \psi \|^2 \; .  
\end{align}
Thus from \eqref{eq:splitting-AR3}, \eqref{eq:AR3-D} and \eqref{eq:AR3-E} we get 
\begin{align}
\label{eq:AR3}
\vert \langle \xi_N (s),  & \; \frac{\widetilde{q}_{\psi_{\mathcal{G}_N (s)}}} {\widetilde{\cG}_N (s) - E_N(s)} A_{\mathcal{R}_N^{(2)}} \xi_N (s) \rangle  \vert \notag \\
&\leq  C \lambda \ O \| \langle \xi_N (s), ( \cN_+ + 1) \xi_N(s) \rangle + C N \| o \|^3 \lambda^3 \| \xi_N (s) \|^2 \; .  
\end{align}
With \eqref{eq:AR2} and \eqref{eq:AR3} we therefore conclude that the second term of the r.h.s. of \eqref{eq:M2} is bounded by 
\begin{align}
\vert \langle \xi_N (s),  & \; \frac{\widetilde{q}_{\psi_{\mathcal{G}_N (s)}}} {\widetilde{\cG}_N (s) - E_N(s)} A_{\mathcal{R}_N} \xi_N (s) \rangle  \vert \notag \\
&\leq  C \lambda \| O \| ( 1 + \kappa_s) \langle \xi_N (s), ( \cN_+ + 1) \xi_N(s) \rangle + C N \| O \|^3 \lambda^3 \| \xi_N (s) \|^2 \; \label{eq:M2-bound1}
\end{align}
for all $\lambda\kappa_s \leq 1$, that is of the desired form, namely the first term can be compensated (for properly chosen $\kappa_s$) by the last term of the r.h.s. of \eqref{eq:M2} while the second term is considered sub-leading here. 

It remains to show a similar bound for the first term of the r.h.s. of \eqref{eq:M2}. For this we use the resolvent's properties and first note that 
\begin{align}
\frac{\widetilde{q}_{\psi_{\mathcal{G}_N (s)}}} {\widetilde{\cG}_N (s) - E_N(s)} \mathcal{R}_N = \frac{\widetilde{q}_{\psi_{\mathcal{G}_N (s)}}} {\widetilde{\cG}_N (s) - E_N(s)} \big( p_{\psi_{\mathcal{G}_N (s)}} + q_{\psi_{\mathcal{G}_N (s)}} \big)  \mathcal{R}_N \label{eq:M22-1}
\end{align}
while the first term is with Lemma \ref{claim:gs} bounded by 
\begin{align}
\vert \langle \xi_N (s) , & \; \frac{\widetilde{q}_{\psi_{\mathcal{G}_N (s)}}} {\widetilde{\cG}_N (s) - E_N(s)} p_{\psi_{\mathcal{G}_N (s)}} \mathcal{R}_N \xi_N (s) \rangle \vert \notag \\
\leq& \Big\| \frac{\widetilde{q}_{\psi_{\mathcal{G}_N (s)}}} {\widetilde{\cG}_N (s) - E_N(s)}  \xi_N (s) \Big\| \|  p_{\psi_{\mathcal{G}_N (s)}} \mathcal{R}_N \xi_N (s) \|  \leq C N^{-1/2} \| \xi_N (s) \|^2 \label{eq:M22-}
\end{align}
we use the resolvent identity for the second term and write 
\begin{align}
\frac{\widetilde{q}_{\psi_{\mathcal{G}_N (s)}} q_{\psi_{\mathcal{G}_N (s)}}} {\widetilde{\cG}_N (s) - E_N(s)} \mathcal{R}_N =& \frac{q_{\psi_{\mathcal{G}_N (s)}}} {\cG_N (s) - E_N(s)} \mathcal{R}_N\notag \\
&+  \frac{\widetilde{q}_{\psi_{\mathcal{G}_N (s)}}} {\widetilde{\cG}_N (s) - E_N(s)}  \left(A_{\mathcal{Q}} + A_{\mathcal{R}_N} \right)  \frac{q_{\psi_{\mathcal{G}_N (s)}}} {\cG_N (s) - E_N(s)} \mathcal{R}_N\label{eq:M22-3}
\end{align}
with the notation \eqref{def:A-R} and 
\begin{align}
A_{\mathcal{Q}} = e^{ - \lambda \kappa_s \mathcal{N}_+ /2} e^{\lambda \sqrt{N} \phi_+ ( g ) /2 }  \mathcal{Q} e^{-\lambda \sqrt{N} \phi_+ ( g) /2 } e^{  \lambda \mathcal{N}_+ /2} - \mathcal{Q} \; . 
\end{align}
Thus summarizing \eqref{eq:M22-1}-\eqref{eq:M22-3}, it follows from Lemma \ref{claim:gs} that the first term of the r.h.s. of \eqref{eq:M2} is bounded by 
\begin{align}
\vert \langle \xi_N (s),& \; \frac{q_{\psi_{\widetilde{\mathcal{G}}_N (s)}}} {\widetilde{\cG}'_N (s) - E_N(s)} \mathcal{R}_N \xi_N (s) \rangle \vert \notag \\
&\leq C \| \xi_N (s) \|^2 + \vert \langle \xi_N (s), \; \frac{\widetilde{q}_{\psi_{\mathcal{G}_N (s)}}} {\widetilde{\cG}_N (s) - E_N(s)}  \left( A_{\mathcal{Q}} + A_{\mathcal{R}_N} \right)  \frac{q_{\psi_{\mathcal{G}_N (s)}}} {\cG_N (s) - E_N(s)} \mathcal{R}_N \xi_N (s) \rangle \vert  \; \label{eq:M22-weiter}
\end{align}
and we are left with bounding the last term. For this we are going to use that on the one hand the operators $A_{\mathcal{Q}'},A_{\mathcal{R}_N'}$ are at least linear in $\lambda$ and the resolvent w.r.t. to $\mathcal{G}_N'(s)$ allows to bound the number of particle operator. We proceed similarly as before and observe hat it follows from \cite[Proposition 2.2-2.4]{KRS} (resp. Lemma \ref{lemma:b-con}, \ref{prop:eN})  that for all $\lambda\kappa_s \leq 1$ we have from Lemma \ref{lemma:bb-conj} that 
\begin{align}
\label{eq:AQ1}
A_{\mathcal{Q}} = D_{\mathcal{Q}} +\mathcal{E}_{\mathcal{Q}} 
\end{align}
with $\| D_{\mathcal{Q}} \|_{\rm op} \leq N \lambda^2$ and since $\| p^2 g \|_{\ell^2} \leq C \vertiii{O} \| \varphi_0 \|_{H^2} \leq  C \vertiii{O}$ for any $\psi \in \mathcal{F}_{\perp \varphi_0}^{\leq N}$ that  
\begin{align}
\label{eq:QE-final}
\| \mathcal{E}_{\mathcal{Q}} \psi \| \leq C \sqrt{N} \lambda (  \| O \| + \kappa_s \vertiii{O}  ) \| ( \mathcal{N}_+ + 1)^{1/2} \psi \| \; . 
\end{align}
We recall that from \eqref{eq:AR2} and \eqref{eq:AR3},  we similarly have for all $\lambda\kappa_s \leq 1$
\begin{align}
A_{\mathcal{R}_N} = D_{\mathcal{R}_N} + \mathcal{E}_{\mathcal{R}_N} 
\end{align}
with $\| D_{\mathcal{R}_N} \| \leq C N \| O \|^2 \lambda^2$ and 
\begin{align}
\label{eq:RE-final}
\| \mathcal{E}_{\mathcal{R}_N} \psi \| \leq C \sqrt{N} \| O \| \lambda  ( 1+ \kappa_s) \| ( \mathcal{N}_+ + 1)^{1/2} \psi \|
\end{align}
for any $\psi \in \mathcal{F}_{\perp \varphi_0}^{\leq N}$. Thus from \eqref{eq:M22-weiter} we find 
\begin{align}
\vert \langle \xi_N (s),& \; \frac{\widetilde{q}_{\psi_{\mathcal{G}_N (s)}}} {\widetilde{\cG}_N (s) - E_N(s)} \mathcal{R}_N \xi_N (s) \rangle \vert \notag \\
&\leq C \| \xi_N (s) \|^2 + \lambda^2 N \Big\|  \frac{q_{\psi_{\mathcal{G}_N (s)}}} {\cG_N (s) - E_N(s)} \mathcal{R}_N \xi_N (s) \Big \|   \notag \\
&+  \vert \langle \xi_N (s), \; \frac{\widetilde{q}_{\psi_{\mathcal{G}_N (s)}} } {\widetilde{\cG}_N (s) - E_N(s)}  \left( \mathcal{E}_{\mathcal{Q}} + \mathcal{E}_{\mathcal{R}_N} \right)  \frac{q_{\psi_{\mathcal{G}_N (s)}}} {\cG_N (s) - E_N(s)} \mathcal{R}_N \xi_N (s) \rangle \vert  \; \label{eq:M22-weiter}
\end{align}
and we find with Lemma \ref{claim:gs} 
\begin{align}
\vert \langle \xi_N (s),& \; \frac{\widetilde{q}_{\psi_{\mathcal{G}_N (s)}}} {\widetilde{\cG}_N (s) - E_N(s)} \mathcal{R}_N \xi_N (s) \rangle \vert \notag \\
&\leq C \| \xi_N (s) \|^2 + \lambda^2 \| O \|^2 \sqrt{N} \| ( \mathcal{N}_+ + 1 ) \xi_N (s) \|^2 \notag \\
&+  \vert \langle \xi_N (s), \; \frac{\widetilde{q}_{\psi_{\mathcal{G}_N (s)}}} {\widetilde{\cG}_N (s) - E_N(s)}  \left( \mathcal{E}_{\mathcal{Q}} + \mathcal{E}_{\mathcal{R}_N} \right)  \frac{q_{\psi_{\mathcal{G}_N (s)}}} {\cG_N (s) - E_N(s)} \mathcal{R}_N \xi_N (s) \rangle \vert  \; \label{eq:M22-noch-weiter}
\end{align}
The first two summands of the r.h.s. of \eqref{eq:M22-noch-weiter} is already of the desired form as it can be bounded by terms $O( \lambda^3N)$ resp. terms of the form $\lambda \cN_+$. For the last term, we however have to estimate more carefully. We use once more the resolvent formula and write with the notation $\mathcal{E}_{\cG_N (s)} = \mathcal{E}_{\mathcal{Q}} + \mathcal{E}_{\mathcal{R}_N}$ and $D_{\cG_N (s)} = D_{\mathcal{Q}} + D_{\mathcal{R}_N}$
\begin{align}
& \frac{\widetilde{q}_{\psi_{\mathcal{G}_N (s)}} } {\widetilde{\cG}_N (s) - E_N(s)}  \mathcal{E}_{\cG_N (s)}  \frac{q_{\psi_{\mathcal{G}_N (s)}}} {\cG_N (s) - E_N(s)} \mathcal{R}_N \notag \\
&= \left( \frac{\widetilde{q}_{\psi_{\mathcal{G}_N (s)}} } {\widetilde{\cG}_N (s) - E_N(s)} ( \mathcal{E}_{\cG_N (s)} +  D_{\cG_N (s)}  )+ 1 \right)  \frac{q_{\psi_{\mathcal{G}_N (s)}}} {\cG_N (s) - E_N(s)}  \mathcal{E}_{\cG_N (s)}  \frac{q_{\psi_{\mathcal{G}_N (s)}}} {\cG_N (s) - E_N(s)} \mathcal{R}_N \; .  
\end{align}
It follows from \eqref{eq:QE-final}, \eqref{eq:RE-final} and Lemma \ref{claim:gs} that for all $\lambda\kappa_s \leq 1$ we have 
\begin{align}
\Big \|  & \mathcal{E}_{\cG_N (s)}  \frac{q_{\psi_{\mathcal{G}_N (s)}}} {\cG_N (s) - E_N(s)}  \mathcal{E}_{\cG_N (s)}  \frac{q_{\psi_{\mathcal{G}_N (s)}}} {\cG_N (s) - E_N(s)} \mathcal{R}_N \xi_N (s) \Big\| \notag \\
&\leq C \sqrt{N} \lambda  \vertiii{O} ( 1 + \kappa_s ) \Big\|   ( \mathcal{N}_+ + 1)^{1/2}\frac{q_{\psi_{\mathcal{G}_N (s)}}} {\cG_N (s) - E_N(s)}  \mathcal{E}_{\cG_N (s)}  \frac{q_{\psi_{\mathcal{G}_N (s)}}} {\cG_N' (s) - E_N(s)} \mathcal{R}_N\xi_N (s) \Big\| \notag \\
&\leq C \sqrt{N} \lambda  \vertiii{O} ( 1 + \kappa_s )  \Big\|  ( \mathcal{N}_+ + 1)^{-1/2} \mathcal{E}_{\cG_N (s)}  \frac{q_{\psi_{\mathcal{G}_N (s)}}} {\cG_N' (s) - E_N(s)} \mathcal{R}_N \xi_N (s) \Big\|  
\end{align}
With $\mathcal{N}_+ a_p = a_p (\mathcal{N}_+ - 1) $ it follows from \eqref{eq:AR2-E}, \eqref{eq:AR3-E} and  \eqref{eq:QE-final} that 
\begin{align}
\| ( \mathcal{N}_+ + 1)^{-1/2}  \mathcal{E}_{\cG_N (s)} \psi \| \leq C \sqrt{N} \lambda  \vertiii{O} ( 1 + \kappa_s ) \| \psi \| 
\end{align}
for any $\psi \in \mathcal{F}_{\perp \varphi_0}^{\leq N}$, and therefore 
\begin{align}
\Big\|  & \mathcal{E}_{\cG_N (s)}  \frac{q_{\psi_{\mathcal{G}_N (s)}}} {\cG_N (s) - E_N(s)}  \mathcal{E}_{\cG_N (s)}  \frac{q_{\psi_{\mathcal{G}_N (s)}}} {\cG_N (s) - E_N(s)} \mathcal{R}_N\xi_N (s) \Big\| \notag \\
&\leq C N \lambda^2   \vertiii{O}^2 ( 1 + \kappa_s )^2   \Big\|  \frac{q_{\psi_{\mathcal{G}_N (s)}}} {\cG_N (s) - E_N(s)} \mathcal{R}_N\xi_N (s) \Big\| \notag \\
&\leq   C \sqrt{N} \lambda^2   \vertiii{O}^2 ( 1 + \kappa_s )^2    \Big\| ( \mathcal{N}_+ + 1)^{1/2} \xi_N(s) \Big\|  \; . \label{eq:M22-E1}
\end{align}
where we used Lemma \ref{claim:gs} and \eqref{eq:R-norm}. Similarly we find with Lemma \ref{claim:gs}
\begin{align}
\Big\| &  D_{\cG_N (s)} \frac{q_{\psi_{\mathcal{G}_N (s)}}} {\cG_N (s) - E_N(s)}  \mathcal{E}_{\cG_N (s)}  \frac{q_{\psi_{\mathcal{G}_N (s)}}} {\cG_N (s) - E_N(s)} \mathcal{R}_N \xi_N(s) \Big\| \notag \\
&\leq C N^{3/2}\vertiii{O}^3 \lambda^3 (1+ \kappa_s)^3 \Big\| ( \mathcal{N}_+ +1)^{-1/2}  \frac{q_{\psi_{\mathcal{G}_N (s)}}} {\cG_N (s) - E_N(s)} \mathcal{R}_N \xi_N(s) \Big\| \notag \\
& \leq C  N \vertiii{O}^3 \lambda^3 (1+ \kappa_s)^3 \| \xi_N (s) \|^2 \; .  \label{eq:M22-E2}
\end{align}
Therefore we arrive with \eqref{eq:estime-tildeG},  \eqref{eq:M22-E1} and \eqref{eq:M22-E2} for \eqref{eq:M22-noch-weiter} for all $\lambda\kappa_s \leq 1$ at 
\begin{align}
\vert \langle & \xi_N (s), \; \frac{q_{\psi_{\widetilde{\mathcal{G}}_N (s)}}} {\widetilde{\cG}_N (s) - E_N(s)} \mathcal{R}_N \xi_N (s) \rangle \vert \notag \\&\leq C N \lambda^3  \vertiii{O}^3 (1+ \kappa_s)^3 \| \xi_N (s) \|^2 +  C \lambda \vertiii{O} \| ( \mathcal{N}_+ + 1 )^{1/2} \xi_N (s) \|^2 
\end{align}
that is of the desired form. Together with \eqref{eq:M2-bound1} we thus get for \eqref{eq:M2} that 
\begin{align}
2 \Re\langle & \xi_N (s) , \mathcal{M} (s) \xi_N(s) \rangle  \notag \\
&\geq  \left( C \lambda \vertiii{O} ( \kappa_s + 1) - \dot{\kappa}_s \right) \langle \xi_N (s),  \;  \cN_+ \xi_N(s) \rangle + C (N \lambda^3 + \lambda) \| \xi_N (s) \|^2 \; . 
\end{align}
With the choice
\begin{align}
\kappa_s = 2 \| O \| e^{C (s-1)} + \vertiii{O} ( e^{C (s-1)} - 1 )
\end{align}
yielding  $\partial_s \| \xi_N (s) \|^2 \geq -  C (\lambda + N \lambda^3) \| \xi_N (s) \|^2$ we arrive with Gronwall's inequality at the lower bound of Lemma \ref{lemma:step2}. 

The upper bound follows with similar ideas, replacing the lower with upper bounds and $\kappa_s$ with $-\kappa_s$. 
\end{proof}

\subsection{Step 3.} We recall that we are left with computing the expectation value w.r.t. to the ground state $\psi_{\mathcal{Q}}$ of the quadratic Hamiltonian $\mathcal{Q}$ given by \eqref{def:Q}. In this step we will show that the ground state is approximately given by $e^{B( \mu)} \Omega$, that is a  an generalized Bogoliubov transform applied to the vacuum vector. Furthermore, we show that $e^{B( \mu)}$ acts on the observable $\phi_+ (h)$ as a Bogoliubov transform, i.e. that $e^{-B( \mu)} \phi_+ (h) e^{B( \mu)}$ can be approximated by $\phi_+(f)$ with $f$ given by \eqref{def:f}. 
The main difficulty in this step here is that all quantities are formulated w.r.t. to modified creation and annihilation operators for which the action of the Bogoliubov transform is not explicitly given. However we use \eqref{def:d} and \eqref{eq:estimates-dp},\eqref{eq:estimates-dp*} and Lemmas \ref{lemma:dp}-\ref{lemma:conj-dd} to prove the following Lemma.

\begin{lemma} 
\label{lemma:step3}
Let $\kappa>0$. Under the same assumptions as in Theorem \ref{thm:main}, there exist constants $C_1,C_2 >0 $ and $\kappa_1, \kappa_2 >0$ such that for all $0 \leq \lambda \leq \min \lbrace 1/(\kappa_1 \vertiii{O}), 1/(\kappa_2 \vertiii{O}) \rbrace$ we have 
\begin{align}
& \langle \psi_{\mathcal{Q}}, \; e^{\lambda \sqrt{N} \phi_+ (g)/2} e^{  \kappa \vertiii{O} \mathcal{N}_+} e^{\lambda \sqrt{N} \phi_+ (g)/2}  \psi_{\mathcal{Q}}\rangle  \notag \\
&\quad \leq  e^{C_1 ( N \lambda^3 + \lambda )} \langle \Omega, \; e^{\lambda \sqrt{N} \phi_+ (f)/2} e^{  \kappa_1 \vertiii{O} \mathcal{N}_+ } e^{\lambda \sqrt{N} \phi_+ (f)/2}  \Omega \rangle  
\end{align}
resp. 
\begin{align}
&  \langle \psi_{\mathcal{Q}}, \; e^{\lambda \sqrt{N} \phi_+ (g)/2} e^{- \lambda \kappa \vertiii{O} \mathcal{N}_+} e^{\lambda \sqrt{N} \phi_+ (g)/2}  \psi_{\mathcal{Q}}\rangle  \notag \\
&\quad \geq  e^{-C_2 (N \lambda^3 + \lambda)}\langle \Omega, \; e^{\lambda \sqrt{N} \phi_+ (f)/2} e^{- \lambda \kappa_2 \vertiii{O} \mathcal{N}_+ } e^{\lambda \sqrt{N} \phi_+ (f)/2} \Omega \rangle  \label{eq:lemma3-lb}
\end{align}
where $f$ is given by \eqref{def:f}. 
\end{lemma}

\begin{proof} We start with the proof of the lower bound. The upper bound then follows with similar arguments as in the previous steps. As the generalized Bogoliubov transform \eqref{def:Bogo-b} is a unitary operator we write 
\begin{align}
\langle \psi_{\mathcal{Q}}, & \; e^{\lambda \sqrt{N} \phi_+ (g)/2} e^{ - \lambda\kappa \vertiii{O} \mathcal{N}_+} e^{\lambda \sqrt{N} \phi_+ (g)/2}  \psi_{\mathcal{Q}}\rangle  \notag \\
=&\langle e^{B( \mu)}\psi_{\mathcal{Q}}, \; e^{\lambda \sqrt{N} e^{B( \mu)}\phi_+ (g)/2e^{-B( \mu)}} e^{ - \kappa \vertiii{O} e^{B( \mu)}\mathcal{N}_+ e^{-B( \mu)}} e^{\lambda \sqrt{N} e^{B( \mu)} \phi_+ (g) e^{-B( \mu)}/2}  e^{B( \mu)}\psi_{\mathcal{Q}}\rangle  
\end{align}
The final goal is to compare the r.h.s. with the r.h.s. of the lower bound \eqref{eq:lemma3-lb}. To this end we perform three steps: we first show that we can replace the exponent $ -\kappa e^{B( \mu)}\mathcal{N}_+ e^{-B( \mu)}$ with $\kappa_2 \cN_+$ for sufficiently large $\kappa_2 >0 $. Second, we show that the exponent $e^{B( \mu)} \phi_+ (g) e^{-B( \mu)}$ can be effectively replaced by $\phi_+ (f)$ with $f$ given by \eqref{def:f}, again paying a sub-leading price. As a third and last step we then show that we can replace $e^{B( \mu)}\psi_{\mathcal{Q}}$ with an interpolation argument (similarly as in the proof of Lemma \ref{lemma:step2}) by $\Omega$. 

\subsubsection*{Step 3.1} Similarly as in the previous steps, we define for $s \in [0,1]$ the vector 
\begin{align}
\xi_1 (s) = e^{ -(1-s) \lambda \kappa_1 \vertiii{O} \cN_+ }e^{ - s \kappa \vertiii{O} e^{B( \mu)}\mathcal{N}_+ e^{-B( \mu)}} e^{\lambda \sqrt{N} e^{B( \mu)} \phi_+ (g) e^{-B( \mu)}/2}  e^{B( \mu)}\psi_{\mathcal{Q}}
\end{align}
where $\kappa_1 >0$ is chosen later. By definition we have 
\begin{align}
\|& \xi_1 (1) \|^2 \notag \\
&= \langle e^{B( \mu)}\psi_{\mathcal{Q}}, \; e^{\lambda \sqrt{N} e^{B( \mu)}\phi_+ (g)/2e^{-B( \mu)}} e^{ - \kappa \vertiii{O} e^{B( \mu)}\mathcal{N}_+ e^{-B( \mu)}} e^{\lambda \sqrt{N} e^{B( \mu)} \phi_+ (g) e^{-B( \mu)}/2}  e^{B( \mu)}\psi_{\mathcal{Q}}\rangle  
\end{align}
and 
\begin{align}
\| &\xi_1 (0) \|^2 \notag \\
& = \langle e^{B( \mu)}\psi_{\mathcal{Q}}, \; e^{\lambda \sqrt{N} e^{B( \mu)}\phi_+ (g)/2e^{-B( \mu)}} e^{ - \lambda \kappa_1  \vertiii{O} \mathcal{N}_+} e^{\lambda \sqrt{N} e^{B( \mu)} \phi_+ (g) e^{-B( \mu)}/2}  e^{B( \mu)}\psi_{\mathcal{Q}}\rangle  
\end{align}
and we control their difference by computing the derivative 
\begin{align}
\partial_s \| \xi_1 (s) \|^2 = 2 \Re \langle \xi_1(s), \, \partial_s \xi_1 (s) \rangle  = \langle \xi_1 (s), \mathcal{M}_1 (s) \xi_1 (s) \rangle
\end{align}
with
\begin{align}
\mathcal{M}_1(s) =  - \lambda \kappa \vertiii{O}  + \lambda e^{- \lambda (1-s)\kappa_1 \vertiii{O} \cN_+ }e^{- B( \mu)}\cN_+ e^{ B( \mu)}e^{ \lambda (1-s) \kappa_1 \vertiii{O} \cN_+ } + \lambda \kappa_1 \vertiii{O} \cN_+ 
\end{align}
From \eqref{def:d} we find 
\begin{align}
e^{- B( \mu)}\cN_+ e^{ B( \mu)} =& \sum_{p \in \Lambda_+^*} \left( \sigma_p^2 + ( \gamma_p^2 + \sigma_p^2)  b_p^* b_p ) + 2 \sigma_p \gamma_p (b_p^*b_{-p}^* + b_pb_{-p} ) \right)  \notag \\
+& \sum_{p \in \Lambda_+^*} \left( d_p^* ( \gamma_p b_p + \sigma_p b_{-p}^* ) + {\rm h.c.} + d_p^*d_p \right) \; .  \label{eq:tilde-N}
\end{align}
We recall that here we used the notation $\gamma_p := \cosh ( \mu_p)$ and $\sigma_p = \sinh ( \mu_p)$ with $\mu_p$ given by \eqref{def:mu}. Since $0\leq \widehat{v} \in \ell^1 ( \Lambda_+^*)$ we get $ \| \mu \|_{\ell^1 ( \Lambda_+^*)} \leq C$ and thus $\sigma \in \ell^1( \Lambda_+^*)$ and $\gamma \in \ell^\infty ( \Lambda_+^*)$. Hence for the first line of the r.h.s. of \eqref{eq:tilde-N}, we use again Propositions \ref{prop:eN},\ref{lemma:b-con} and get similarly as in the previous steps 
\begin{align}
\lambda  &\vert  \langle  \xi_1 (s), \;  e^{- \lambda (1-s)\kappa_1 \vertiii{O} \cN_+ }  \notag \\
& \hspace{2cm} \times \sum_{p \in \Lambda_+^*} \left( \sigma_p^2 + ( \gamma_p^2 + \sigma_p^2)  b_p^* b_p) + 2 \sigma_p \gamma_p (b_p^*b_{-p}^* + b_pb_{-p} ) \right)  e^{- \lambda (1-s)\kappa_1 \vertiii{O} \cN_+ }  \xi_1 (s) \rangle \vert  \notag \\
\leq&  ( C \lambda \| O \| + C_{\kappa_1} \| O \|^2\lambda^2) \langle \xi_1 (s) \;  (\cN_+ + 1) \xi_1 (s) \rangle  \notag \\
\leq&  \tilde{C}   \lambda \| O \| \langle \xi_1 (s) \;  (\cN_+ + 1) \xi_1 (s) \rangle + \tilde{C}_{\kappa_1 } N \lambda^3 \| O \|^3 \| \xi_N (s) \| \; . \label{eq:step31-1}
\end{align}
where $\tilde{C}, \tilde{C}_{\kappa_1 } >0$ and $C>0$ does not depend on $\kappa_1$.  For the second line of the r.h.s. we proceed similarly using Lemma \ref{lemma:eN-dp} and \eqref{eq:estimates-dp}, \eqref{eq:estimates-dp*} instead of Proposition \ref{prop:eN} and \eqref{eq:bounds-b}. In fact, Lemma \ref{lemma:eN-dp} and \eqref{eq:estimates-dp}, \eqref{eq:estimates-dp*} are applicable as by assumption the norm $\| \mu \|$ sufficiently small. Thus, we find 
\begin{align}
\lambda  &\vert  \langle  \xi_1 (s), \;  e^{- \lambda (1-s)\kappa_1 \vertiii{O} \cN_+ }  \notag \\
& \hspace{2cm} \times \sum_{p \in \Lambda_+^*} \left( d_p^* ( \gamma_p b_p + \sigma_p b_{-p}^* ) + {\rm h.c.} + d_p^*d_p \right) e^{- \lambda (1-s)\kappa_1 \vertiii{O} \cN_+ }  \xi_1 (s) \rangle \vert  \notag \\ 
&\leq  C  \lambda \| O \| \langle \xi_1 (s) \;  (\cN_+ + 1) \xi_1 (s) \rangle + C_{\kappa_1 }N \lambda^3 \| O \|^3 \| \xi_N (s) \| \; .  \label{eq:step31-2}
\end{align}
with $C,C_{\kappa_1} >0$ and $C>0$ independent of $\kappa_1$. 
Summarizing, we find from \eqref{eq:step31-1} and \eqref{eq:step31-2} by choosing $\kappa_1 >0$ sufficiently large 
\begin{align}
\partial_s \| \xi_1 (s) \|^2 &\geq ( \kappa_1 - C ) \lambda \vertiii{O} \langle \xi_1 (s), \; \mathcal{N}_+ \xi_1 (s) \rangle - C_{\kappa_1} (\lambda^3 N  + 1)\| \xi_1 (s) \|^2 \notag \\
&\geq  - C_{\kappa_1} (\lambda^3 N  + 1)\| \xi_1 (s) \|^2 \; . 
\end{align}
We finally arrive at 
\begin{align}
\langle \psi_{\mathcal{Q}}, & \; e^{\lambda \sqrt{N} \phi_+ (g)/2} e^{ - \lambda\kappa \vertiii{O} \mathcal{N}_+} e^{\lambda \sqrt{N} \phi_+ (g)/2}  \psi_{\mathcal{Q}}\rangle  \notag \\
\geq& e^{-C (N \lambda^3 +1)} \langle e^{B( \mu)}\psi_{\mathcal{Q}}, \; e^{\lambda \sqrt{N} e^{B( \mu)}\phi_+ (g)/2e^{-B( \mu)}} e^{ - \kappa_1 \vertiii{O} \mathcal{N}_+ } e^{\lambda \sqrt{N} e^{B( \mu)} \phi_+ (g) e^{-B( \mu)}/2}  e^{B( \mu)}\psi_{\mathcal{Q}}\rangle  
\end{align}

\subsubsection*{Step 3.2} Next we show that in the limit $N \rightarrow \infty$, we can replace the operator $  e^{B( \mu)}\phi_+ (g)e^{-B( \mu)}$ by $ \phi_+ (f)/2$ where $f$ is defined in \eqref{def:f}, i.e. that there exists $\kappa_2, C_2 >0$ such that 
\begin{align}
\langle  & e^{B( \mu)}\psi_{\mathcal{Q}}, \; e^{\lambda \sqrt{N} e^{B( \mu)}\phi_+ (g)/2e^{-B( \mu)}} e^{ - \kappa_1 \vertiii{O} \mathcal{N}_+ } e^{\lambda \sqrt{N} e^{B( \mu)} \phi_+ (g) e^{-B( \mu)}/2}  e^{B( \mu)}\psi_{\mathcal{Q}}\rangle  \notag \\
&\geq e^{-C( \lambda^3N +1)} \langle e^{B( \mu)}\psi_{\mathcal{Q}}, \; e^{\lambda \sqrt{N} \phi_+ (f)/2} e^{ - \kappa_2 \vertiii{O} \mathcal{N}_+ } e^{\lambda \sqrt{N} \phi_+ (f)/2}  e^{B( \mu)}\psi_{\mathcal{Q}}\rangle
\label{eq:estimate-step32}
\end{align}
To this end we define for $s \in [0,1]$ 
\begin{align}
\xi_2 (s) =  e^{ - \kappa(s) \vertiii{O} \mathcal{N}_+ }e^{(1-s) \lambda \sqrt{N}  \phi_+ (f)/2}  e^{s \lambda \sqrt{N} e^{B( \mu)} \phi_+ (g) e^{-B( \mu)}/2}  e^{B( \mu)}\psi_{\mathcal{Q}}
\end{align}
where $\kappa: [0,1] \rightarrow \mathbb{R}_+$ is a positive, differentiable function with $\kappa(1) = \kappa_1$ chosen later. Since 
\begin{align}
\| \xi_2 (1) \|^2 = \langle e^{B( \mu)}\psi_{\mathcal{Q}}, \; e^{\lambda \sqrt{N} e^{B( \mu)}\phi_+ (g)/2e^{-B( \mu)}} e^{ - \kappa_1 \vertiii{O} \mathcal{N}_+ } e^{\lambda \sqrt{N} e^{B( \mu)} \phi_+ (g) e^{-B( \mu)}/2}  e^{B( \mu)}\psi_{\mathcal{Q}}\rangle  
\end{align}
and 
\begin{align}
\| \xi_2 (0) \|^2 = \langle e^{B( \mu)}\psi_{\mathcal{Q}}, \; e^{\lambda \sqrt{N} \phi_+ (f)/2} e^{ - \kappa(0) \vertiii{O} \mathcal{N}_+ } e^{\lambda \sqrt{N} \phi_+ (f)/2}  e^{B( \mu)}\psi_{\mathcal{Q}}\rangle
\end{align}
it suffices to control the derivative 
\begin{align}
\partial_s \| \xi_2 (s) \|^2 = \langle \xi_2 (s), \; \mathcal{M}_2 (s) \xi_2 (s) \rangle 
\end{align}
with 
\begin{align}
\mathcal{M}_2 (s) =& -\lambda \dot{\kappa}_s \vertiii{O} \cN_+  \notag \\
&+ \lambda \sqrt{N}  e^{ - \kappa(s) \vertiii{O} \mathcal{N}_+ }e^{(1-s) \lambda \sqrt{N}  \phi_+ (f)/2} \left(  e^{B( \mu)} \phi(g) e^{B( \mu)} - \phi(f) \right) \notag \\
& \hspace{3cm} \times e^{(1-s) \lambda \sqrt{N}  \phi_+ (f)/2}e^{ - \kappa(s) \vertiii{O} \mathcal{N}_+ }
\end{align}
As before, the idea is to bound the first term w.r.t. the second term paying a price that is $O( \lambda^3 N)$. To that end we observe first that from \eqref{def:d} we have
\begin{align}
 e^{B( \mu)} \phi_+ (g) e^{B( \mu)} - \phi (f) = \sum_{p \in \Lambda_+^*} g_p \left[ d_p^* + d_{-p} \right] = d(g) + d^*(g) 
\end{align}
Since $\mu, g \in \ell^2( \Lambda_+^*)$  and thus $f \in \ell^2( \Lambda_+^*)$, it follows from \eqref{eq:estimates-dp} and Lemma \ref{lemma:dp} that 
\begin{align}
\lambda \sqrt{N} & \vert \langle \xi_2(s), \;   e^{ - \kappa(s) \vertiii{O} \mathcal{N}_+ }e^{(1-s) \lambda \sqrt{N}  \phi_+ (f)/2} \left( d(g) + d^*(g)  \right) e^{(1-s) \lambda \sqrt{N}  \phi_+ (f)/2}e^{ - \kappa(s) \vertiii{O} \mathcal{N}_+ } \xi_2(s) \rangle \vert \notag \\
\leq& C \lambda \vertiii{O} \| ( \mathcal{N}_+ + 1)^{1/2} \xi_2 (s) \|^2 + C_{\kappa_s} \vertiii{O}^2 \lambda^2 \| ( \mathcal{N}_+ + 1)^{1/2} \xi_2( s) \|^2 + C_{\kappa_s} N \lambda^3 \vertiii{O}^3 \| \xi_2 (s) \|^2 \; 
\end{align}
where $C>0$ is a positive constant that, in contrast to $C_{\kappa_s}$. does not depend on $\kappa_s$. Thus 
\begin{align}
\partial_s \| \xi_2(s) \|^2 \geq \lambda \vertiii{O} ( C - \dot{\kappa}_s ) \| \mathcal{N}_+^{1/2} \xi_2 (s) \|^2 - C_{\kappa_s} (N \lambda^3 \vertiii{O}^3+ \lambda \vertiii{O}) \| \xi_2 (s) \|^2 \; . 
\end{align}
We choose $\kappa_2 (s) = -\kappa_1 + (1-s) \kappa_2$ for sufficiently large $\kappa_2>0$ so that 
\begin{align}
\partial_s \| \xi_2(s) \|^2 \geq - C_{\kappa_s}(N \lambda^3\vertiii{O}^3 + \lambda\vertiii{O}) \| \xi_2 (s) \|^2 \; . 
\end{align}
and the desired estimate \eqref{eq:estimate-step32} follows. 

\subsubsection*{Step 3.3} Finally we prove that we can replace $e^{B( \mu)} \psi_{\mathcal{Q}}$ with the vacuum vector $\Omega$, that is the ground state of the diagonal Hamiltonian $\mathcal{D}$. More precisely we show that there exists $\kappa_3, C >0$ such that
\begin{align}
\langle e^{B( \mu)}\psi_{\mathcal{Q}}, & \; e^{\lambda \sqrt{N} \phi_+ (f)/2} e^{ - \lambda\kappa_2 \vertiii{O} \mathcal{N}_+} e^{\lambda \sqrt{N} \phi_+ (f)/2} e^{B( \mu)} \psi_{\mathcal{Q}}\rangle \notag \\
\geq & 
 e^{- C(N \lambda^3 + 1)} \langle \Omega ,  \; e^{\lambda \sqrt{N} \phi_+ (f)/2} e^{ - \lambda\kappa_3 \vertiii{O} \mathcal{N}_+} e^{\lambda \sqrt{N} \phi_+ (f)/2}  \Omega \rangle \; .
\end{align}
To this end we define for $s \in [0,1]$ the ground state $\psi_{\mathcal{Q}(s)}$ of the Hamiltonian
\begin{align}
\mathcal{Q}(s) = \mathcal{D} + s \mathcal{R}_{\mathcal{Q}} \; 
\end{align}
with corresponding eigenvalue $E(s)$ and furthermore the vector
\begin{align}
\xi_3 (s) = e^{ - \lambda\kappa(s) \vertiii{O} \mathcal{N}_+/2} e^{\lambda \sqrt{N} \phi_+ (f)/2} \psi_{\mathcal{Q} (s)} 
\end{align}
with differentiable $\kappa(s) : [0,1] \rightarrow \mathbb{R}_+$ satisfying $\kappa (1) = \kappa_2$ that we choose later. Since 
\begin{align}
\| \xi_3 (1) \|^2 = \langle e^{B( \mu)}\psi_{\mathcal{Q}}, & \; e^{\lambda \sqrt{N} \phi_+ (f)/2} e^{ - \lambda\kappa_2 \vertiii{O} \mathcal{N}_+} e^{\lambda \sqrt{N} \phi_+ (f)/2} e^{B( \mu)} \psi_{\mathcal{Q}}\rangle 
\end{align}
and 
\begin{align}
\| \xi_3 (0) \|^2 = \langle \Omega , & \; e^{\lambda \sqrt{N} \phi_+ (f)/2} e^{ - \lambda\kappa(0) \vertiii{O} \mathcal{N}_+} e^{\lambda \sqrt{N} \phi_+ (f)/2}  \Omega \rangle \; .
\end{align}
it suffices to control the derivative 
\begin{align}
\partial_s \| \xi_ 3 (s) \|^2 = 2 \Re \langle \xi_3 (s), \mathcal{M}_3 (s) \xi_3 (s) \rangle 
\end{align}
with 
\begin{align}
\label{def:M33}
\mathcal{M}_3 (s) =& e^{ - \lambda\kappa(s) \vertiii{O} \mathcal{N}_+/2} e^{\lambda \sqrt{N} \phi_+ (f)/2} \frac{q_{\psi_{\mathcal{Q}(s)}}}{ \mathcal{Q}(s) - E(s) }\mathcal{R}_{\mathcal{Q}} e^{\lambda \sqrt{N} \phi_+ (f)/2} e^{ \lambda\kappa(s) \vertiii{O} \mathcal{N}_+/2} - \dot{\kappa}_s \lambda \vertiii{O} \cN_+ \notag \\
=& \frac{\widetilde{q}_{\psi_{\mathcal{Q}(s)}}}{ \widetilde{\mathcal{Q}}(s) - E(s) }\widetilde{\mathcal{R}}_{\mathcal{Q}} - \dot{\kappa}_s \lambda \vertiii{O} \cN_+
\end{align}
where we introduced the notation
\begin{align}
 \widetilde{\mathcal{Q}}(s) = e^{ - \lambda\kappa(s) \vertiii{O} \mathcal{N}_+/2} e^{\lambda \sqrt{N} \phi_+ (f)/2}\mathcal{Q}(s)  e^{\lambda \sqrt{N} \phi_+ (f)/2} e^{ \lambda\kappa(s) \vertiii{O} \mathcal{N}_+/2}
\end{align}
and $\widetilde{q}_{\psi_{\mathcal{Q}(s)}} = e^{ - \lambda\kappa(s) \vertiii{O} \mathcal{N}_+/2} e^{\lambda \sqrt{N} \phi_+ (f)/2}q_{\psi_{\mathcal{Q}(s)}}   e^{\lambda \sqrt{N} \phi_+ (f)/2} e^{ \lambda\kappa(s) \vertiii{O} \mathcal{N}_+/2}$ resp. 
\begin{align}
 \widetilde{\mathcal{R}}_{\mathcal{Q}} = e^{ - \lambda\kappa(s) \vertiii{O} \mathcal{N}_+/2} e^{\lambda \sqrt{N} \phi_+ (f)/2} \mathcal{R}_{\mathcal{Q}}  e^{\lambda \sqrt{N} \phi_+ (f)/2} e^{ \lambda\kappa(s) \vertiii{O} \mathcal{N}_+/2} \; . 
\end{align}
First note that it follows from Lemmas \ref{lemma:bb-conj}, \ref{lemma:conj-dd} that for any formalized $\psi \in \mathcal{F}_{\perp \varphi}^{\leq N}$
\begin{align}
\langle  & \psi,   \left( \widetilde{\mathcal{Q}}(s)  - E (s) \right) \psi \rangle  \notag \\ 
&\geq \langle \psi, \; \left( \mathcal{Q} (s) - E (s) \right) \psi \rangle - C \lambda \sqrt{N} \vertiii{O} \| ( \mathcal{N}_+ + 1)^{1/2} \psi \| - CN\lambda^2\vertiii{O}^2 \; . 
\end{align}
Therefore there exists  $\varepsilon, C_\varepsilon >0$ such that by Proposition \ref{claim:Qs} 
\begin{align}
\langle \psi,   \left( \widetilde{\mathcal{Q}}(s)  - E (s) \right) \psi \rangle  \geq ( C-\varepsilon) \langle \psi, \; \mathcal{N}_+ \psi \rangle - C_\varepsilon N \vertiii{O}^2\lambda^2 \; . 
\end{align}
and consequently 
\begin{align}
\Big\| ( \mathcal{N}_+ +1)^{1/2} \frac{\widetilde{q}_{\psi_{\mathcal{Q}(s)}}}{ \widetilde{\mathcal{Q}}(s) - E(s) } \psi \| \leq C (N^{1/2} \lambda\vertiii{O} + 1) \| \widetilde{q}_{\psi_{\mathcal{Q}(s)}} \psi \| \; . 
 \label{eq:tilde-resolvent-1}
\end{align}
We can prove a similar bound not only for the square root but for the number of particle operator. For that we write with the resolvent identity 
\begin{align}
\mathcal{N}_+ \frac{\widetilde{q}_{\psi_{\mathcal{Q}(s)}}}{ \widetilde{\mathcal{Q}}(s) - E(s) } = \mathcal{N}_+ \frac{p_{\psi_{\mathcal{Q}(s)}} \widetilde{q}_{\psi_{\mathcal{Q}(s)}}}{ \widetilde{\mathcal{Q}}(s) - E(s) } + \mathcal{N}_+ \frac{q_{\psi_{\mathcal{Q}(s)}}\widetilde{q}_{\psi_{\mathcal{Q}(s)}}}{ \widetilde{\mathcal{Q}}(s) - E(s) } \; .
\end{align}
We use Proposition \ref{claim:Qs} for the first and the resolvent identity for the second term and arrive at 
\begin{align}
\Big\| \mathcal{N}_+ \frac{\widetilde{q}_{\psi_{\mathcal{Q}(s)}}}{ \widetilde{\mathcal{Q}}(s) - E(s) } \psi \Big\| \leq C + \Big\|  \mathcal{N}_+ \frac{q_{\psi_{\mathcal{Q}(s)}}}{\mathcal{Q}(s) - E(s) } \left( \widetilde{\mathcal{Q}}(s)  - \mathcal{Q} (s) \right) \frac{\widetilde{q}_{\psi_{\mathcal{Q}(s)}}}{ \widetilde{\mathcal{Q}}(s) - E(s) } \psi \Big\| \; . 
\end{align}
From Proposition \ref{claim:Qs} we find 
\begin{align}
\Big\| \mathcal{N}_+ \frac{\widetilde{q}_{\psi_{\mathcal{Q}(s)}}}{ \widetilde{\mathcal{Q}}(s) - E(s) } \psi \Big\| \leq C \| \widetilde{q}_{\psi_{\mathcal{Q}(s)}} \psi \| + C \Big\|\left( \widetilde{\mathcal{Q}}(s)  - \mathcal{Q} (s) \right) \frac{\widetilde{q}_{\psi_{\mathcal{Q}(s)}}}{ \widetilde{\mathcal{Q}}(s) - E(s) } \psi \Big\| \; 
\end{align}
and furthermore since $\widehat{v} \in \ell^1 ( \Lambda_+^*)$ from Lemma \ref{lemma:bb-conj}, \ref{lemma:conj-dd} 
\begin{align}
\Big\| &\mathcal{N}_+ \frac{\widetilde{q}_{\psi_{\mathcal{Q}(s)}}}{ \widetilde{\mathcal{Q}}(s) - E(s) } \psi \Big\| \notag \\
&\leq C ( 1 + N \lambda^2\vertiii{O}^2) \| \widetilde{q}_{\psi_{\mathcal{Q}(s)}} \psi \| + C \lambda \sqrt{N}\vertiii{O} \Big\| ( \mathcal{N}_+ + 1)^{1/2} \frac{\widetilde{q}_{\psi_{\mathcal{Q}(s)}}}{ \widetilde{\mathcal{Q}}(s) - E(s) } \psi \Big\| . 
\end{align}
With the first bound \eqref{eq:tilde-resolvent-1} we thus arrive at 
\begin{align}
\label{eq:tilde-resolvent-2}
\Big\| \mathcal{N}_+ \frac{\widetilde{q}_{\psi_{\mathcal{Q}(s)}}}{ \widetilde{\mathcal{Q}}(s) - E(s) } \psi \Big\| \leq C (N \lambda^2\vertiii{O}^2 + 1) \| \widetilde{q}_{\psi_{\mathcal{Q}(s)}} \psi \| \; . 
\end{align}
With these estimates \eqref{eq:tilde-resolvent-1}, \eqref{eq:tilde-resolvent-2} we can now bound $\mathcal{M}_3 (s)$ defined in \eqref{def:M33}. We define the operator
\begin{align}
\mathcal{A}_{\mathcal{R}_{\mathcal{Q}}}   =  \widetilde{\mathcal{R}}_{\mathcal{Q}} -  \mathcal{R}_{\mathcal{Q}}  
\end{align} 
that we can bound with Lemmas \ref{lemma:conj-dd}, \ref{lemma:RK} by 
\begin{align}
\| \mathcal{A}_{\mathcal{R}_{\mathcal{Q}}} \frac{\widetilde{q}_{\psi_{\mathcal{Q}(s)}}}{ \widetilde{\mathcal{Q}}(s) - E(s) } \xi_3 (s) \| \leq C N\lambda^3\vertiii{O}^3 \|\widetilde{q}_{\psi_{\mathcal{Q}(s)}} \xi_3 (s) \| + C \lambda\vertiii{O} \Big\| ( \mathcal{N}_+ + 1) \frac{\widetilde{q}_{\psi_{\mathcal{Q}(s)}}}{ \widetilde{\mathcal{Q}}(s) - E(s) } \xi_3 (s) \Big\|  
\end{align}
and thus with \eqref{eq:tilde-resolvent-2}
\begin{align}
\| \mathcal{A}_{\mathcal{R}_{\mathcal{Q}}} \;\frac{\widetilde{q}_{\psi_{\mathcal{Q}(s)}}}{ \widetilde{\mathcal{Q}}(s) - E(s) } \xi_3 (s) \| \leq C (N\lambda^3\vertiii{O}^3 +1) \| \widetilde{q}_{\psi_{\mathcal{Q}(s)}} \xi_3 (s) \| \leq C (N \lambda^3\vertiii{O}^3 +1) \|\xi_3 (s) \| \; . 
\end{align}
In order to control the contribution of $\mathcal{R}_{\mathcal{Q}}$ in $\mathcal{M}_3 (s)$ in \eqref{def:M33} we use Lemmas \ref{lemma:dp}, \ref{lemma:RK} that show 
\begin{align}
\vert \langle \xi_3 (s), \; & \mathcal{R}_{\mathcal{Q}} \frac{\widetilde{q}_{\psi_{\mathcal{Q}(s)}}}{ \widetilde{\mathcal{Q}}(s) - E(s) } \xi_3 (s) \rangle \vert \notag \\
&\leq CN^{-1/2} \| ( \mathcal{N}_+ +1)^{1/2} \xi_3 (s) \|  \Big\| ( \mathcal{N}_+ + 1) \frac{\widetilde{q}_{\psi_{\mathcal{Q}(s)}}}{ \widetilde{\mathcal{Q}}(s) - E(s) } \xi_3 (s) \Big\|  \notag \\
&\leq  C_{\kappa(s)}\sqrt{N} \lambda^2 \vertiii{O}^2 \| ( \mathcal{N}_+ +1)^{1/2} \xi_3 (s) \|   \| \xi_3 (s) \| \notag \\
& \leq C \lambda \vertiii{O}\| ( \mathcal{N}_+ +1)^{1/2} \xi_3 (s) \|^2 +\widetilde{C}_{\kappa(s)} (N\lambda^3 \vertiii{O}^3\ + 1) |\xi_3 (s) \|^3   \; . 
\end{align}
Hence, we find from \eqref{def:M33} choosing $\kappa(s) = \kappa_3 s + \kappa_4 (1-s)$ and $\kappa_4$ sufficiently large 
\begin{align}
\partial_s \| \xi_3 (s) \|^2 &\geq ( C\vertiii{O} - \kappa(s) \vertiii{O}) \| \mathcal{N}_+^{1/2} \xi_3(s) \|^2 - CN\lambda^3 \| \xi_3 (s) \|^2\notag \\
&\geq - C_{\kappa(s)} (N\lambda^3 \vertiii{O}^3 + 1) \| \xi_3 (s) \|^2 \; 
\end{align}
and thus the desired estimate follows. 
\end{proof}

\subsection{Step 4.} In the last step we compute the remaining expectation value in the vacuum. The following Lemma follows immediately from \cite[Lemma 3.3]{RSe}.

\begin{lemma} 
\label{lemma:step4}
Let $\kappa>0$. Under the same assumptions as in Theorem \ref{thm:ldp}, there exist constants $C_1, C_2 >0$ such that for all $0 \leq \lambda \leq 1/( \kappa \vertiii{O})$ we have 
\begin{align}
\ln  \langle \Omega, \; e^{\lambda \sqrt{N} \phi_+ (f)/2} e^{ \kappa \lambda  \vertiii{O} \mathcal{N}_+} e^{\lambda \sqrt{N} \phi_+ (f)/2}  \Omega \rangle  \leq N \left( \frac{ \lambda^2 \| f \|^2}{2} + C_1 \lambda^3 \vertiii{O}^3 \right) + C_1 \lambda \vertiii{O}
\end{align}
resp. 
\begin{align}
\ln  \langle \Omega, \; e^{\lambda \sqrt{N} \phi_+ (f)/2} e^{ - \kappa \lambda \vertiii{O} \mathcal{N}_+} e^{- \lambda \sqrt{N} \phi_+ (f)/2}  \Omega \rangle  \geq N \left( \frac{ \lambda^2 \| f \|^2}{2} - C_2 \lambda^3 \vertiii{O}^3 \right) - C_2\vertiii{O} \lambda 
\end{align}
\end{lemma}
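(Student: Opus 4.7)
The plan, following the flavor of \cite[Lemma 3.3]{RSe}, is to reduce the vacuum expectation of
$e^{\lambda\sqrt{N}\phi_+(f)/2}\, e^{\pm \kappa\lambda\vertiii{O}\mathcal{N}_+}\, e^{\lambda\sqrt{N}\phi_+(f)/2}$
to an explicit Gaussian integral. The central observation is that $\mathcal{N}_+\Omega = 0$, so the middle exponential is harmless against the right-hand vacuum; it contributes nontrivially only after being conjugated past one of the outer exponentials, and by Lemma \ref{prop:eN} the associated hyperbolic factors $\cosh(\kappa\lambda\vertiii{O})$ and $\sinh(\kappa\lambda\vertiii{O})$ differ from $1$ and $0$ only by $O(\lambda^2\vertiii{O}^2)$ and $O(\lambda\vertiii{O})$ respectively, hence are subleading in the exponent.

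The first step is to replace the ``true'' operator $\phi_+(f) = b^*(f) + b(f)$ by its asymptotic version $\widetilde{\phi}_+(f) = \sqrt{N}\bigl(a^*(f) + a(f)\bigr)$, which is linear in the standard creation and annihilation operators. Using the identity $b^*(f) = a^*(f)\sqrt{1 - \mathcal{N}_+/N}$ together with Taylor expansion of the square root, the difference is controlled by $\mathcal{N}_+/\sqrt{N}$. I would introduce an interpolating Fock space vector of the same flavor as the one used in Lemma \ref{lemma:step1}, compute the derivative of its norm squared, and apply Gronwall's inequality. Using that $\mathcal{N}_+/N$ has expectation of size $O(\lambda^2)$ on the coherent-state-like vector $e^{\lambda\sqrt{N}\phi_+(f)/2}\Omega$, the corrections enter the logarithm at order $O(\lambda^3 N)$, matching the cubic error in the statement.

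Once $\phi_+$ has been replaced by $\widetilde{\phi}_+$, Lemma \ref{prop:eN} (applied directly to the standard $a,a^*$, since $[\mathcal{N}_+,a_p^\#] = \pm a_p^\#$) allows one to move the middle exponential past one of the outer ones, converting it into $\exp\bigl((\lambda\sqrt{N}/2)(\gamma\,\widetilde{\phi}_+(f) \pm \sigma\, i\,\widetilde{\phi}_-(f))\bigr)$ with $\gamma = \cosh(\kappa\lambda\vertiii{O})$ and $\sigma = \sinh(\kappa\lambda\vertiii{O})$. Since $\widetilde{\phi}_\pm(f)$ are linear in $a,a^*$ and their commutators are $c$-numbers, Baker--Campbell--Hausdorff computes the vacuum expectation explicitly as $\exp\bigl(N\lambda^2\|f\|^2(1+\gamma)/4 + \mathrm{i}\,(\text{imaginary part})\bigr)$, the imaginary part dropping out of the modulus. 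Taylor expanding $\gamma = 1 + O(\lambda^2\vertiii{O}^2)$ and absorbing the excess into the cubic error yields the claimed bound $N\bigl(\lambda^2\|f\|^2/2 \pm C\lambda^3\bigr) \pm C\lambda$.

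The main technical obstacle is the first step: although $\mathcal{N}_+/N$ is of order $\lambda^2$ in expectation on the relevant states, it is not pointwise small, so every correction term appearing in Lemma \ref{lemma:b-con} (the $\sigma_{\|h\|}/\|h\|$ and $(\gamma_{\|h\|}-1)/\|h\|^2$ pieces) must be tracked carefully when one commutes the remainders past the outer exponentials. This is precisely the bookkeeping carried out in \cite{KRS} and \cite{RSe}; the smallness assumption $\lambda \leq 1/(\kappa\vertiii{O})$ together with the truncation to $\mathcal{F}_{\perp\varphi_0}^{\leq N}$ keeps all relative errors bounded, so that Gronwall closes and gives the bounds stated in the lemma.
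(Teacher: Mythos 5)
The paper itself does not supply a proof of Lemma~\ref{lemma:step4}; it simply cites \cite[Lemma 3.3]{RSe}, so a line-by-line comparison is not possible. Your strategy, though, is a plausible reconstruction of what such a proof looks like, and it produces the correct answer. A few remarks.

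Your two-step plan (Gronwall to pass from $\phi_+(f)=b^*(f)+b(f)$ to the linear field $a^*(f)+a(f)$, then Baker--Campbell--Hausdorff on the vacuum) is internally consistent and gives the right leading order. Your final BCH expression is slightly off: conjugating $e^{\lambda\sqrt N(a(f)+a^*(f))/2}$ through $e^{\mu\mathcal N_+}$ with $\mu=\kappa\lambda\vertiii{O}$ gives
\begin{align}
\langle\Omega, e^{\lambda\sqrt N(a(f)+a^*(f))/2}e^{\mu\mathcal N_+}e^{\lambda\sqrt N(a(f)+a^*(f))/2}\Omega\rangle
= \exp\!\Big(\tfrac{N\lambda^2\|f\|^2}{4}\,(1+e^{\mu})\Big),
\end{align}
so the hyperbolic factor is $1+\cosh\mu+\sinh\mu$, not $1+\cosh\mu$ as you wrote, and there is no imaginary part since $\phi_+(f)$ is self-adjoint. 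The missing $\sinh\mu$ contributes $N\lambda^2\|f\|^2\sinh(\kappa\lambda\vertiii O)/4 = O(N\lambda^3)$, which is absorbed by the error term, so your bound survives; but you should track it, since for the lower bound it has the \emph{wrong} sign and is exactly why the error must be $-C_2\lambda^3$ rather than $0$.

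The more substantive concern is your first step. Once you replace $\phi_+(f)$ by $a^*(f)+a(f)$ you leave the truncated Fock space $\mathcal F_{\perp\varphi_0}^{\leq N}$, which is precisely the space on which the operator bounds \eqref{eq:bounds-b} and the uniform control of $\mathcal N_+/N$ are available, and which the entire construction in Section~\ref{sec:fluc} is designed to preserve. You acknowledge this ("must be tracked carefully") and defer to the KRS/RSe bookkeeping, but this deferral hides the one place where your route departs from the structure the paper uses elsewhere: Lemmas~\ref{lemma:step1}--\ref{lemma:step3} deliberately keep every interpolating vector inside $\mathcal F_{\perp\varphi_0}^{\leq N}$ and apply the closed conjugation formulas of Lemmas~\ref{lemma:b-con}--\ref{prop:partialt} directly to $\phi_+$, rather than exchanging $\phi_+$ for $\widetilde\phi_+$. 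A proof aligned with the rest of the paper (and, presumably, with \cite[Lemma 3.3]{RSe}) would instead run Gronwall on a vector such as $\xi(s)=e^{\pm\kappa_s\lambda\vertiii O\mathcal N_+/2}e^{s\lambda\sqrt N\phi_+(f)/2}\Omega$, using Lemma~\ref{prop:partialt} to compute $\partial_s\|\xi(s)\|^2$, and extract the Gaussian leading term $sN\lambda^2\|f\|^2\|\xi(s)\|^2$ plus $O(N\lambda^3)$ errors directly, without ever introducing $\widetilde\phi_+$. Your route also works, but requires an additional argument (e.g.\ showing that the coherent state $e^{\lambda\sqrt N(a(f)+a^*(f))/2}\Omega$ is exponentially concentrated at $\mathcal N_+\sim N\lambda^2\|f\|^2/4\ll N$ for $\lambda\leq 1/(\kappa\vertiii O)$) to justify discarding the truncation, and that piece is currently missing from your sketch.

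Finally, note that as written the lower bound in the Lemma has outer exponentials $e^{+\lambda\sqrt N\phi_+(f)/2}$ and $e^{-\lambda\sqrt N\phi_+(f)/2}$; this is inconsistent with Lemma~\ref{lemma:step3} (where both outer exponentials share the same sign) and, if taken literally, makes the vacuum expectation $\exp(N\lambda^2\|f\|^2(1-e^{-\mu})/4)=O(e^{CN\lambda^3})$ rather than the claimed $e^{N\lambda^2\|f\|^2/2+O(N\lambda^3)}$. You should treat the sign mismatch as a typo and prove the bound with matching signs, as you implicitly did.
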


\subsection*{Acknowledgments} S.R. would like to thank Phan Th\`anh Nam and Robert Seiringer for fruitful discussions, helpful suggestions and continuous support during this project. 


\end{document}